\documentclass[journal]{IEEEtran}

\usepackage{amsmath, amssymb, amsthm}
\usepackage{algorithm, algpseudocode}
\usepackage{graphicx, float}

\DeclareMathOperator*{\soft}{soft}
\DeclareMathOperator*{\argmin}{arg\,min}
\DeclareMathOperator{\ba}{\mathbf{a}}
\DeclareMathOperator{\bv}{\mathbf{v}}
\DeclareMathOperator{\bw}{\mathbf{w}}
\DeclareMathOperator{\bs}{\mathbf{s}}
\DeclareMathOperator{\be}{\mathbf{e}}
\DeclareMathOperator{\bff}{\mathbf{f}}
\DeclareMathOperator{\bg}{\mathbf{g}}
\DeclareMathOperator{\bh}{\mathbf{h}}
\DeclareMathOperator{\bc}{\mathbf{c}}

\DeclareMathOperator{\bu}{\mathbf{u}}
\DeclareMathOperator{\bx}{\mathbf{x}}
\DeclareMathOperator{\bd}{\mathbf{d}}
\DeclareMathOperator{\bA}{\mathbf{A}}
\DeclareMathOperator{\bF}{\mathbf{F}}
\DeclareMathOperator{\bS}{\mathbf{S}}
\DeclareMathOperator{\bD}{\mathbf{D}}

\DeclareMathOperator{\bH}{\mathbf{H}}

\DeclareMathOperator{\blambda}{\boldsymbol{\lambda}}
\newcommand{\opt}[1]{#1^\text{\tiny \textbf{OPT}}}
\providecommand{\e}[1]{\ensuremath{\mathrm{e}{#1}}}

\newtheorem{theorem}{Theorem}
\newtheorem{corollary}{Corollary}

\pdfminorversion=7 

\usepackage{etoolbox}
\makeatletter
\patchcmd{\@makecaption}{\scshape}{}{}{}
\patchcmd{\@makecaption}{\\}{.\ }{}{}
\makeatother

\title{Enveloped Sinusoid Parseval Frames}

\author{\IEEEauthorblockN{
	Geoff Goehle\IEEEauthorrefmark{1}, 
	Benjamin Cowen\IEEEauthorrefmark{2}~\IEEEmembership{Member,~IEEE}, 
	J. Daniel Park\IEEEauthorrefmark{3}~\IEEEmembership{Member,~IEEE}, 
	and
	Daniel C. Brown\IEEEauthorrefmark{4}~\IEEEmembership{Senior Member,~IEEE}} \\ 
\IEEEauthorblockA{Applied Research Laboratory, \\
Pennsylvania State University,\\ State College, PA\\
 Email: {\tt \IEEEauthorrefmark{1}goehle@psu.edu, \IEEEauthorrefmark{2}bmc6220@psu.edu,
 \IEEEauthorrefmark{3}jdp971@psu.edu, \IEEEauthorrefmark{4}dcb19@psu.edu}}
 \thanks{This work was sponsored in part by the Department of the Navy, Office of Naval Research under ONR award numbers N00014-18-1-2820 and N00014-19-1-2221.}
 }

\begin{document}

\maketitle

\begin{abstract}
	This paper presents a method of constructing Parseval frames from any collection of complex envelopes. The resulting Enveloped Sinusoid Parseval (ESP) frames can represent a wide variety of signal types as specified by their physical morphology. Since the ESP frame retains its Parseval property even when generated from a variety of envelopes, it is compatible with large scale and iterative optimization algorithms. ESP frames are constructed by applying time-shifted enveloping functions to the discrete Fourier Transform basis, and in this way are similar to the short-time Fourier Transform.
	This work provides examples of ESP frame generation for both synthetic and experimentally measured signals. Furthermore, the frame's compatibility with distributed sparse optimization frameworks is demonstrated, and efficient implementation details are provided. Numerical experiments on acoustics data reveal that the flexibility of this method allows it to be simultaneously competitive with the STFT in time-frequency processing and also with Prony's Method for time-constant parameter estimation, surpassing the shortcomings of each individual technique.
\end{abstract}

\section{Introduction}
\label{sec:introduction}
\IEEEPARstart{T}{he} problem of decomposing a digital signal into oscillating components captures a broad spectrum of applications. Basic signal processing tools such as the Fourier Transform, the short-time Fourier Transform (STFT), wavelets, empirical mode decomposition, and others, generally seek to rewrite the signal as a linear combination of elementary signals or \textit{atoms}~\cite{marple, tqwt, emd}. Examination of the signal in the alternative representation can reveal important properties~\cite{gearbox-decomposition}, enable specialized denoising~\cite{speckle-reduction, synthVsAnalysis} and detection~\cite{spindle-detection} techniques, and make parameter estimation easier~\cite{sparse-DOA}.

However, physical signals often violate the assumptions underlying these techniques, making the resulting decompositions difficult to interpret.
Researchers have developed a wide variety of approaches for representing complicated and nuanced signals in terms of elementary components, ranging from common, generic textures~\cite{shearlet, curvelet} to tunable wavelets~\cite{tqwt}, to fully data-driven methods~\cite{dictlearn-bach, dictlearn-OG}. Sometimes a superset of multiple transforms is employed to capture temporally overlapping components of the signal~\cite{mca-old, mca-eeg}.

The modern extension of this approach is to fix an overcomplete basis (or frame) of elementary atoms, then employ convex optimization to infer a weighting of the atoms that best fits the data~\cite{ista}. Regularization plays an important role in characterizing the solution to overdetermined systems with a common choice being sparsity. Sparse optimization can produce coefficients that are mostly zero, so that only a small subset of available atoms are actually used to compose the signal's representation. Sparse representations are both efficient and interpretable~\cite{sparsity-old}.

A different type of approach is to approximate the signal with a physics-based model whose parameters are inferred. These parameters may indirectly define a superposition of atoms, but atom coefficients are not optimized directly. Prony's Method is a classical example that seeks to compute the poles of a filter whose impulse response best matches the given data~\cite{marple}, indirectly representing the data as a superposition of exponentially decaying sinusoids. This technique is still employed in practice \cite{prony-physics} because its model is explainable in terms of the signal morphology. However, the underlying assumption that the signal has a decaying exponential envelope and identification of the signal start time are critical to its performance.

This paper presents a procedure for generating a frame using any number of complex signal envelopes, called the Enveloped Sinusoid Parseval (ESP) frame.
The ESP frame shares similarities with the STFT but while the STFT is signal agnostic an ESP frame can be easily tuned to represent a wide variety of signals by supplying relevant envelopes.
It is compatible with modern convex analysis techniques, and because it retains the Parseval property, it is efficient and practical to deploy in large-scale and distributed iterative optimization algorithms.

Section \ref{sec:definition} presents a derivation of the ESP approach, and Section~\ref{sec:regularization} provides implementation details for incorporating it into regularized least squares problems. Examples of setting up an ESP frame on a synthetic signal and visualizing its coefficients are given in Section~\ref{sec:examples}. Section~\ref{sec:denoising} provides a more realistic ESP frame construction for harmonic oscillators, and demonstrates ESP as a nonlinear denoising filter in comparison with an STFT frame. Section~\ref{sec:params} demonstrates the ESP frame's utility in parameter estimation in comparison with Prony's Method.

\section{Enveloped Sinusoid Parseval Frames}
\label{sec:espframes}

Section \ref{sec:definition} defines the ESP frame for a general set of complex envelopes, and proves that they can always be normalized to produce a Parseval frame.
Optimization algorithms for \(L_1\)-regularized least-squares based coefficient inference are presented in Section~\ref{sec:regularization}. Finally, examples of ESP frame coefficients generated for simple signals are provided.

\subsection{Definition}
\label{sec:definition}

Consider the complex finite-dimensional Hilbert space \(\mathbb{C}^N\). All norms ($\|\cdot\|$) are computed in the $\ell_2$ sense unless otherwise stated. A {\em tight frame} is a collection of vectors \(\{\ba_i\}_{i=0}^{K-1}\) in \(\mathbb{C}^N\) and $\alpha > 0$ such that
\begin{equation}
\label{eq:tight-identity}
\|\bw\|^2 = \alpha\sum_{i=0}^{K-1} |\langle \bw, \ba_i\rangle|^2 \ \text{for all \(\bw\in \mathbb{C}^N\).}
\end{equation}
A {\em Parseval frame} is a tight frame with \(\alpha = 1\).  Given a tight frame we define the {\em analysis operator} to be the linear map from vectors \(\bw\in\mathbb{C}^N\) to frame coefficients \(\bA\bw \in \mathbb{C}^K\) given by
\begin{equation}
Aw[i] = \langle \bw, \ba_i \rangle.
\end{equation}
Elements of \(\bA\) are given by the matrix coefficients \(A[m,n] = \overline{a_m[n]}\).
The defining characteristic of tight frames is that the original vector can be reconstructed from the frame coefficients via the formula \cite[Prop. 3.11]{framesforundergraduates}
\begin{equation}
\bw = \frac{1}{\alpha}\bA^*\bA\bw = \frac{1}{\alpha}\sum_{i=0}^{K-1} \langle \bw, \ba_i \rangle \ba_i.
\end{equation}
The \textit{synthesis operator} \(\bA^*\) is the conjugate transpose of the analysis operator, and maps frame coefficients back to the signal space. In the case that \(\bA\) is a Parseval frame, \(\bA^*\) serves as the frame's left-inverse.

We define the class of {\em Enveloped Sinusoid Parseval Frames} by applying enveloping functions to the (non-unitary) Discrete Fourier Transform (DFT) basis. First, specify the envelopes as a collection of vectors \(\{\be_l\}_{l=0}^{L-1}\) which are not identically zero. Let \(\{\bs_k\}_{k=0}^{N-1}\) denote the DFT basis
\begin{equation*}
	s_k[n] = \exp(2\pi j kn/N)\ \text{for \(k,n = 0,\ldots, N-1\)}.
\end{equation*}
Let \(\bS:\mathbb{C}^N\to\mathbb{C}^N\) perform right-circular-shifting and \(\bD:\mathbb{C}^N\to M(\mathbb{C}^N)\) perform diagonalization, such that
\begin{align*}
	Sw[n] &= w[n-1 \bmod N], \\
	[D(v)w][n] &= v[n]w[n],
\end{align*}
for \(\bv,\bw \in\mathbb{C}^N\) and \(n=0,\ldots,N-1\). Note that \(\bD(\bv)\bw\) is operator notation for componentwise multiplication of \(\bv\) with \(\bw\).  We then define the ESP frame vectors \(\{\ba_{l,k,m}\}\) to be translates of the enveloped sinusoids
\begin{equation}
	\ba_{l,k,m} = \bS^m \bD(\be_l) \bs_k
\end{equation}
for \(l=0,\ldots, L-1\) and \(k,m=0,\ldots,N-1\). Expanding the terms reveals that
\begin{align*}
	a_{l,k,m}[n] &= e_l[n-m \bmod N] s_k[n-m] \\
	&= e_l[n-m \bmod N]\exp(2\pi j k(n-m)/N),
\end{align*}
and using the fact that \(n-m \bmod N\) ranges from 0 to \(N-1\),
\begin{align}
	\label{eq:norm}
	\|\ba_{l,k,m}\|^2 
	&= \sum_n |e_l[n-m \bmod N]|^2 = \|\be_l\|^2
\end{align}
 for all \(l, k, m\).

Theorem \ref{thm:esp} shows that the collection \(\{\ba_{l,k,m}\}\) is a tight frame. Notably, the conditions on the envelopes are minimal: even a set of unrelated envelopes will admit a frame under this procedure.
\begin{theorem}
\label{thm:esp}
Given a set of nonzero \(N\)-dimensional vectors \(\{\be_l\}_{l=0}^{L-1}\), the vectors \(\{\ba_{l,k,m}\}\) defined by
\[
a_{l,k,m}[n] = e_l[n-m \bmod N]\exp(2\pi j k(n-m)/N)
\]
for \(l=0,\ldots, L-1\) and \(k,m,n=0,\ldots,N-1\) form a tight frame with \(\alpha = N \sum_l \|\be_l\|^2\).
\end{theorem}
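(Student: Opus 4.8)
The plan is to verify the tight-frame property directly: for an arbitrary \(\bw\in\mathbb{C}^N\) I would compute the total frame energy \(\sum_{l,k,m}|\langle\bw,\ba_{l,k,m}\rangle|^2\) and show it equals \(\alpha\|\bw\|^2\) with \(\alpha=N\sum_l\|\be_l\|^2\), which is the defining identity \eqref{eq:tight-identity} with the claimed constant. The entire argument is bookkeeping of the triple sum over \((l,k,m)\), resting on a single observation: the inner sum over the frequency index \(k\) is a Parseval identity for the DFT in disguise, which is what lets one avoid evaluating any oscillatory sums by hand.

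First I would fix \(l\) and \(m\) and fold the envelope and the shift into one windowed signal. Writing out the inner product,
\begin{equation*}
\langle\bw,\ba_{l,k,m}\rangle=\sum_n w[n]\,\overline{e_l[n-m\bmod N]}\,\exp(-2\pi jk(n-m)/N).
\end{equation*}
Separating the \(m\)-dependent phase \(\exp(2\pi jkm/N)\), the remaining sum is exactly the (non-unitary) DFT, evaluated at frequency \(k\), of the vector \(v_{l,m}[n]:=w[n]\,\overline{e_l[n-m\bmod N]}\). Since \(|\exp(2\pi jkm/N)|=1\), that phase drops out under \(|\cdot|^2\), leaving \(|\langle\bw,\ba_{l,k,m}\rangle|^2=|\widehat{v_{l,m}}[k]|^2\).

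Next I would sum over \(k\) and apply Parseval for the DFT, \(\sum_k|\widehat{v}[k]|^2=N\sum_n|v[n]|^2\); this collapses all of the oscillatory structure at once, giving
\begin{equation*}
\sum_k|\langle\bw,\ba_{l,k,m}\rangle|^2=N\sum_n|w[n]|^2\,|e_l[n-m\bmod N]|^2.
\end{equation*}
Then I would sum over the shift \(m\): for each fixed \(n\), as \(m\) ranges over \(0,\ldots,N-1\) the argument \(n-m\bmod N\) runs over every residue exactly once — the same observation already used to derive the norm identity \eqref{eq:norm} — so \(\sum_m|e_l[n-m\bmod N]|^2=\|\be_l\|^2\), and the \(m\)-sum factors cleanly as \(N\|\be_l\|^2\|\bw\|^2\). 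Summing the result over \(l\) yields \(N\bigl(\sum_l\|\be_l\|^2\bigr)\|\bw\|^2=\alpha\|\bw\|^2\), completing the verification.

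The calculation is routine, so I do not anticipate a serious obstacle; the only points demanding care are the modular-shift bookkeeping and correctly seeing the stray phase factors as unit-modulus so that they vanish under \(|\cdot|^2\). The one genuinely structural step is recognizing the \(k\)-sum as a Parseval application rather than trying to evaluate the trigonometric sums explicitly — that recognition is precisely what keeps the whole proof to a few lines and makes the minimal hypotheses on the envelopes (merely nonzero, possibly unrelated) suffice.
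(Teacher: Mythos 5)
Your proposal is correct and follows essentially the same route as the paper's proof: collapse the \(k\)-sum via Parseval/Plancherel for the non-unitary DFT, then use the fact that \(n-m \bmod N\) runs over all residues to collapse the \(m\)-sum. The only cosmetic difference is that you work componentwise and absorb the shift into a unit-modulus phase factor, whereas the paper moves \(\bS^{-m}\) onto \(\bw\) using unitarity of the shift operator.
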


\begin{proof}
The tight frame condition is given in \eqref{eq:tight-identity}.  Using the fact that \(\bS\) is unitary, we have for \(\bw\in \mathbb{C}^N\)
\begin{align*}
	\sum_{k,l,m} |\langle \ba_{k,l,m}, \bw\rangle|^2 &= \sum_{k,l,m} |\langle \bS^{m} \bD(\be_l) \bs_k, \bw \rangle|^2 \\
	&= \sum_{l,m} \sum_k|\langle \bs_k, \bD(\overline{\be_l})\bS^{-m} \bw\rangle|^2.
\end{align*}
Because \(\{\bs_k\}\) is the (non-unitary) DFT basis, Plancherel's theorem implies
\begin{align*}
\sum_{k,l,m}& |\langle \ba_{k,l,m}, \bw\rangle|^2= N\sum_{l,m} \|\bD(\overline{\be_l})\bS^{-m} \bw\|^2 \\
=& N\sum_{l,m,n} | [D(\overline{e_l})S^{-m}w][n] |^2 \\
=& N\sum_{l,m,n} |\overline{e_l[n]} w[n+m \bmod N]|^2 \\
=& N\sum_{l,n} |e_l[n]|^2 \sum_m |w[n+m \bmod N]|^2.
\end{align*}
Since \(n+m\bmod N\) ranges from \(0\) to \(N-1\),
\begin{align*}
\sum_{k,l,m} |\langle \ba_{k,l,m}, \bw\rangle|^2=& N\sum_{l,n} |e_l[n]|^2 \|\bw\|^2 \\
&= \left(N \sum_l \|e_l\|^2 \right)\|\bw\|^2. \qedhere
\end{align*}
\end{proof}

The following corollary normalizes the frame to have the Parseval property and thus justifies the ESP naming convention.
\begin{corollary}
Given a set of nonzero \(N\)-dimensional vectors \(\{\be_l\}_{l=0}^{L-1}\) such that \(\|\be_l\| = (NL)^{-1/2}\) for all \(l\) then the vectors \(\{\ba_{k,l,m}\}\) form a Parseval frame.
\end{corollary}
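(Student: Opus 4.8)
The plan is to obtain this as an immediate consequence of Theorem~\ref{thm:esp}. That theorem already establishes that, for any collection of nonzero envelopes \(\{\be_l\}_{l=0}^{L-1}\), the vectors \(\{\ba_{l,k,m}\}\) form a tight frame with constant \(\alpha = N\sum_l \|\be_l\|^2\). Since the definition in \eqref{eq:tight-identity} labels a tight frame as \emph{Parseval} precisely when \(\alpha = 1\), the entire content of the corollary reduces to checking that the prescribed normalization forces the tight-frame constant to equal one.

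First I would simply substitute the hypothesis \(\|\be_l\| = (NL)^{-1/2}\), equivalently \(\|\be_l\|^2 = 1/(NL)\), into the formula supplied by the theorem:
\begin{equation*}
\alpha = N\sum_{l=0}^{L-1} \|\be_l\|^2 = N\sum_{l=0}^{L-1}\frac{1}{NL} = N\cdot L\cdot\frac{1}{NL} = 1.
\end{equation*}
Because each of the \(L\) envelopes contributes the identical value \(1/(NL)\), the inner sum collapses to \(1/N\), and the leading factor of \(N\) cancels it exactly. With \(\alpha = 1\) verified, the tight frame \(\{\ba_{l,k,m}\}\) satisfies the Parseval identity and the result follows.

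There is essentially no obstacle here: the normalization in the corollary was evidently reverse-engineered from the constant in Theorem~\ref{thm:esp}, so the only thing to confirm is the arithmetic of the cancellation and that the hypothesis keeps the envelopes nonzero (which it does, since \((NL)^{-1/2} > 0\)), ensuring the theorem applies. The one point I would take care to state explicitly is that the normalization condition is a constraint only on the norms \(\|\be_l\|\), not on the shapes of the envelopes, so the Parseval property is preserved for an arbitrary choice of equal-norm envelopes — which is exactly the flexibility the ESP construction is designed to exploit.
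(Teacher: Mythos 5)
Your proof is correct and is exactly the argument the paper intends (the paper leaves the corollary unproved as an immediate consequence of Theorem~\ref{thm:esp}): substituting \(\|\be_l\|^2 = 1/(NL)\) into \(\alpha = N\sum_l\|\be_l\|^2\) gives \(\alpha = 1\). No issues.
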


\subsubsection*{Note on Efficient Implementation}
The analysis and synthesis transforms can be rewritten to
\begin{align}
	\label{eq:analysis}
	\bc_{k,l} = \bA_{k,l}\bw &= \bF^* \bD( \bS^k \bF \bH(\be_l))  \bF\bw, \\
	\label{eq:synth}
	\bw = \frac{1}{\alpha}\sum_{k,l} \bA^*_{k,l}\bc_{k,l}
	&= \frac{1}{\alpha} \bF^* \sum_{k,l} \bD(\bS^{-k} \bF \be_l) \bF \bc_{k,l},
\end{align}
where $\bF$ is the FFT and $\bH$ is the conjugate linear operator \(Hw[n] = \overline{w[N-n \bmod N]}\).
Using these equations, ESP frame analysis and synthesis can be computed efficiently using FFT diagonalization and graphics processing unit (GPU) parallelization. This is critical for the following iterative optimization algorithms where synthesis and analysis operations are performed in multitudes.

\subsection{Coefficient Inference}
\label{sec:regularization}

Frames, by their nature, do not uniquely represent vectors and for any given vector \(\bw\) there will be a linear subspace of frame coefficients \(\bc\) such that \(\frac{1}{\alpha}\bA^* \bc = \bw\).  The coefficients supplied by the analysis operator \(\bA\) are characterized by the fact that they minimize the \(L_2\)-norm \cite[Prop 6.8]{framesforundergraduates}:
\[
\bA\bw = \argmin_{\bc} \|\bc\| \ \text{such that \(\frac{1}{\alpha}\bA^*\bc = \bw\).}
\]
This produces the set of frame coefficients with minimum power, but in general this representation is not sparse. Sparse frame representations are important in the ESP setting because the canonical frame coefficients are computed independently, and thus for a general set of envelopes are expected to contain redundant information. Through application of \(L_1\)-regularization, the various frame vectors can be made to ``compete'' with each other and a sparse coefficient vector can be computed that maintains exact reconstruction of the input signal (or in the case of noisy data, maintains some allowable error).  In the ideal scenario, regularization can be used to identify superimposed components of a signal formed from a linear combination of frame vectors. This is most effective with highly distinct envelopes.

In formal terms, \(L_1\)-regularization entails finding frame coefficients that solve either the Basis Pursuit (BP) problem
\begin{equation}
\label{eq:bp}
\argmin_{\bc} \|\bD(\blambda) \bc\|_1\ \text{such that \(\frac{1}{\alpha}\bA^* \bc = \bw\)}
\end{equation}
or the Basis Pursuit Denoising (BPD) problem
\begin{equation}
\label{eq:bpd}
\argmin_{\bc} \|\bD(\blambda) \bc\|_1 + \frac{1}{2} \left\|\frac{1}{\alpha}\bA^* \bc - \bw\right\|^2,
\end{equation}
where \(\blambda>0\) is a weight vector that allows the user to control relative penalization between coefficients. A constant parameter \(\lambda\) is often used in place of a vectorized \(\blambda\).

These convex optimization problems can be solved by the Split Augmented Lagrangian Shrinkage Algorithm (SALSA), which is an instance of the Alternating Direction Method of Multipliers (ADMM) for which convergence is proved~\cite{salsa}.  The steps of SALSA~\cite[Algorithm 4]{salsa2} applied to the BP problem~\eqref{eq:bp} are written in Algorithm \ref{alg:bp}, where \(\text{soft}(\cdot)\) denotes the soft-thresholding function
\[
\text{soft}(x, T) = \begin{cases} \frac{|x|-T}{|x|} x & |x| > T \\ 0 & |x| \leq T. \end{cases}
\]
Note that for a Parseval frame \(\alpha = 1\) in Algorithm \ref{alg:bp}.

\begin{algorithm}
\caption{\(L_1\) Basis Pursuit Algorithm}
\label{alg:bp}
\begin{algorithmic}
\Procedure{Basis Pursuit}{$\bA, \bw, \alpha, \blambda$}
\State  Initialize \(\mu > 0\), \(\bx_0 = \bA\bw\), and \(\bd_0 = 0\)
\While{stopping criteria not satisfied}
\State \(\bu_n = \soft(\bx_{n-1} + \bd_{n-1}, \blambda/\mu) \)
\State \(\bv_n = \bu_n - \bd_{n-1}\)
\State \(\bx_n = \bv_n + \bA\left(\bw - \frac{1}{\alpha}\bA^*\bv_n\right)\)
\State \(\bd_n = \bx_n - \bv_n\)
\EndWhile
\EndProcedure
\end{algorithmic}
\end{algorithm}

In general, the large dimension of the frame space makes exact convergence to sparse coefficients computationally expensive, especially in the presence of multiple similar envelopes. Accordingly, this work also employs Reweighted Basis Pursuit, where \(\blambda\) is iteratively adapted to drive coefficients to zero in fewer iterations as described in \cite{candes}. At iteration $n$,
\begin{equation}
\label{eq:reweight}
\blambda_n[k,l,m] = \frac{1}{|u_n[k,l,m]| + \epsilon}
\end{equation}
where \(\epsilon\) is smaller than the smallest expected non-zero coefficient value.  This scheme drives small coefficients to zero faster by assigning them a heavier weight. It has been shown to produce exact sparse solutions when such solutions exist~\cite{candes}.

When the frame does not admit a sparse representation of the signal, as is the case with noisy data, BPD may be used to search for a sparse solution at the cost of reconstruction error. In this case, SALSA can be written as in Algorithm \ref{alg:bpd}, which is also guaranteed to converge. The only difference between Algorithms \ref{alg:bp} and \ref{alg:bpd} is the \((1 + \mu/\alpha)^{-1}\) coefficient in the computation of \(\bx_k\).
Both algorithms can be sped up by the use of \eqref{eq:analysis} and \eqref{eq:synth} for analysis and synthesis, as well as the addition of the predictor-corrector-type acceleration described in \cite[Algorithm 8]{fastadmm}.
\begin{algorithm}
\caption{\(L_1\) Basis Pursuit Denoising Algorithm}
\label{alg:bpd}
\begin{algorithmic}
\Procedure{Basis Pursuit Denoising}{$\bA, \bw, \alpha, \blambda$}
\State  Initialize \(\mu > 0\), \(\bx_0 = \bA\bw\), and \(\bd_0 = 0\)
\While{stopping criteria not satisfied}
\State \(\bu_n = \soft(\bx_{n-1} + \bd_{n-1}, \blambda/\mu) \)
\State \(\bv_n = \bu_{n} - \bd_{n-1}\)
\State \(\bx_n = \bv_n + \left(1+\mu/\alpha\right)^{-1} \bA\left(\bw - \frac{1}{\alpha}\bA^*\bv_n\right)\)
\State \(\bd_n = \bx_n - \bv_n\)
\EndWhile
\EndProcedure
\end{algorithmic}
\end{algorithm}

In the following BP examples, \(\lambda=1\) (because constant $\lambda$ does not impact the BP solution). For BPD examples, note that there exists \(\lambda_{\max}\) given by
\begin{equation}
\label{eq:lambdamax}
\lambda_{\max} = \|\bA \bw\|_\infty
\end{equation}
such that for all \(\lambda \geq \lambda_{\max}\) the BPD solution is zero~\cite[Section V.B]{compressive}.  The subsequent BPD experiments set \(\lambda\) as a percentage of \(\lambda_{\max}\) with \(\lambda = 0.1 \lambda_{\max}\) as a common choice.

At convergence, the choice of \(\mu\) does not impact the solution for either Algorithm \ref{alg:bp} or \ref{alg:bpd}, but it impacts the convergence rate. In the subsequent experiments \(\mu\) is set to \(\lambda / p\) where \(p\) is the 99th percentile of the initial coefficient magnitudes \(|c[k,l,m]|\).  This causes the first soft threshold of either the BP or BPD algorithm to zero-out 99\% of the coefficients.  In the case of vector-weighted BPD \(\mu =  \text{mean}(\blambda)/p\) is selected.

\subsection{Examples}
\label{sec:examples}
In this section we present an ESP frame composed of Gaussian envelopes and analyze exemplary synthetic signals constructed directly from the frame. The envelopes \({g_l:[0,T]\to \mathbb{R}}\) are defined by
\begin{equation}
\label{eq:gaussenv}
g_l(t) = \exp\left(-\frac{t^2}{2\sigma_l^2}\right),
\end{equation}
where \(\sigma_l > 0\) is the standard deviation. The ESP frame constructed with circularly shifted modulations of these envelopes is similar to a Morlet wavelet \cite{grossmann1984decomposition}. Let \(T = 5\)ms with a sampling frequency of \(f_s = 100\)kHz so that \(N = 500\). Define
\[
	\sigma_l = 10^{l/2 -4}\ \text{for }l = 0, \ldots, 4
\]
so there are \(L = 5\) envelopes and the standard deviations range from 0.1ms to 10ms. Using the \(g_l\) defined by these standard deviations, the ESP frame functions are given by
\begin{align*}
a_{l, k, m}&(t) = c_l g_l(t-t_m) \exp(2\pi j k f_s (t - t_m)/N) 
\end{align*}
where the \(c_l\) are chosen so that the vector \(\be_l\) satisfies \(\|\be_l\| = (NL)^{-1/2}\).  The corresponding vectors \(\ba_{l,k,m}\) form a Parseval frame by Theorem \ref{thm:esp}. With this configuration there are \(N^2L = 1.25\) million frame functions parameterized by
\begin{itemize}
	\item the standard deviation \(\sigma_l\) (0.1ms - 10ms),
	\item the frequency \(f_k\) (-50kHz - 50kHz),
	\item and the circular time shift \(t_m\) (0ms - 10ms).
\end{itemize}

The two signals to be used for analysis are given by
\begin{align*}
	\bff &= \sqrt{NL} \ba_{2, 325, 50} & \bg = \sqrt{NL} \ba_{0, 350, 100},
\end{align*}
so that $||\bff||=||\bg||=1$.
Note that \(\bff\) has a standard deviation of \(10\)ms, frequency of \(15\)kHz, and time shift of 0.5ms, and \(\bg\) has a standard deviation of \(1\)ms, frequency of \(20\)kHz, and time shift of \(1.0\)ms. The time series $\bff$ and $\bg$ can be visualized in the top left and right (respectively) of Figure~\ref{fig:noreg-synth} where we have wrapped the signal in time.
Note that $\bff$ and $\bg$ can be generated by applying the synthesis operator to corresponding sparse vectors of coefficients
\begin{align}
	\bff &= \bA^*\opt{\bc_f} \\
	\bg  &= \bA^*\opt{\bc_g},
\end{align}
where $\opt{}$ denotes that they are the sparsest possible set of coefficients that can produce \(\bff\) and \(\bg\), making them the optimal solutions corresponding to \eqref{eq:bp}.  In this case \(\opt{\bc_f}\) and \(\opt{\bc_g}\) are Kronecker delta vectors corresponding to the frame vectors used to define \(\bff\) and \(\bg\).

\subsubsection*{Frame Coefficient Visualization}
\label{sec:unregframe}

Equation \eqref{eq:analysis} can be used to efficiently compute the frame coefficients for $\bff$ and $\bg$. It is possible to visualize the magnitude of these coefficients in three dimensions, but the results are generally difficult to interpret. As an alternative we propose to use the Maximum Intensity Projections (MIP) of the frame coefficients along the frequency and time shift axes on a dB scale referenced against the maximum coefficient magnitude to produce pairs of plots as in Figure~\ref{fig:noreg-synth}.

\begin{figure}
\centering
\includegraphics[width=1.55in]{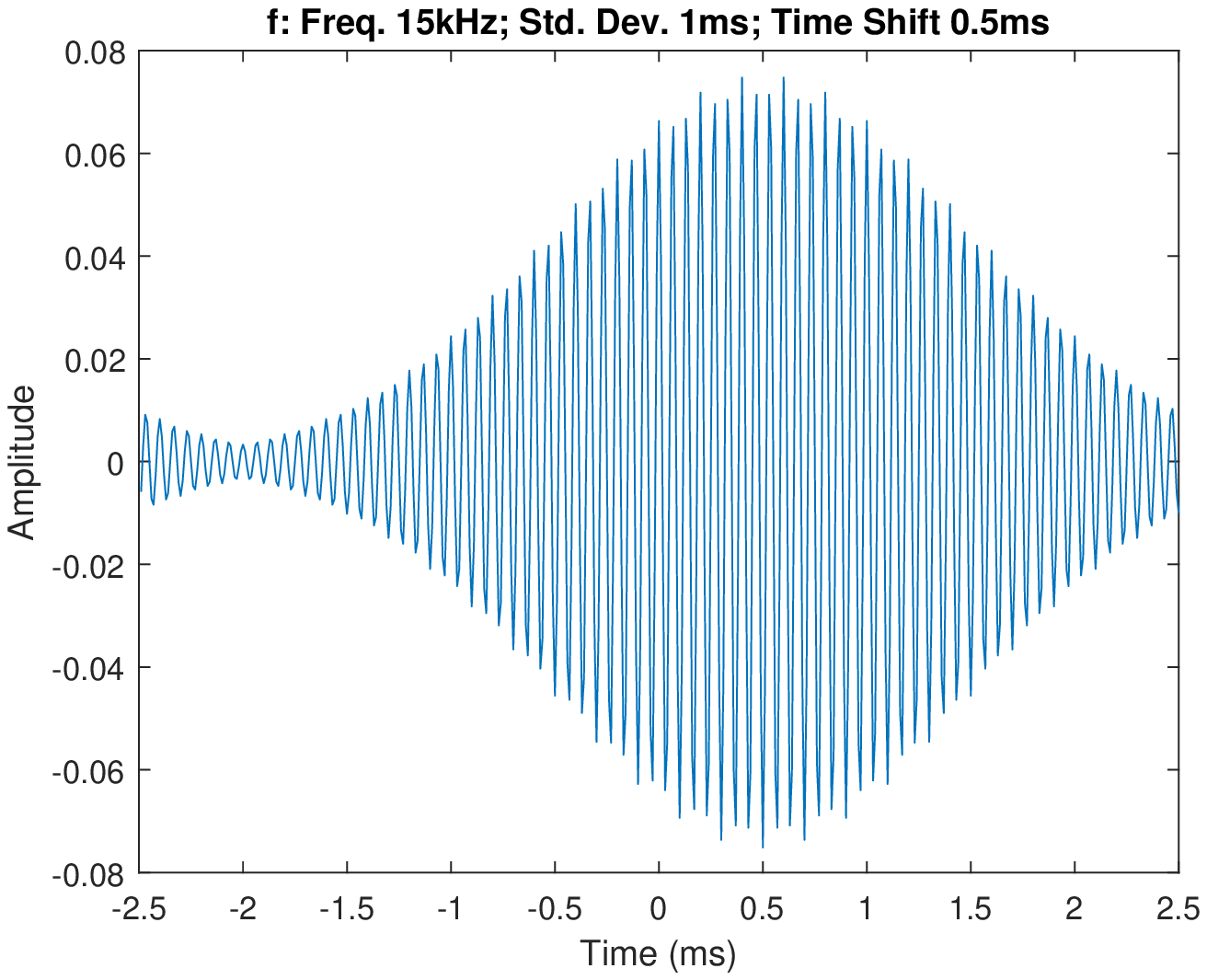}
\includegraphics[width=1.55in]{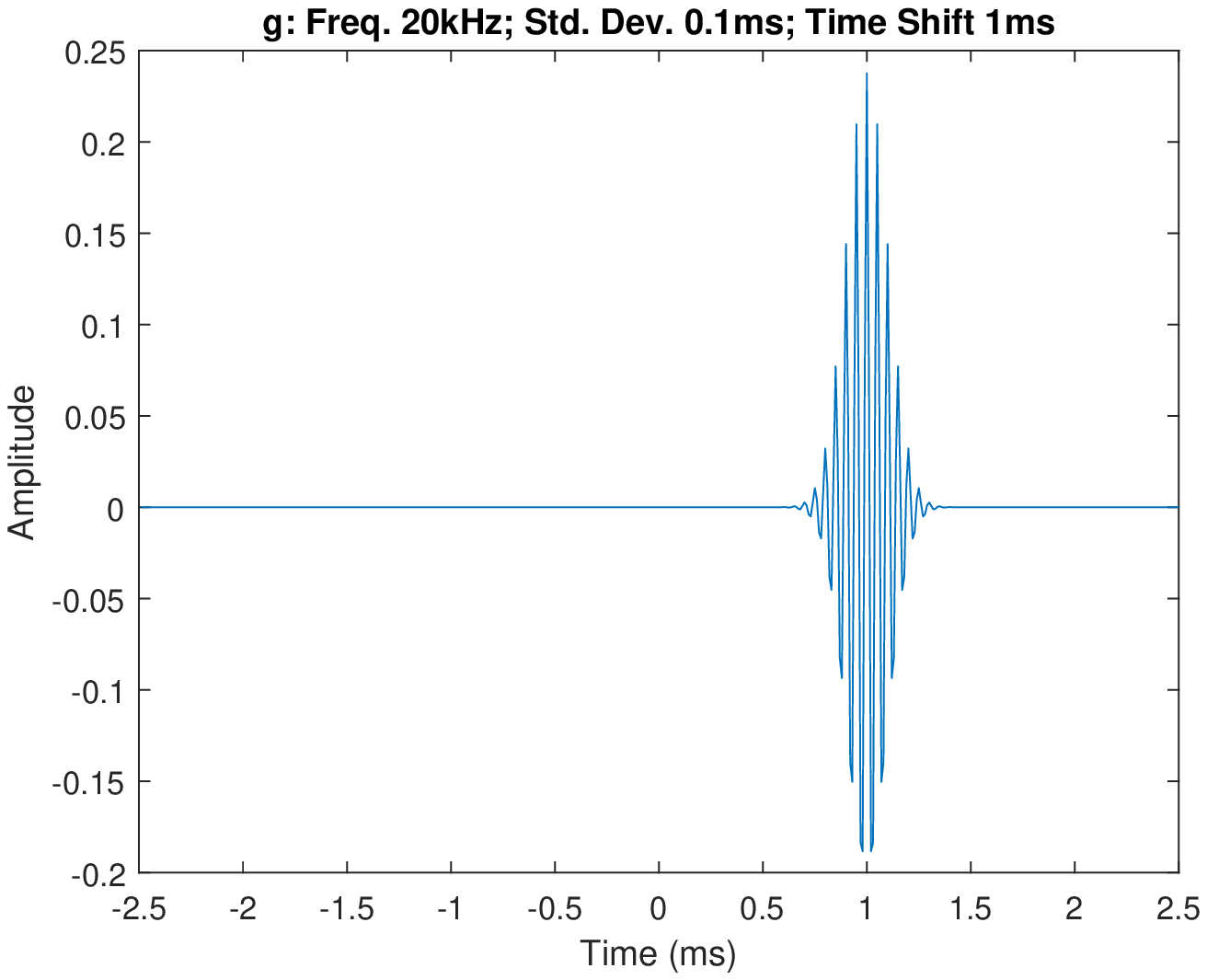}

\vspace{.1in}
\includegraphics[width=3.2in]{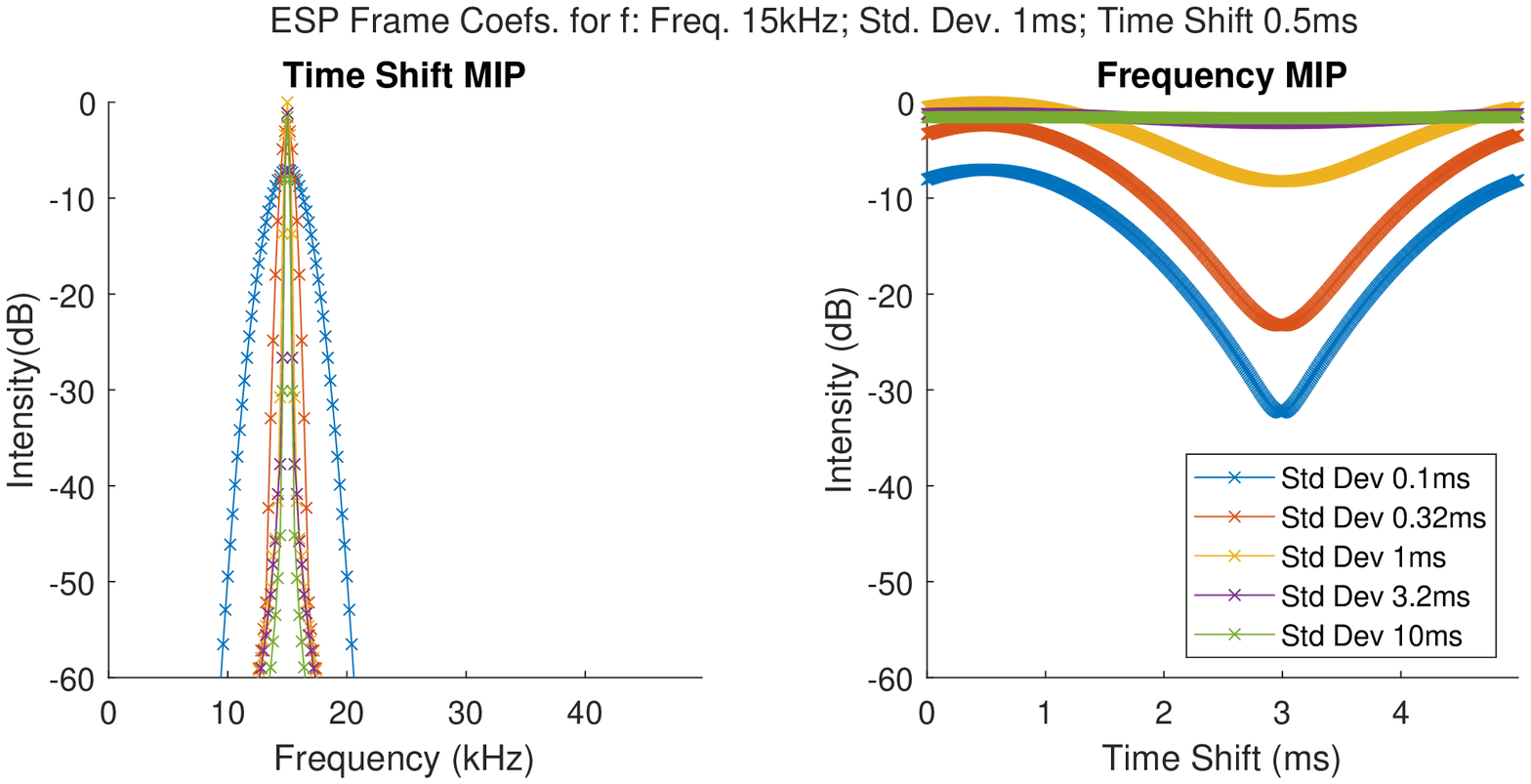}

\vspace{.1in}
\includegraphics[width=3.2in]{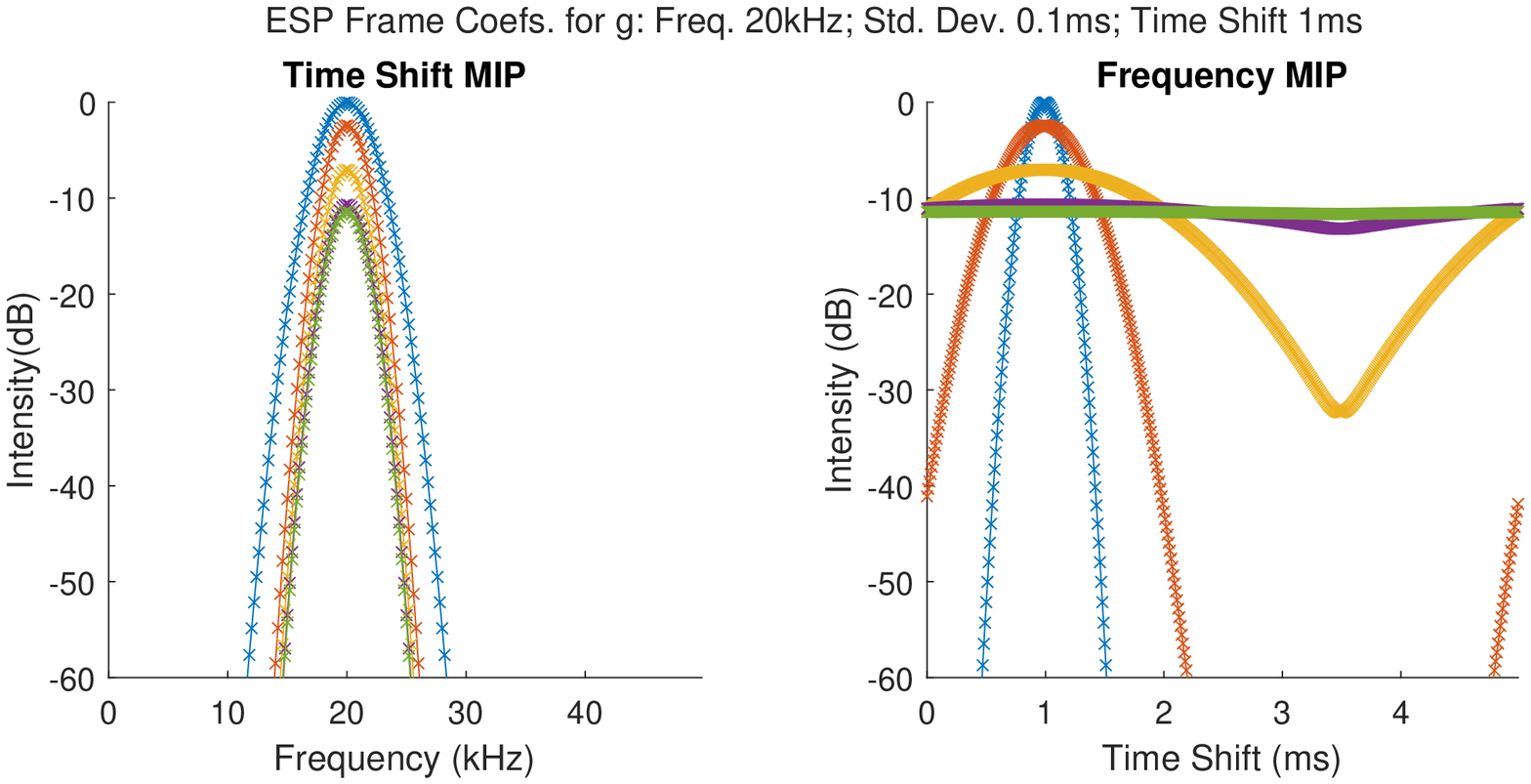}

\caption{Signals (top), and ESP frame coefficient time shift MIP (middle left and bottom left) and frequency MIP (middle right and bottom right) for \(\bff\) (top left and middle) and \(\bg\) (top right and bottom).  Intensities are shown on a dB scale relative to the maximum total frame coefficient.}
\label{fig:noreg-synth}
\end{figure}

Note that both sets of frame coefficients have a single maximum, correctly corresponding to the parameters of the underlying signal. (Note that the time shift is circular so the coefficient maximums wrap around the \(x\)-axis.) This is expected since, by the Cauchy-Schwartz inequality and \eqref{eq:norm},
\begin{align}
\label{eq:cs}
|c_{k,l,m}| &= |\langle \bv, \ba_{k,l,m}\rangle| \leq \|\bv\| \|\ba_{k,l,m}\| \\
 &= \|\bv\| \|\be_{l}\| = \frac{\|\bv\|}{\sqrt{NL}} \nonumber
\end{align}
with equality if and only if \(\bv\) is a scalar multiple of \(\ba_{k,l,m}\). In practice, the signal will not exactly match a frame vector, but we still expect the unregularized frame coefficients to have peaks near where a component of the signal is most approximately equal to a frame vector. Accordingly, these peaks can be used to identify the frequency, time shift, and envelope parameters for signal components. Notably, these are biased estimations because the underlying frame vectors are not orthonormal.

In this example the resolution for the longer signal \(f\) is better along the frequency axis than it is on the time shift axis, while the resolution for the shorter signal \(g\) is sharper on the time shift axis than it is on the frequency axis.  This tradeoff is due to the Fourier uncertainty principle. The resolution in the ``standard deviation axis'' is poor for both signals because (1) there are so few standard deviation parameters, and (2) there is less orthogonality between frame vectors of different standard deviations than there is between frame vectors with different frequencies or time shifts.

\subsubsection*{Sparse Frame Coefficients}
\label{sec:framereg}

As discussed above, a much sparser set of coefficients exists for \(\bff\) and \(\bg\), which are scaled Kronecker delta vectors $\opt{\bc_f}$ and $\opt{\bc_g}$. Algorithm \ref{alg:bp} was run using iterative reweighting with \(\lambda = 1\) and \(\epsilon = 50\) on \(\bff\) and \(\bg\) until the sequence \(\bu_n\) converged to the known solution. While this takes quite a few iterations, the convergence stabilizes relatively early on and Algorithm \ref{alg:bp} reasonably approximates the solutions by the 1000th iteration, as shown in Figure~\ref{fig:bp-synth}.
Note that the frequency, time shift, and standard deviation parameters that define \(\bff\) and \(\bg\) are all clearly represented with peaks around 90dB larger than the bulk of the frame coefficients.  This high level of performance is due to the signals perfectly matching the frame functions.

\begin{figure}
\centering
\includegraphics[width=3.2in]{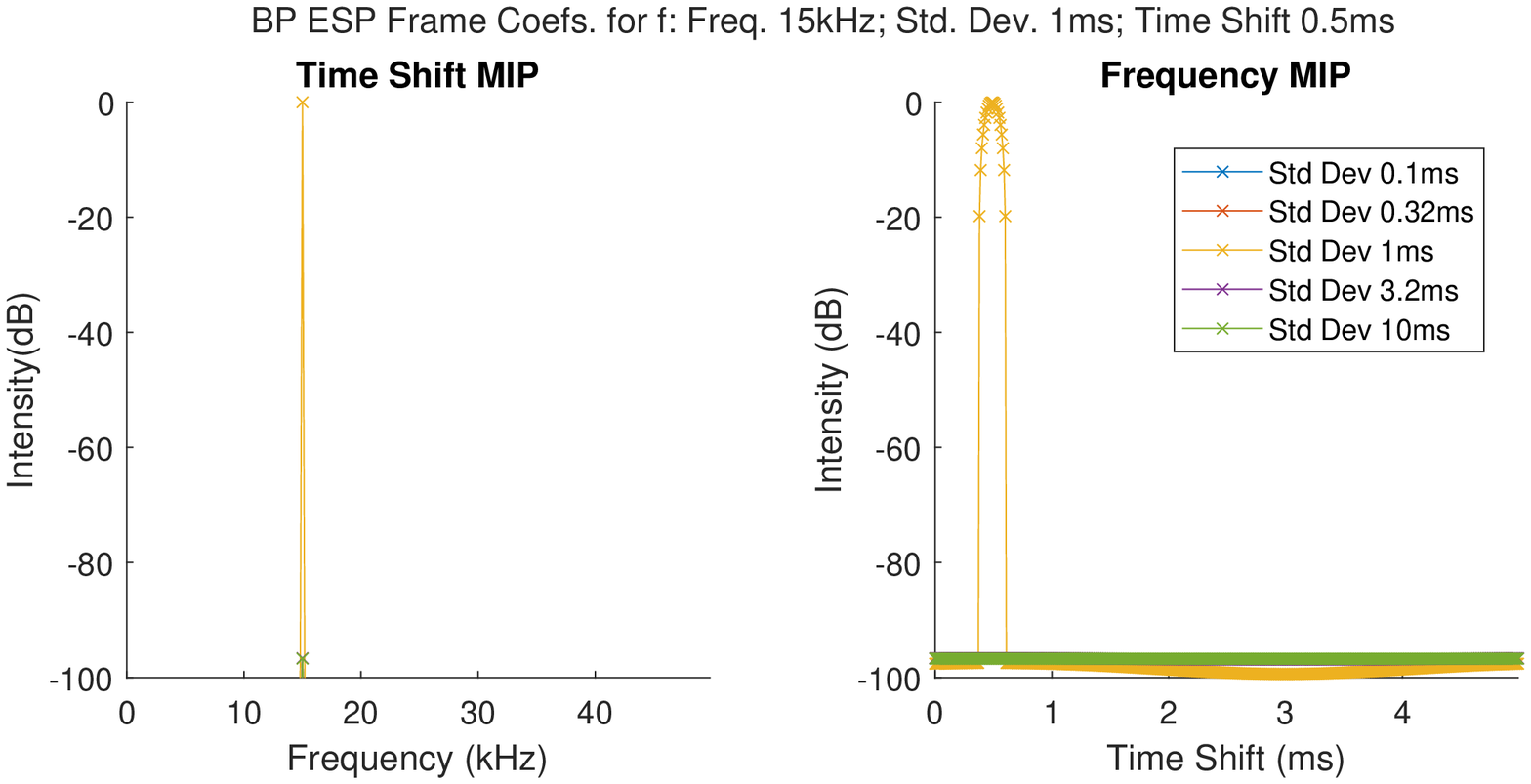}

\vspace{.1in}
\includegraphics[width=3.2in]{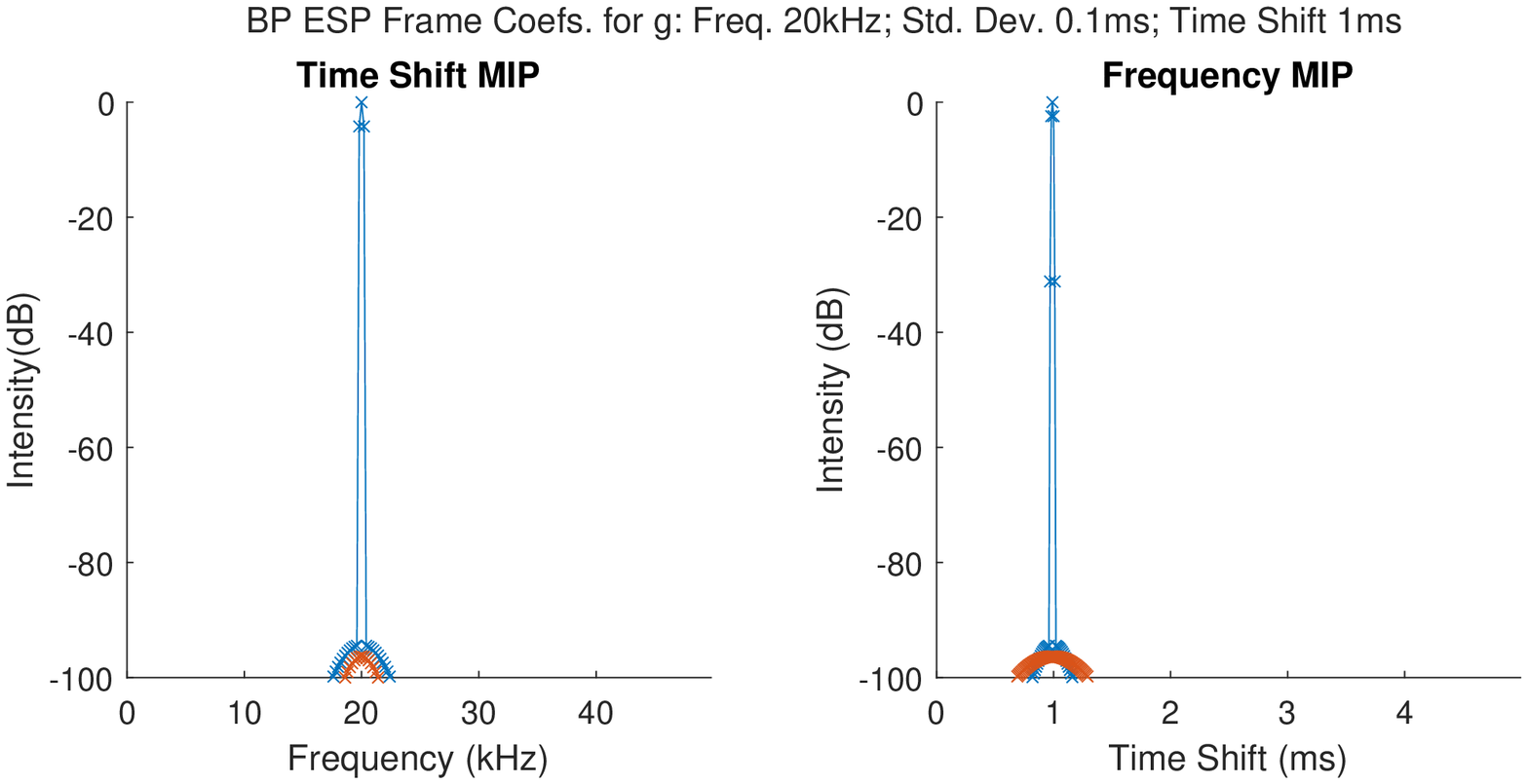}
\caption{ESP frame coefficient time shift MIP (left) and frequency MIP (right) for BP regularized coefficients for \(\bff\) (top) and \(\bg\) (bottom) after 1000 iterations with \(\lambda = 1\).  Intensities are shown on a dB scale relative to the maximum total frame coefficient.}
\label{fig:bp-synth}
\end{figure}

In this example we know {\em a priori} that BP regularization will converge to the desired coefficients and can confidently use regularized coefficients for parameter estimation.
In practice, the signal is unlikely to be a small linear combination of frame signals, either because of noise or because the envelope parameters do not exactly line up. In this case, since the exact solution is not represented by a particular envelope, parameter estimation via the maximum coefficient is inherently biased.

\section{Denoising}
\label{sec:denoising}

In this section the robustness of the ESP frame approach to noise is evaluated. BPD is used to filter both synthetic and experimentally collected noisy time series. Notionally, if the envelopes are chosen so that the ESP frame vectors are a good model for the signal, then the \(L_1\)-regularized representation of the signal will be sparse.  Then when BPD is applied the sparsification of the signal will preferentially remove noise and increase SNR. The ESP frame signal denoising performance is compared to the same of an STFT-based frame. The selected STFT is windowed with cosine functions so that it is also a Parseval frame as described in \cite{salsa2}. The STFT frame uses a window length of 128, resulting in 2,432 frame coefficients. For coefficient inference, Algorithm~\ref{alg:bpd} applies directly. Finally, the tradeoffs between sparsity and the reconstruction error for BPD regularized ESP frame and STFT frame coefficients are compared.

For the following sections we will start with either a synthetically or experimentally generated signal \(\bh\) and will form a noisy signal \(\bh_N\) by adding white Gaussian noise at some specified SNR (measured against the power of \(\bh\)).  We do not expect BPD to exactly reconstruct the original signal, even in the noise free case, and we track the reconstruction error using the relative error of the reconstructed signal \(\bh_R\) with the pure signal \(\bh\)
\[
E = \frac{\|\bh_R - \bh\|}{\|\bh\|}.
\]
The residual \(\bh_R  - \bh\) can be used to compute the reconstructed SNR via
\[
\text{SNR} = 20\log\left(\frac{\|\bh\|}{\|\bh_R-\bh\|}\right).
\]
We will look for this reconstructed SNR to produce a gain over the SNR of the added Gaussian noise as an indication that the BPD noise reduction process has been successful.

Section \ref{sec:denoising-synthetic} presents denoising applied to a synthetically generated time series using a specially constructed ESP frame.  Section~\ref{sec:denoising-experimental} presents a similar denoising analysis, but applied to experimentally collected time series.  Finally Section~\ref{sec:denoising-analysis} describes a comparative analysis between ESP frame denoising and STFT based denoising.

\subsection{Synthetic Data}
\label{sec:denoising-synthetic}
Consider an ESP frame engineered to detect resonance frequencies, such as those from the transfer function \({H:\mathbb{C}\to\mathbb{C}}\) defined by
\begin{align}
\label{eq:transfer}
H(z) &= \frac{z-1}{(z - \alpha)(z - \overline{\alpha})(z-\beta)(z-\overline{\beta})},  \\
\alpha &= -1/\tau_a + 2\pi j f_a,  \quad
\beta = -1/\tau_b + 2\pi j f_b, \nonumber
\end{align}
where \(f_a = 5\)kHz, \(\tau_a = 3\)ms, \(f_b = 13\)kHz, and \(\tau_b = 0.8\)ms.
The synthetic signal is generated by applying this transfer function to an impulse, i.e. the Kronecker delta vector \(\bv\) where \(v_{50} = 1\) and \(v_i = 0\) for \(i \ne 50\) (visualized in the top-left subplot of Figure~\ref{fig:h}).  The impulse is chosen so that the synthetic signal starts at \(0.5\)ms.  It can be shown via partial fractions \cite{oppenheim} that \(\bh\) is a combination of shifted exponentially decaying sinusoids. This suggests that an ESP frame constructed from exponential envelopes would be appropriate for analysis of this signal.

Define the ESP envelopes \(e_l:[0,T]\to \mathbb{R}\) with
\begin{equation}
\label{eq:expenv}
e_l(t) = \exp(-t/\tau_l)
\end{equation}
for \(\tau_l > 0\).  These envelopes are parameterized by the time constants \(\tau_l\) and we will use them to construct an ESP frame with vectors consisting of shifted exponentially decaying sinusoids.  Specifically, let \(T = 10\)ms and consider a sampling frequency of \(f_s = 100\)kHz so that \(N = 1000\). Let
\[
\tau_l = 10^{l/5 -4}\ \text{for } l = 0, \ldots, 8.
\]
Then \(L = 9\) and the time constants range from 0.1ms to 10ms. Using the envelopes \(e_l\) defined by these time constants we construct the ESP frame functions
\begin{align}
\label{eq:exp-frame}
a_{l, k, m}(t) & = c_l e_l(t-t_m) \exp(2\pi j k f_s (t - t_m)/N) 
\end{align}
where the \(c_l\) are chosen so that the vector \(\be_l\) satisfies \(\|\be_l\| = (NL)^{-1/2}\).  The vectors \(\ba_{l,k,m}\) form a Parseval frame by Theorem \ref{thm:esp}. With this configuration there are \(N^2L = 11\) million frame functions parameterized by
\begin{itemize}
\item the time constant \(\tau_l\) (0.1ms - 10ms),
\item the frequency \(f_k\) (-50kHz - 50kHz),
\item and the circular time shift \(t_m\) (0ms - 5ms).

\end{itemize}

Figure \ref{fig:h} displays the original signal, its STFT representation, and the unregularized ESP frame coefficients. Unlike in Section~\ref{sec:examples}, the time constants \(\tau_a\) and \(\tau_b\) of the signal resonances are not exactly represented in the frame envelopes, and so a trivial optimal solution is not known {\em a priori}. Furthermore, the unregularized coefficients are not sparse, and do not admit a clear peak along the time-constant axis.
The two main frequency peaks are visible in the time shift MIP and the time shift is visible in the frequency MIP. As expected, the frequency with the smaller time constant has less power than the frequency with the longer time constant. The time shift and both frequency peaks are visible in the STFT frame coefficients. However, since the time constant is not directly tracked as a parameter for the STFT frame, it would need to be estimated indirectly from the signal decay rate.

\begin{figure}
\centering
\includegraphics[width=1.58in]{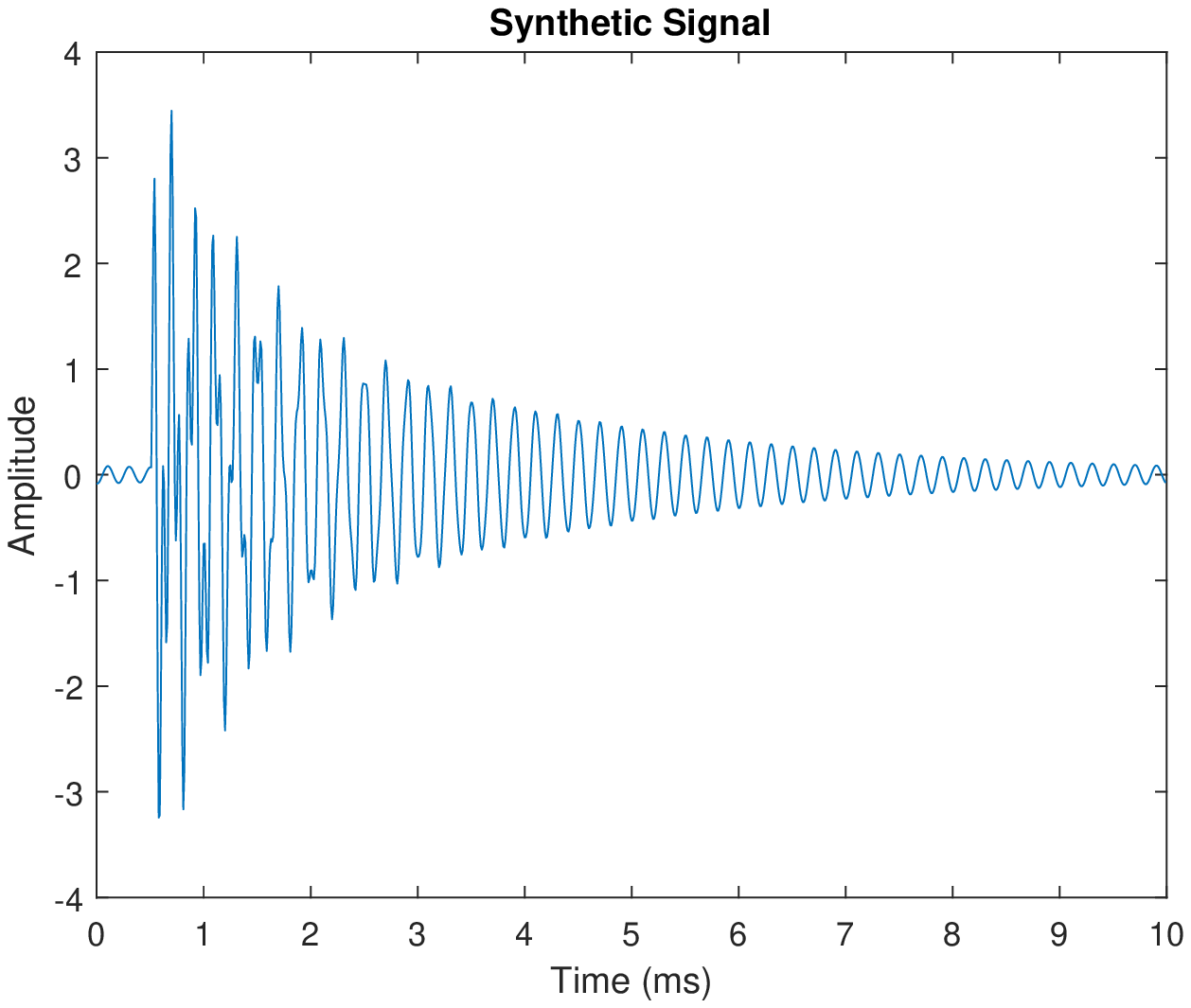}
\includegraphics[width=1.58in]{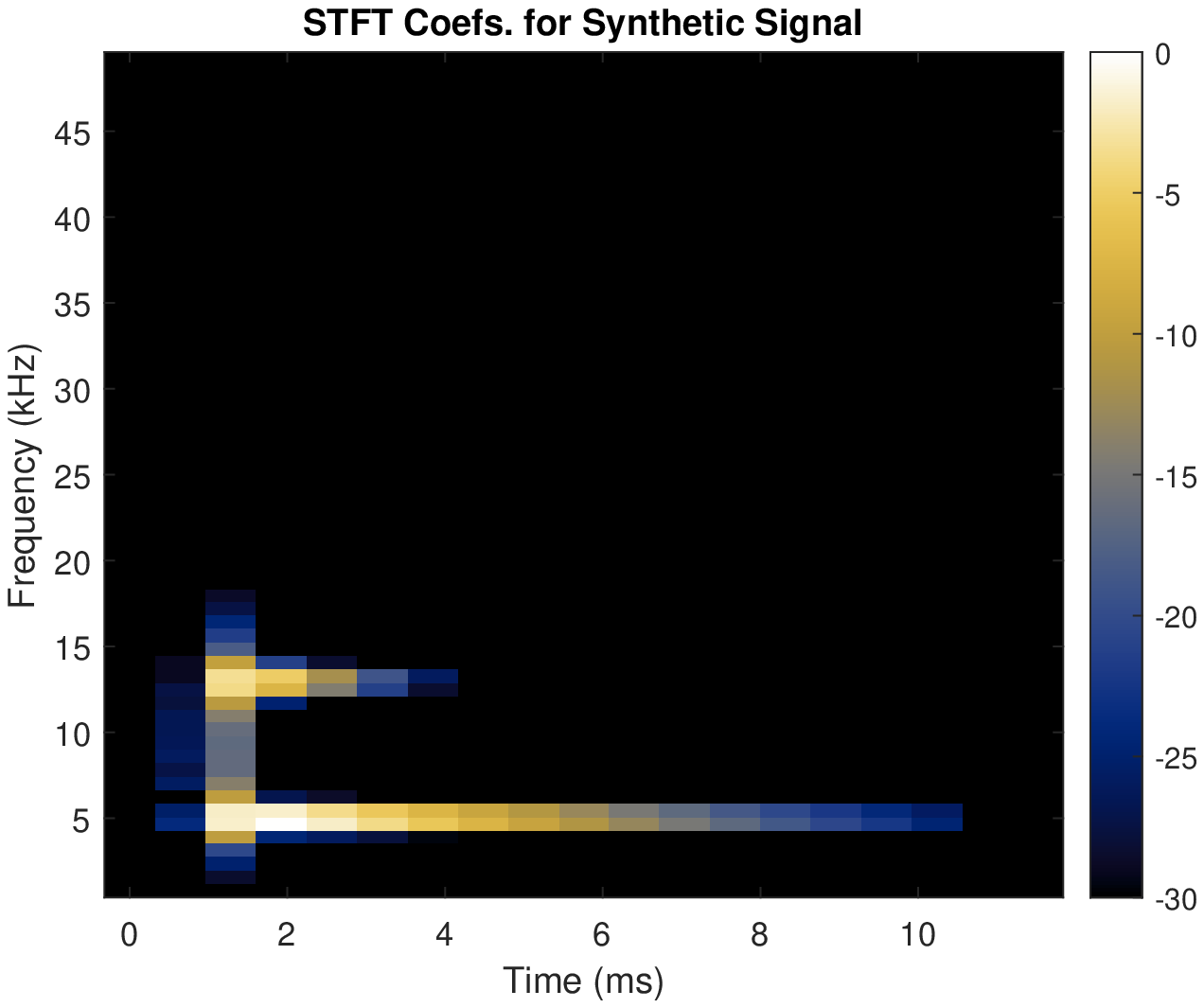}

\vspace{.1in}
\includegraphics[width=3.2in]{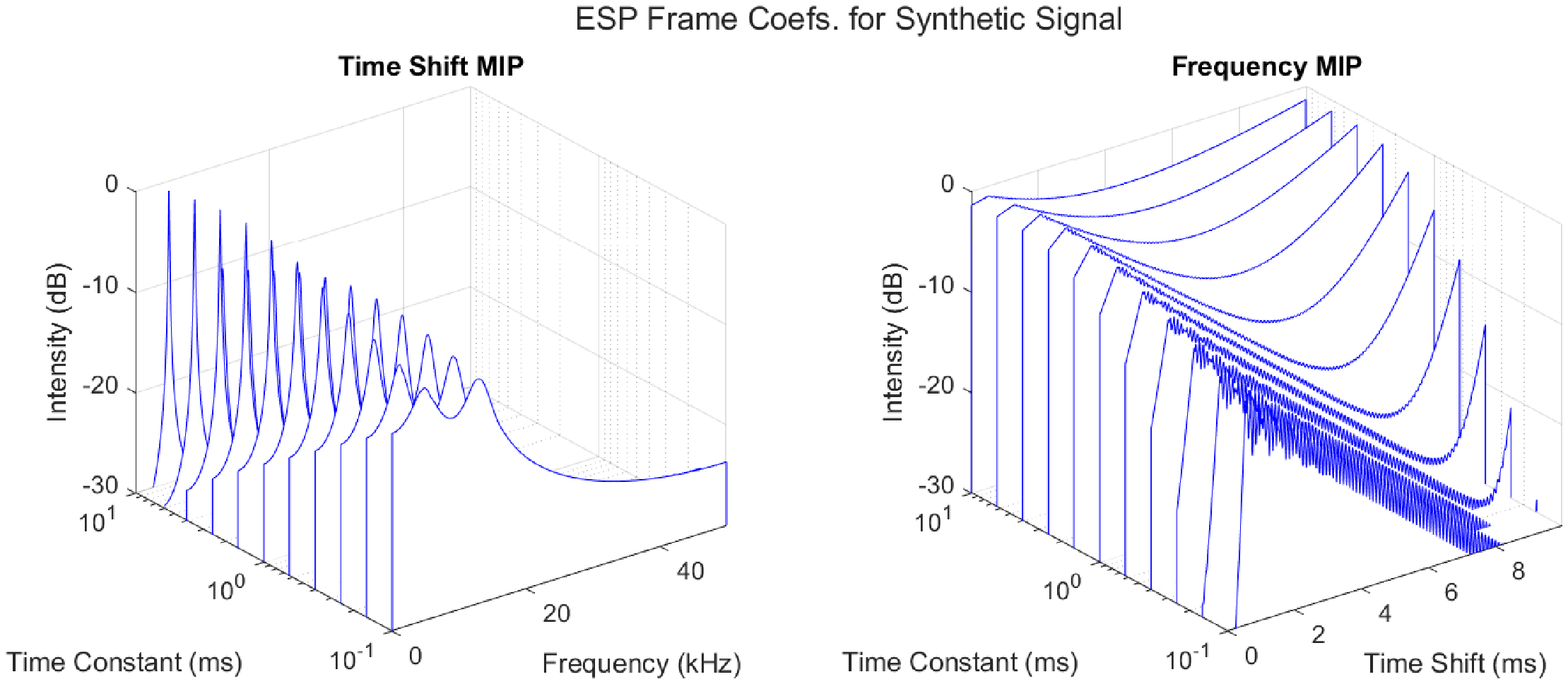}
\caption{Plots of \(\bh\) (top left), the STFT frame coefficients for \(\bh\) with a window length of 128 (top right), ESP frame coefficients time shift MIP separated by time constant (bottom left), and frequency MIP separated by time constant (bottom right).   All intensities are plotted in dB relative to the maximum coefficient amplitude. The estimated time constants for the 5kHz and 13kHz peaks are 2.52ms and 0.633ms, respectively.}
\label{fig:h}
\end{figure}

\subsubsection*{BPD on Synthetic Data}
\label{sec:denoising-synthetic-bpd}
Figure~\ref{fig:rh} shows the BPD coefficient MIPs and corresponding reconstruction of a noisy signal \(\bh_N\). The noisy signal was generated using an SNR of 10dB, and 1000 iterations of BPD were computed with \(\lambda = 0.1 \lambda_{\max}\). The computed sparse coefficient vector has 1596 nonzero ESP frame coefficients with a sparsity of 99.98\%, all clustered around the correct frequency, time shift, and time constant parameters.  The relative error between the regularized reconstructed signal and the pure signal \(\bh\) is 16.2\% with a reconstructed SNR of 15.8dB, a 5.8dB gain from the initial SNR.

The top subplot of Figure \ref{fig:rh} reveals that the reconstructed signal has less additive noise, particularly later on in the time series. The frame coefficients in Figure \ref{fig:rh} are less tidy than the regularized coefficients shown in Figure \ref{fig:bp-synth}.  This is due to the addition of Gaussian noise, which introduces signals that ``start'' at all times in the signal interval (in addition to the mismatch between signal and frame envelopes parameters).

\begin{figure}
\centering
\includegraphics[width=1.58in]{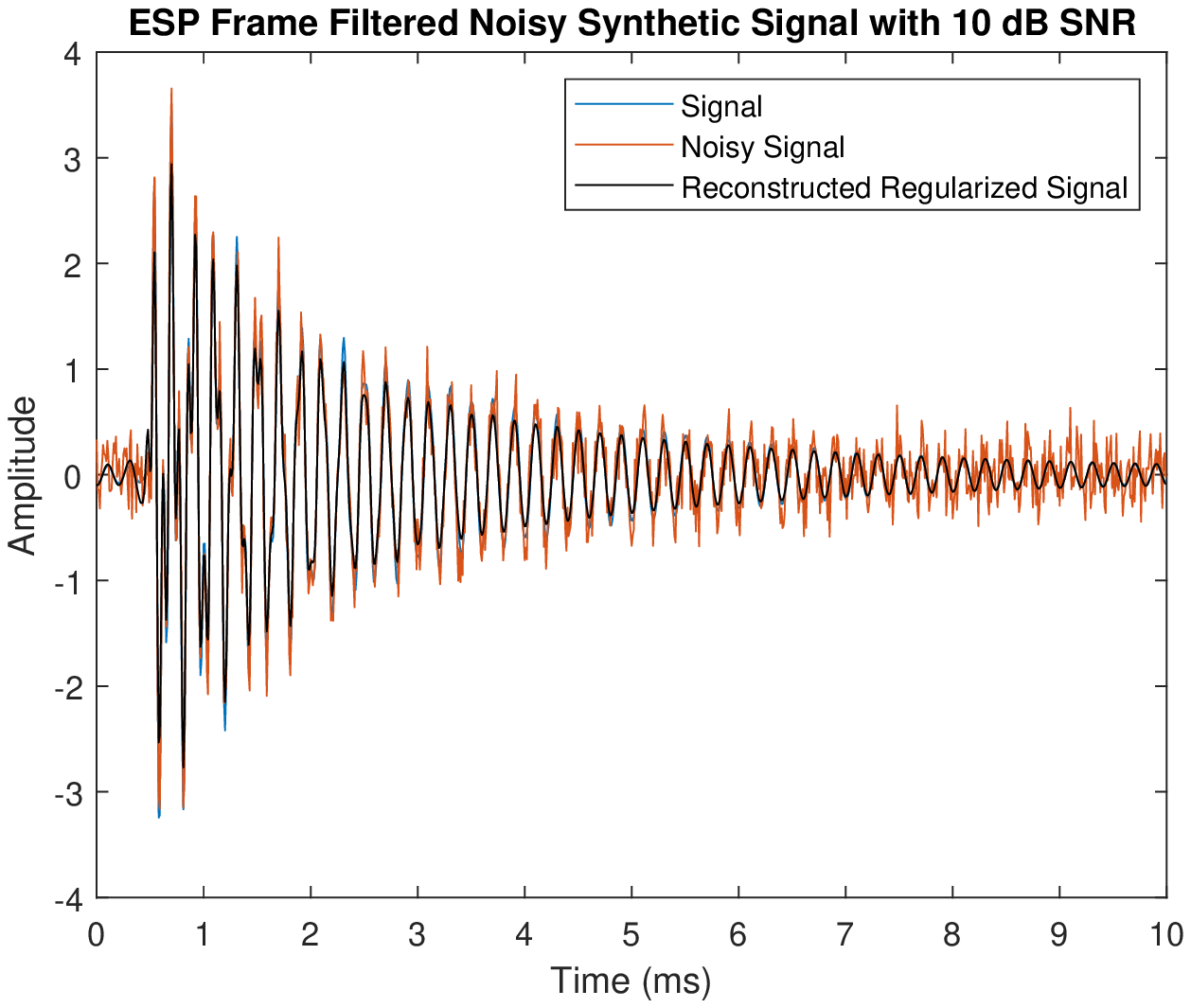}

\vspace{.1in}
\includegraphics[width=3.2in]{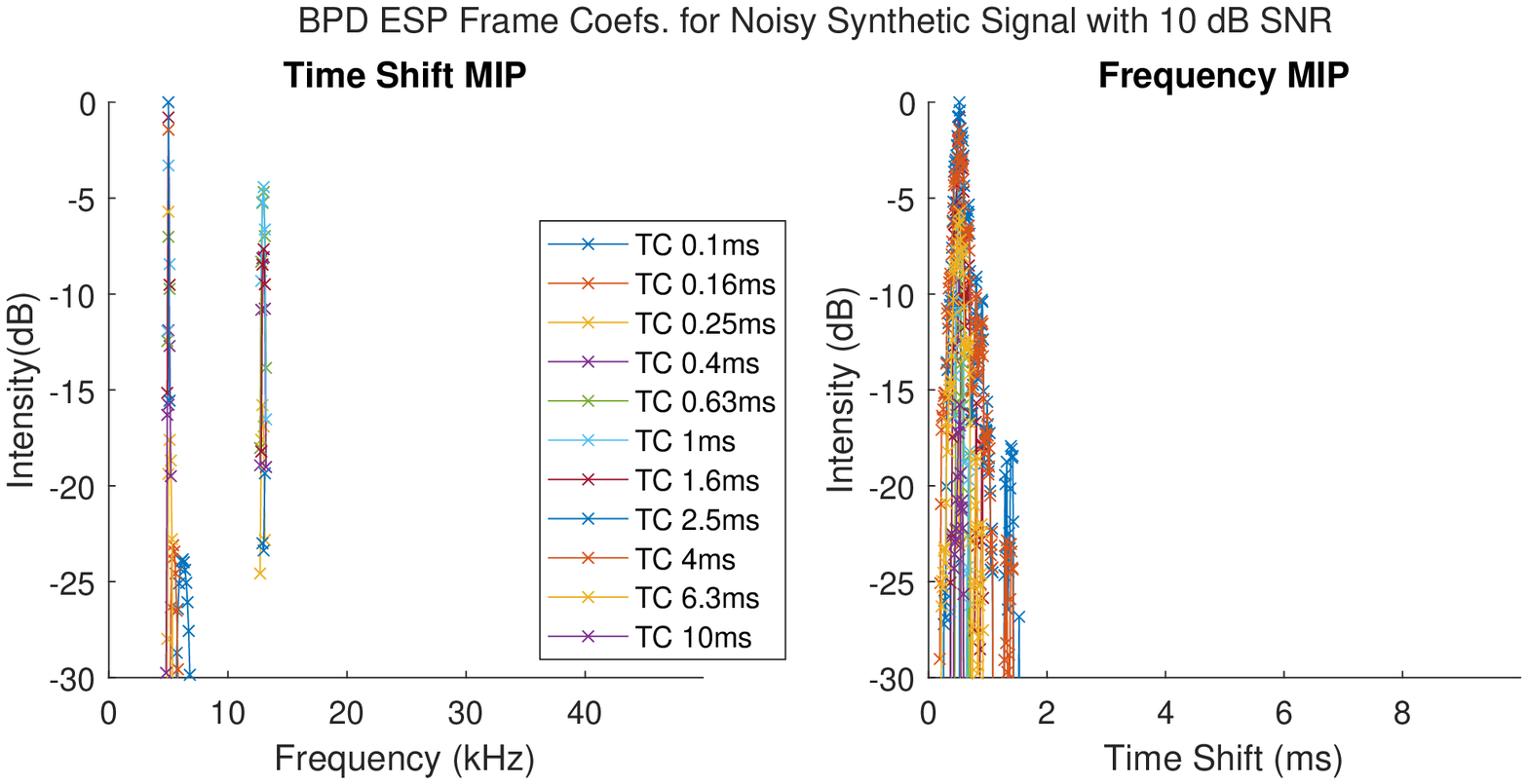}
\caption{Plots of \(\bh\), \(\bh_N\) with SNR 10dB, and the signal reconstructed from the BPD solution (top), ESP frame coefficient time shift MIP (bottom left), and frequency MIP (bottom right).   The regularization is shown at iteration 1000 with \(\lambda = 0.1\lambda_{\max}\). All intensities are plotted in dB relative to the maximum ESP frame coefficient amplitude.  The ESP Frame coefficient sparsity is 99.98\% and the relative error between the reconstructed signal and \(\bh\) is 16.2\% (15.8dB reconstructed SNR). The estimated time constants for the 5kHz and 13kHz peaks are 2.48ms and 0.952ms, respectively}
\label{fig:rh}
\end{figure}

Using Algorithm \ref{alg:bpd} with the STFT frame in place of the ESP frame on \(\bh_N\) produces the coefficients shown in Figure \ref{fig:rhstft}. The \(L_1\)-parameter \(\lambda = 0.1\lambda_{\max}\) was set using the $\lambda_{\max}$ computed by the STFT, and 1000 iterations were used. The solution has 58 nonzero coefficients with a sparsity of 97.61\%. The relative error between the regularized reconstructed signal and the pure signal is 24.9\% with a reconstructed SNR of 12.1dB. The increase in reconstructed SNR above the base signal SNR is 3dB smaller than the ESP frame case, and the reconstruction is visibly worse, particularly late in the time series.

\begin{figure}
\centering
\includegraphics[width=1.58in]{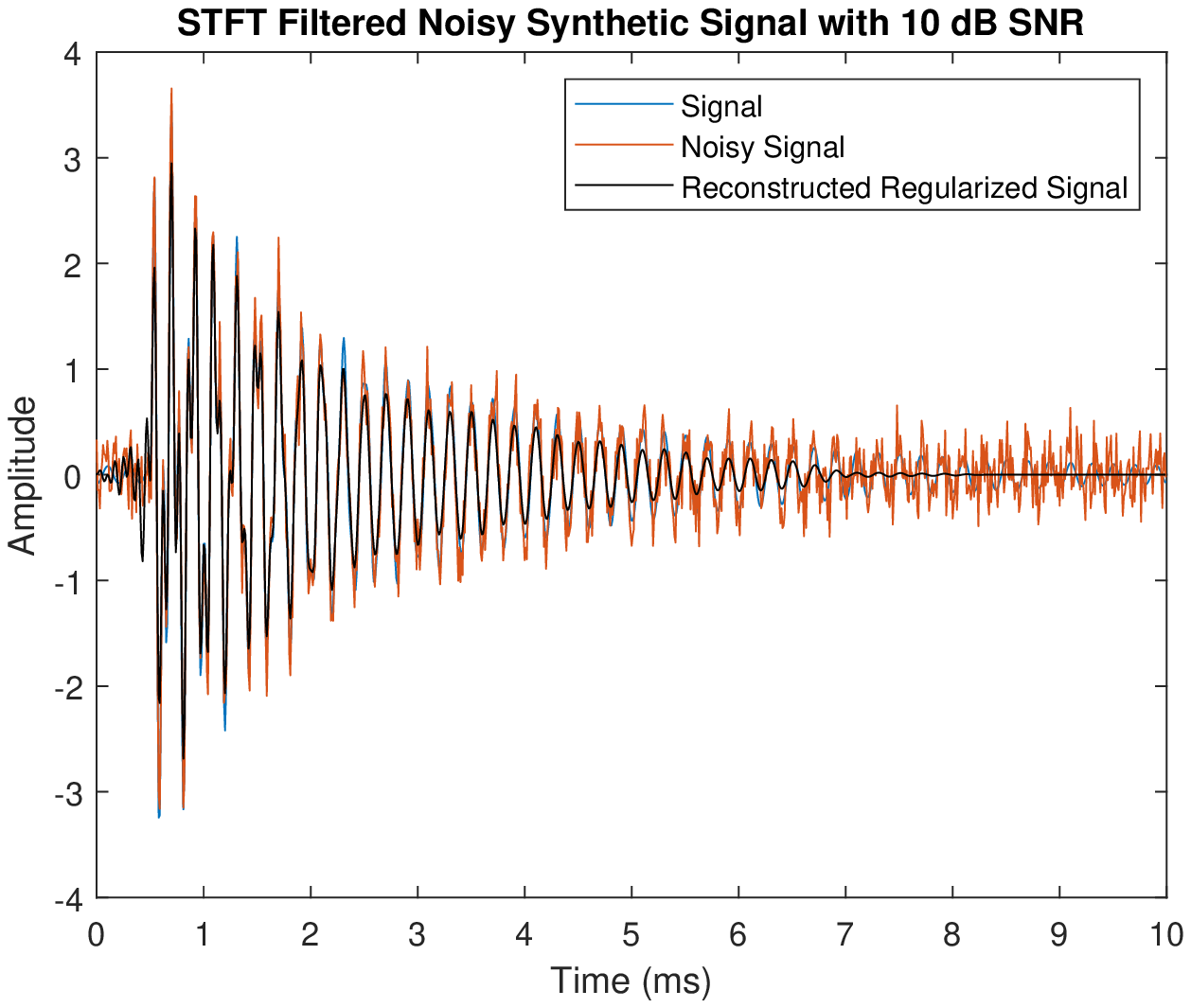}
\includegraphics[width=1.58in]{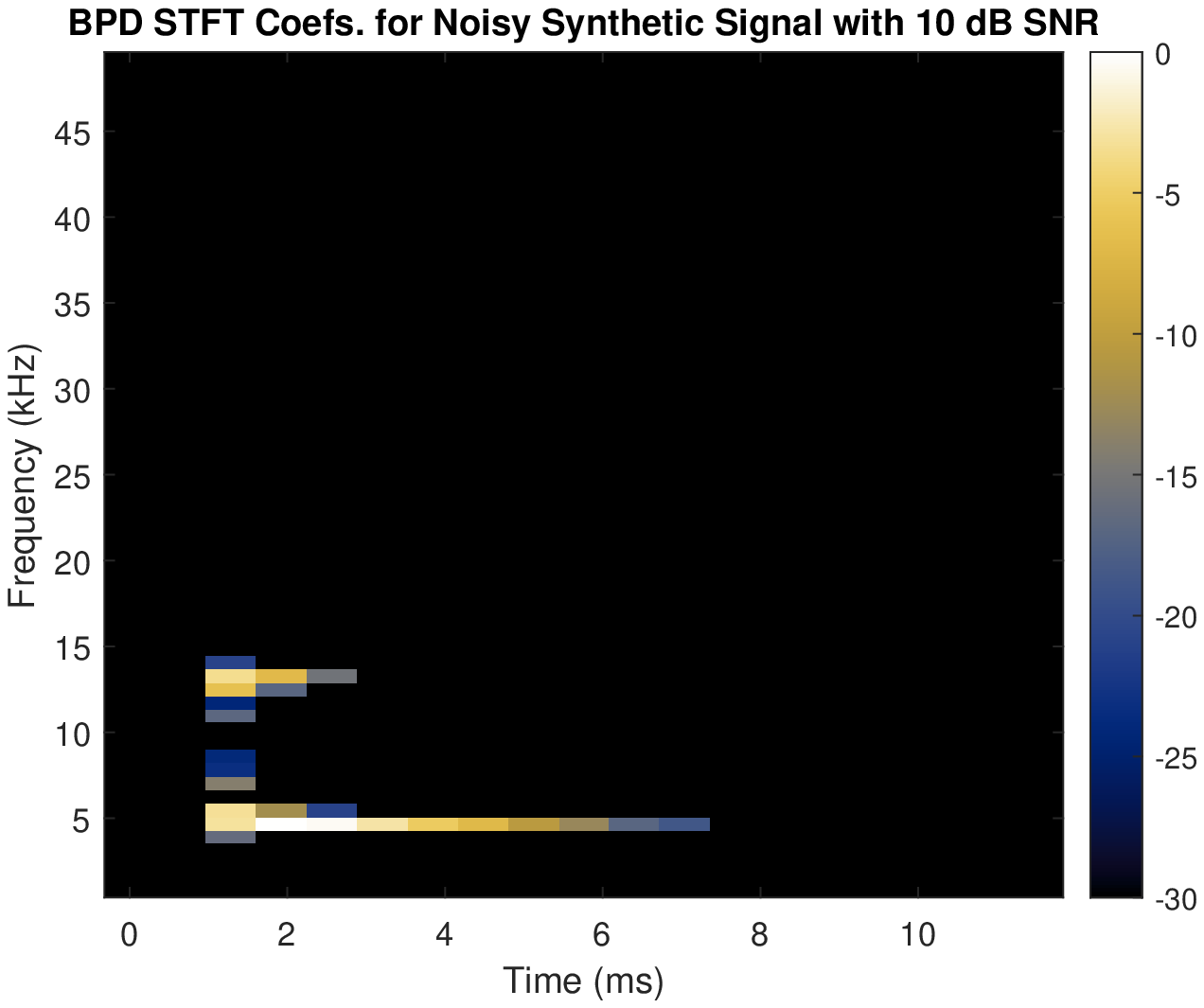}
\caption{Plots of \(\bh\), \(\bh_N\) with SNR 10dB and the reconstructed regularized signal (left) and scaled color plot of the STFT frame coefficients for \(\bh_N\) with a window length of 128 (right).  The coefficients have been regularized using BPD with \(\lambda = 0.1\lambda_{\max}\) and 1000 iterations.  The projected intensities are plotted in dB relative to the maximum frame coefficient amplitude.  The STFT frame coefficient sparsity is 97.61\% and the relative error between the reconstructed signal and \(\bh\) is 24.9\% (12.1dB reconstructed SNR).  }
\label{fig:rhstft}
\end{figure}

\subsection{Experimental Data}
\label{sec:denoising-experimental}

The datasets for this section were collected by tapping a steel cylinder and a wooden cylinder with an impact hammer and recording the emitted sound. The recordings were taken in an anechoic chamber with a sampling frequency of \(f_s = 16\) kHz, after downsampling. The hammer was outfitted with a force sensor which triggered the time series to start recording at the moment the hammer impacted the cylinder. Overall 2,200 samples were taken over 0.55 seconds for each tap. The normalized spectral power densities for both cylinders are displayed in Figure~\ref{fig:spec}. The densities were computed using the first 250ms of the corresponding time series.

\begin{figure}
\centering
\includegraphics[width=1.58in]{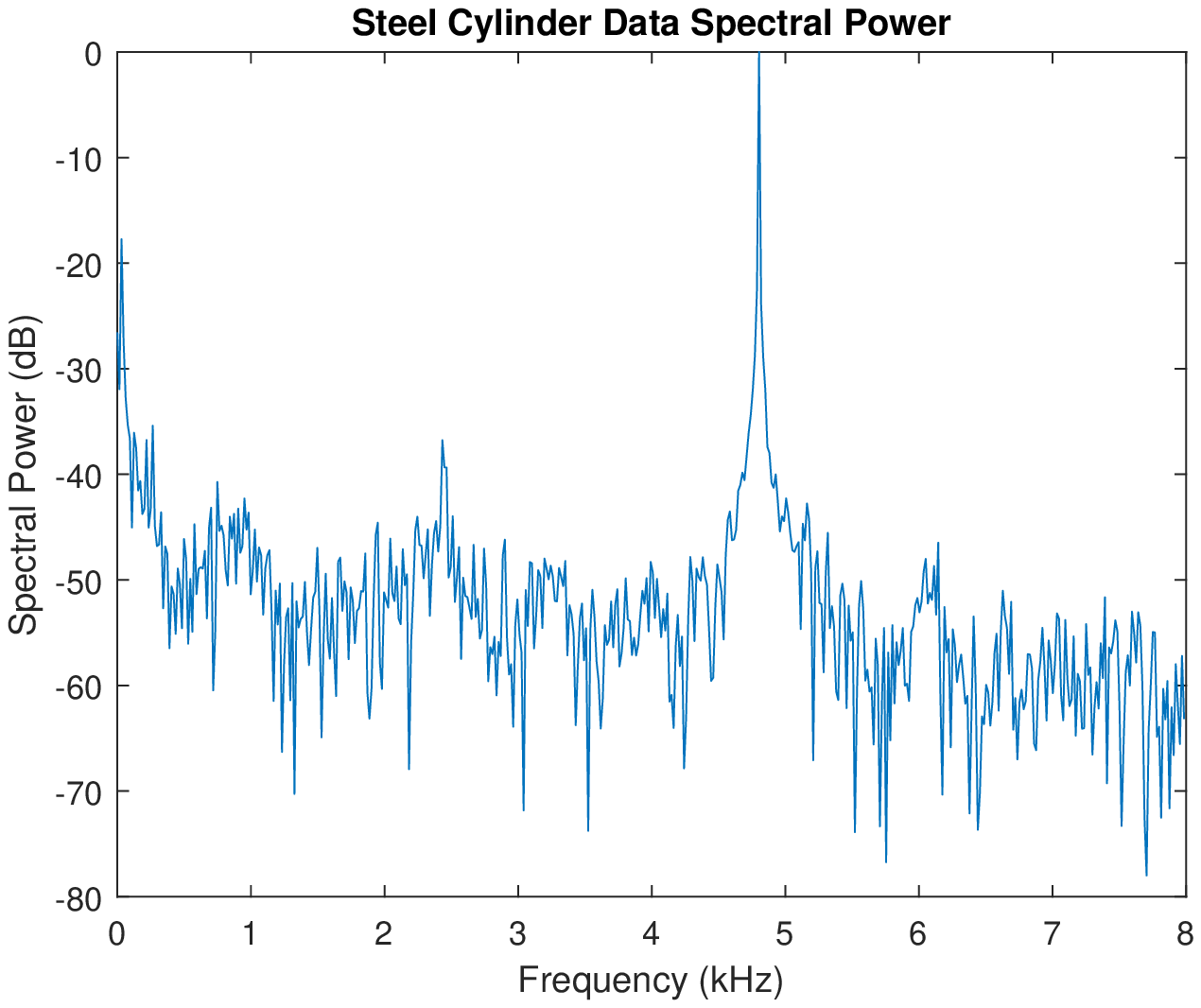}
\includegraphics[width=1.58in]{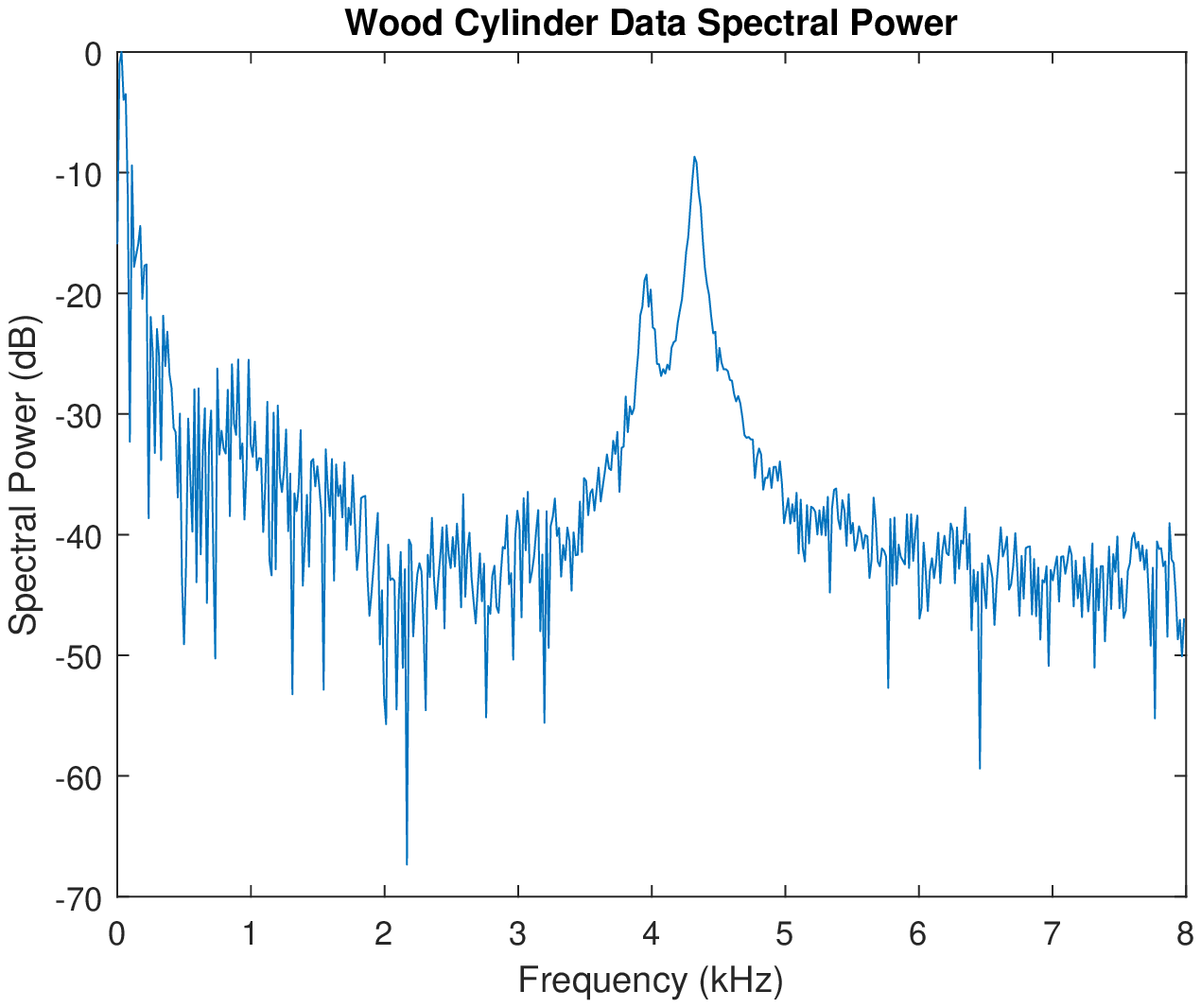}
\caption{Spectral power density for the steel cylinder (left) and wood cylinder (right) datasets. The primary peak for the steel cylinder dataset occurs at 4.83kHz.  The primary peak for the wood cylinder dataset occurs at 4.35kHz and a secondary peak occurs at 3.94kHz.  }
\label{fig:spec}
\end{figure}

In order to avoid any transient effects of the impact, the signal used for analysis is \(N = 1024\) samples or \(T = 70.25\)ms long, starting at the 100th sample. Since the signal is decaying, the same exponential envelopes \(e_l\) defined in \eqref{eq:exp-frame} are appropriate for this analysis, with time constants
\[
\tau_l = 10^{l/4-3}\ \text{for }l = 0, \ldots, 10,
\]
so \(L = 11\) and the \(\tau_l\) range from 1ms to 316ms. About 11.5 million ESP frame vectors \(\ba_{l,k,m}\) follow via \eqref{eq:exp-frame}, and are parameterized by
\begin{itemize}
\item the time constant \(\tau_l\) (1ms - 316ms),
\item the frequency \(f_k\) (-8kHz - 8kHz),
\item and the circular time shift \(t_m\) (0ms - 70.25ms).
\end{itemize}


The unregularized steel and wood cylinder ESP coefficients are displayed in Figure \ref{fig:tap}.  The main response in the 4-5kHz frequency range is clearly visible in both sets of coefficients, as is the secondary frequency response in the wood cylinder time series.  However, the coefficients are spread across the entire time shift axis, which is in conflict with the physical setup of the experiment, and there is no clear separation in the time constant axis.

\begin{figure}
\centering
\includegraphics[width=3.2in]{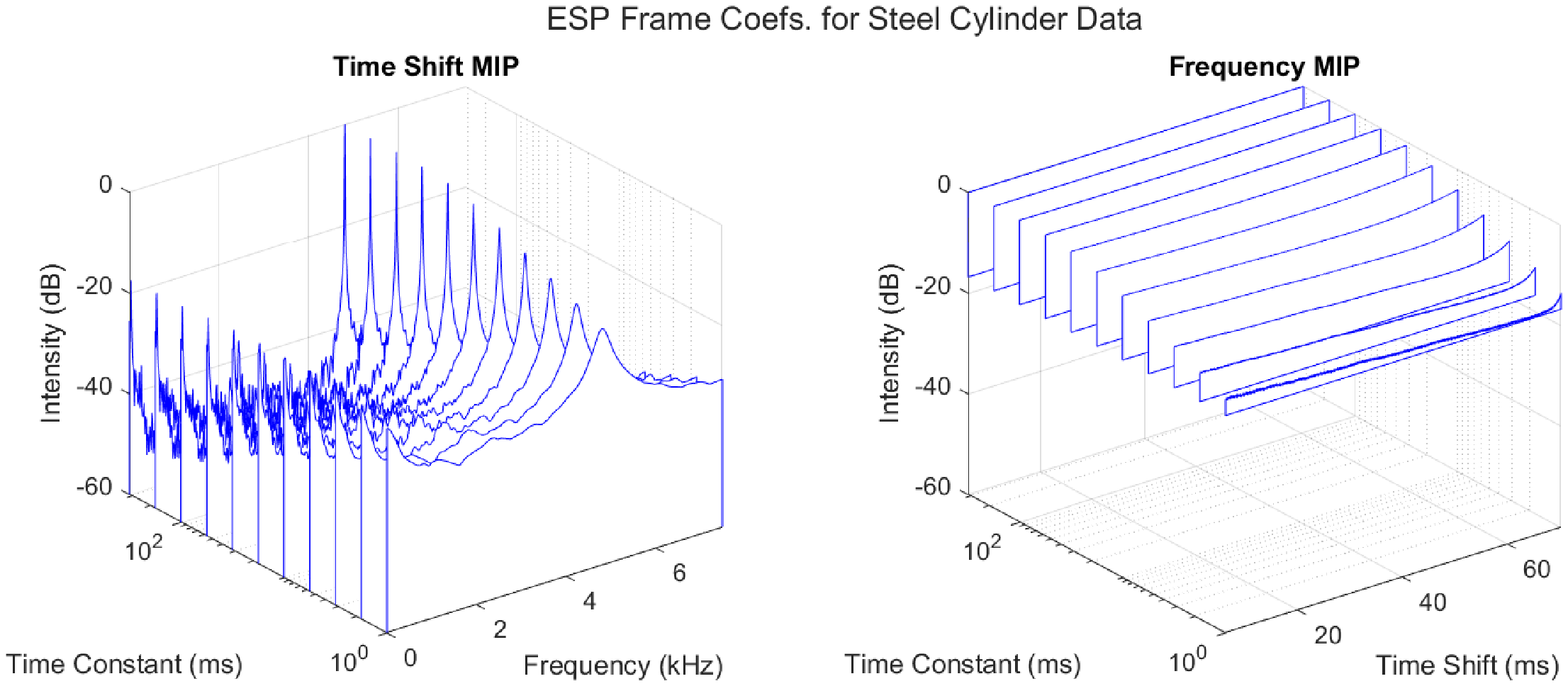}

\vspace{.1in}
\includegraphics[width=3.2in]{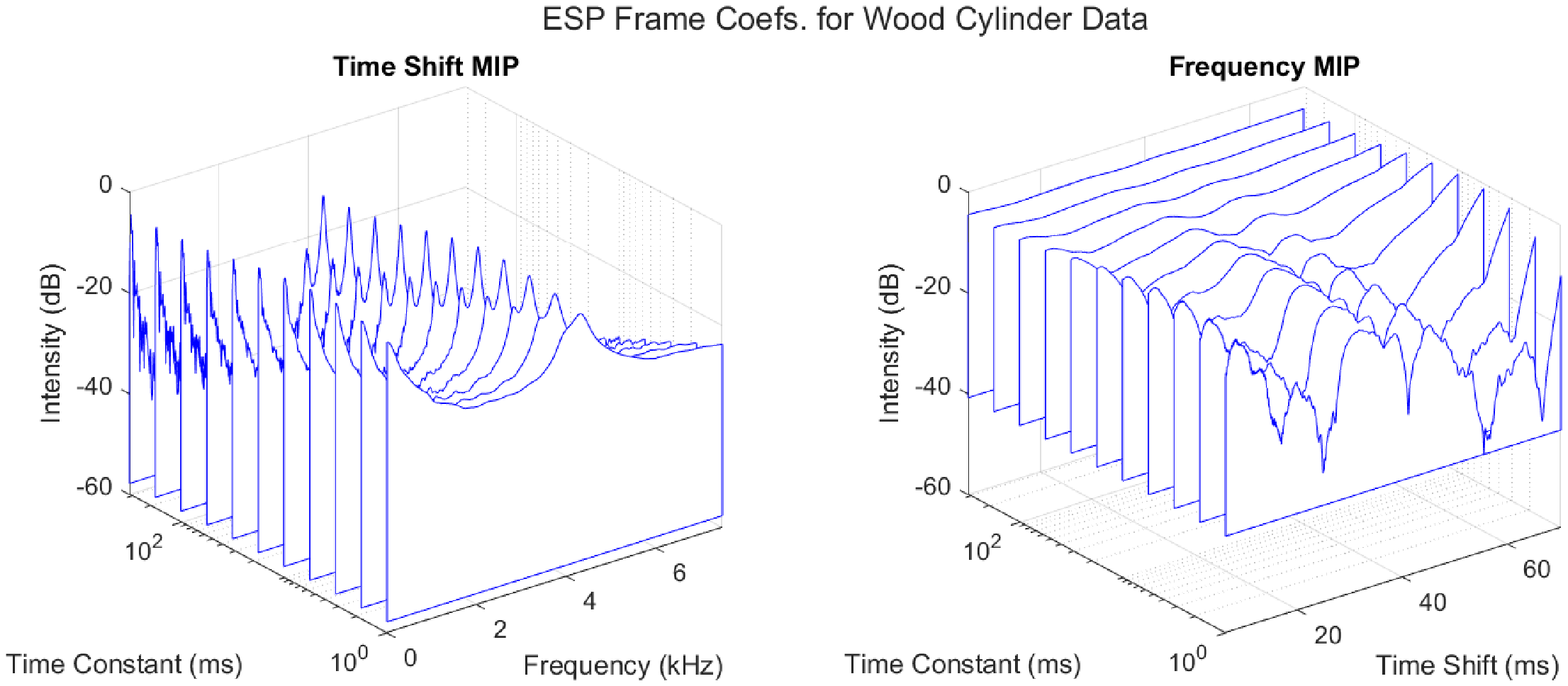}
\caption{Frame coefficient time shift MIP separated by time constant (left) and frequency MIP separated by time constant (right) for steel cylinder (top) and wood cylinder (bottom) tap data.  Intensities are shown on a dB scale relative to the maximum frame coefficient amplitude. The resonant peaks are located at 4.80kHz with an estimated time constant of 177.9ms for the steel cylinder and at 4.32kHz with an estimated time constant of 5.66ms for the wood cylinder.}
\label{fig:tap}
\end{figure}

The unregularized STFT frame coefficients are shown in Figure \ref{fig:tapstft}.  The primary frequency responses are clearly visible, and the time constant for the wood cylinder is apparently shorter than that for the steel cylinder (as expected). The signal power is much more spread out across time when compared to the synthetic signals in previous sections.

\begin{figure}
\centering
\includegraphics[width=1.58in]{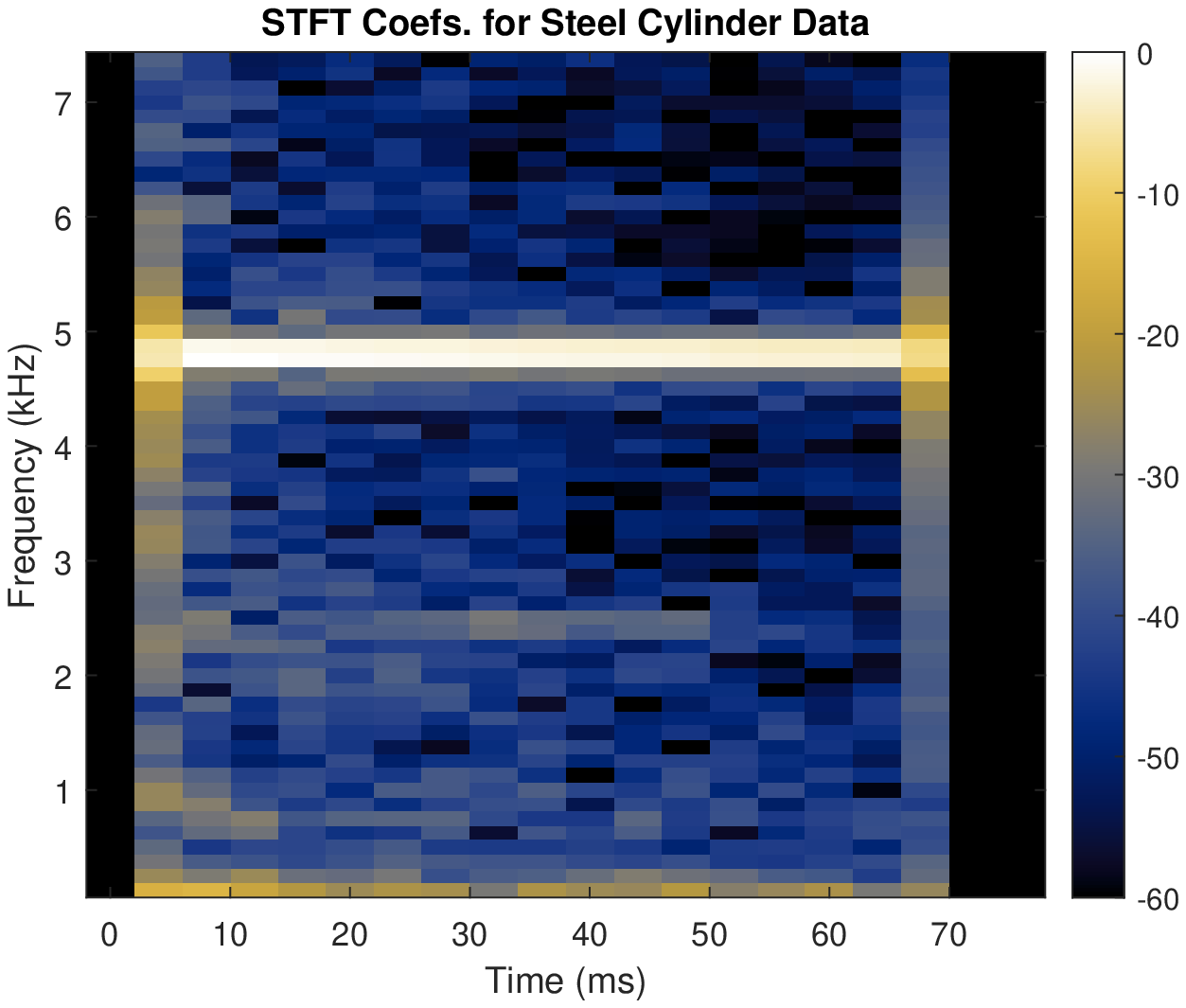}
\includegraphics[width=1.58in]{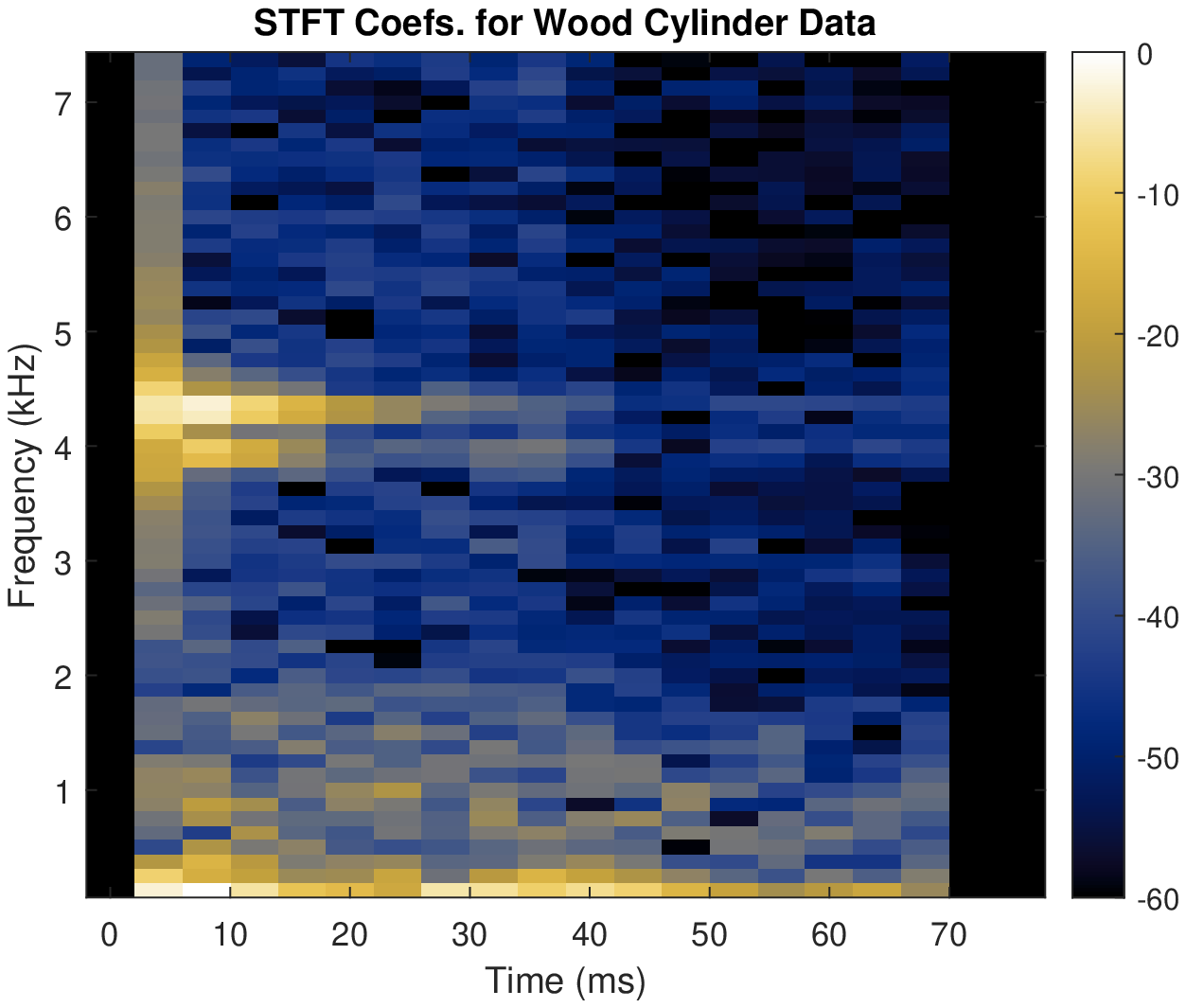}
\caption{Scaled color plots of the STFT frame coefficients for steel cylinder (left) and wood cylinder (right) tap data.  Intensities are shown on a dB scale relative to the maximum frame coefficient amplitude.}
\label{fig:tapstft}
\end{figure}

\subsubsection*{BPD on Experimental Data}
\label{sec:denoising-exp-bpd}

For this section, a noisy signal \(\bh_N\) is generated by adding white Gaussian noise to each experimental time series at 10dB SNR. Figure \ref{fig:rtap} shows the results of BPD applied to this experimental data, using \(1000\) iterations and \(\lambda = 0.1\lambda_{\max}\) as before. The resulting coefficients for the steel cylinder data have a relative reconstruction error of 17.6\% with a reconstructed SNR of 15.1dB, a 5.1dB gain. The regularized coefficients have a sparsity of 99.95\% with 5745 nonzero coefficients. The coefficients for the wood cylinder data have a relative reconstruction error of 27.6\% with a reconstructed SNR of 11.2dB.  The regularized coefficients have a sparsity of 99.97\% with 2998 nonzero coefficients. The primary frequency responses for both objects are visible, as well as the secondary frequency response for the wood cylinder, but the frame coefficients are spread across the entire time shift axis, which is somewhat inconsistent with our expectations for the physics of the experiment and may represent intrinsic (non-Gaussian) experimental noise.

\begin{figure}
\centering
\includegraphics[width=1.58in]{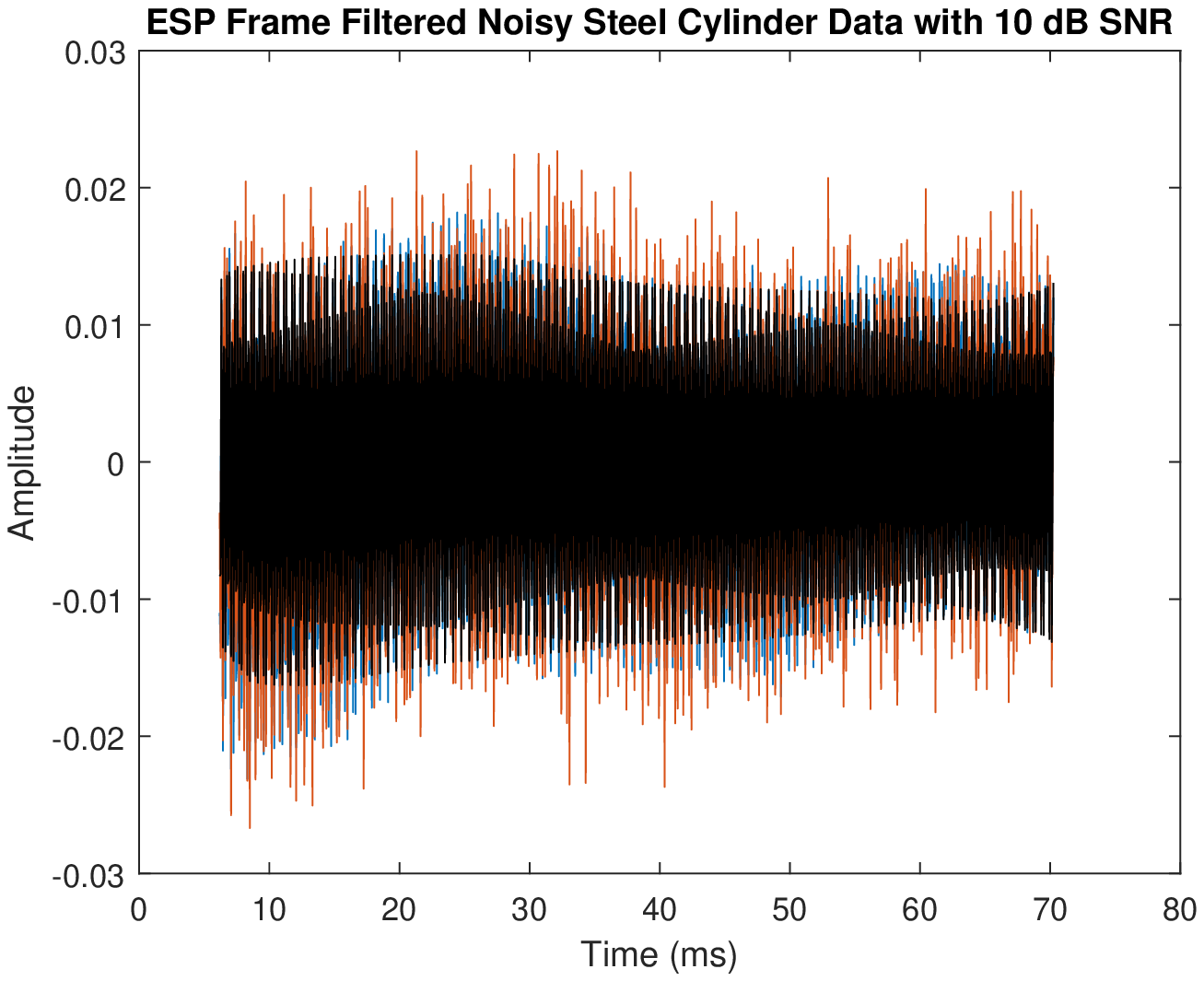}
\includegraphics[width=1.58in]{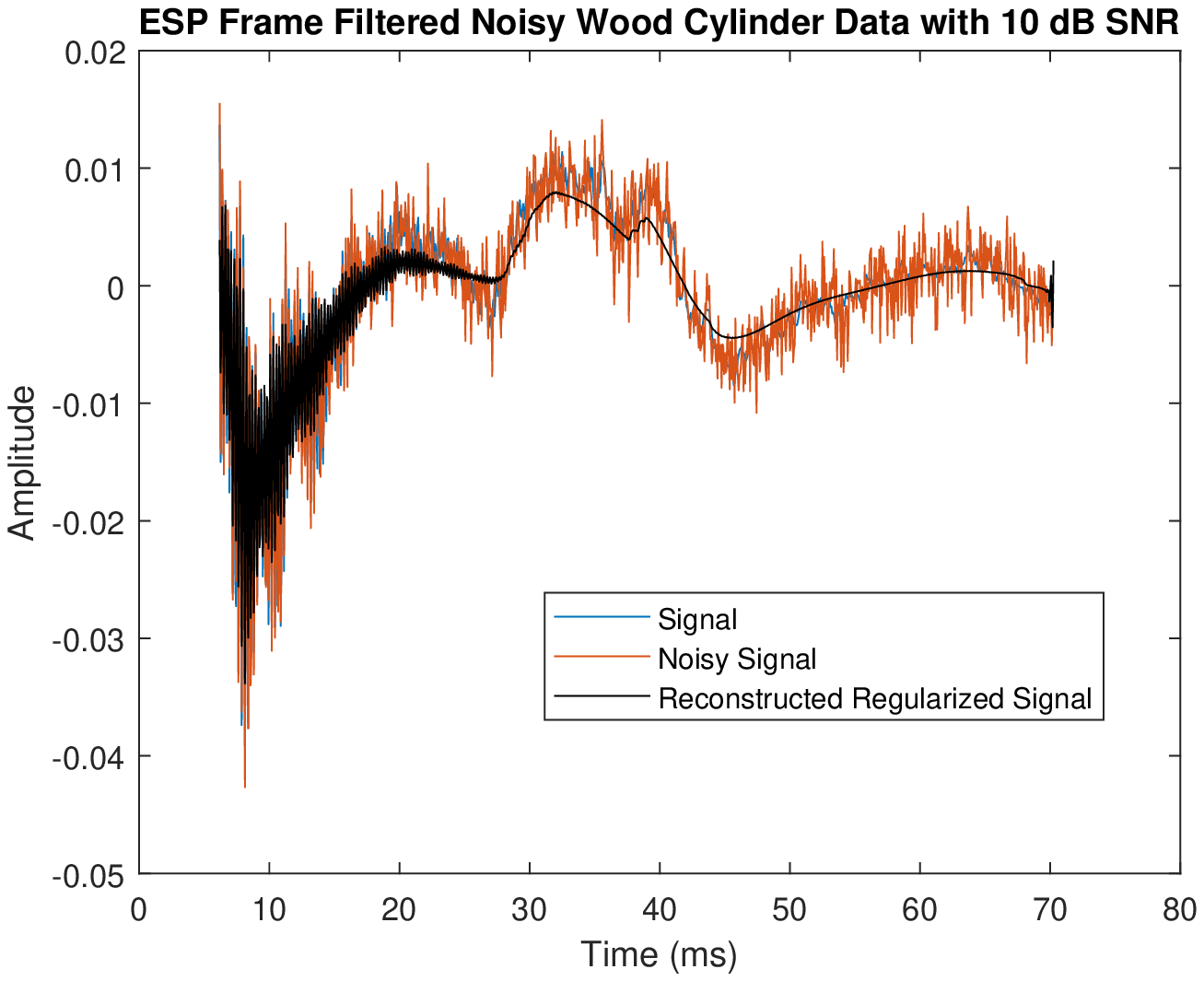}

\vspace{.1in}
\includegraphics[width=3.2in]{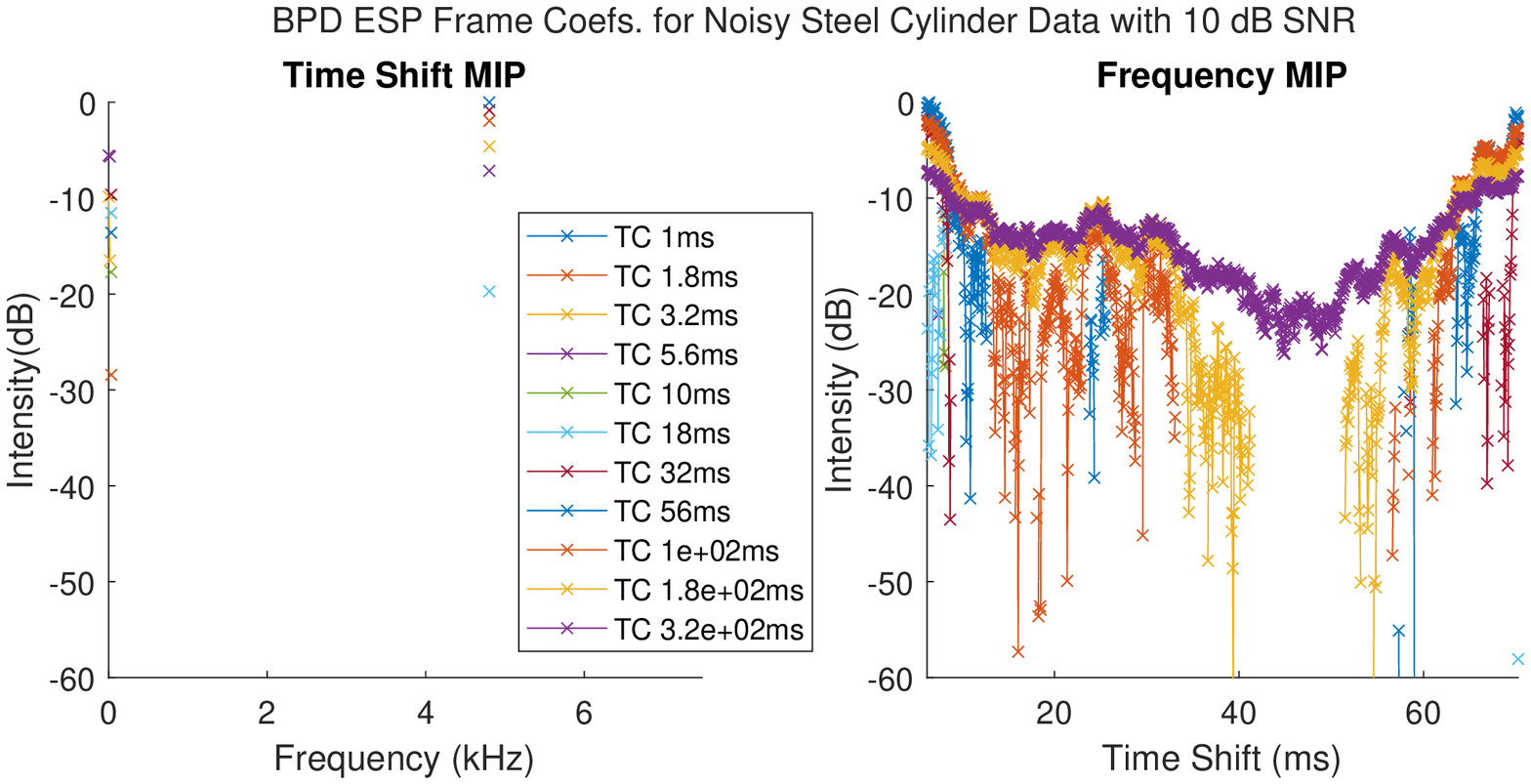}

\vspace{.1in}
\includegraphics[width=3.2in]{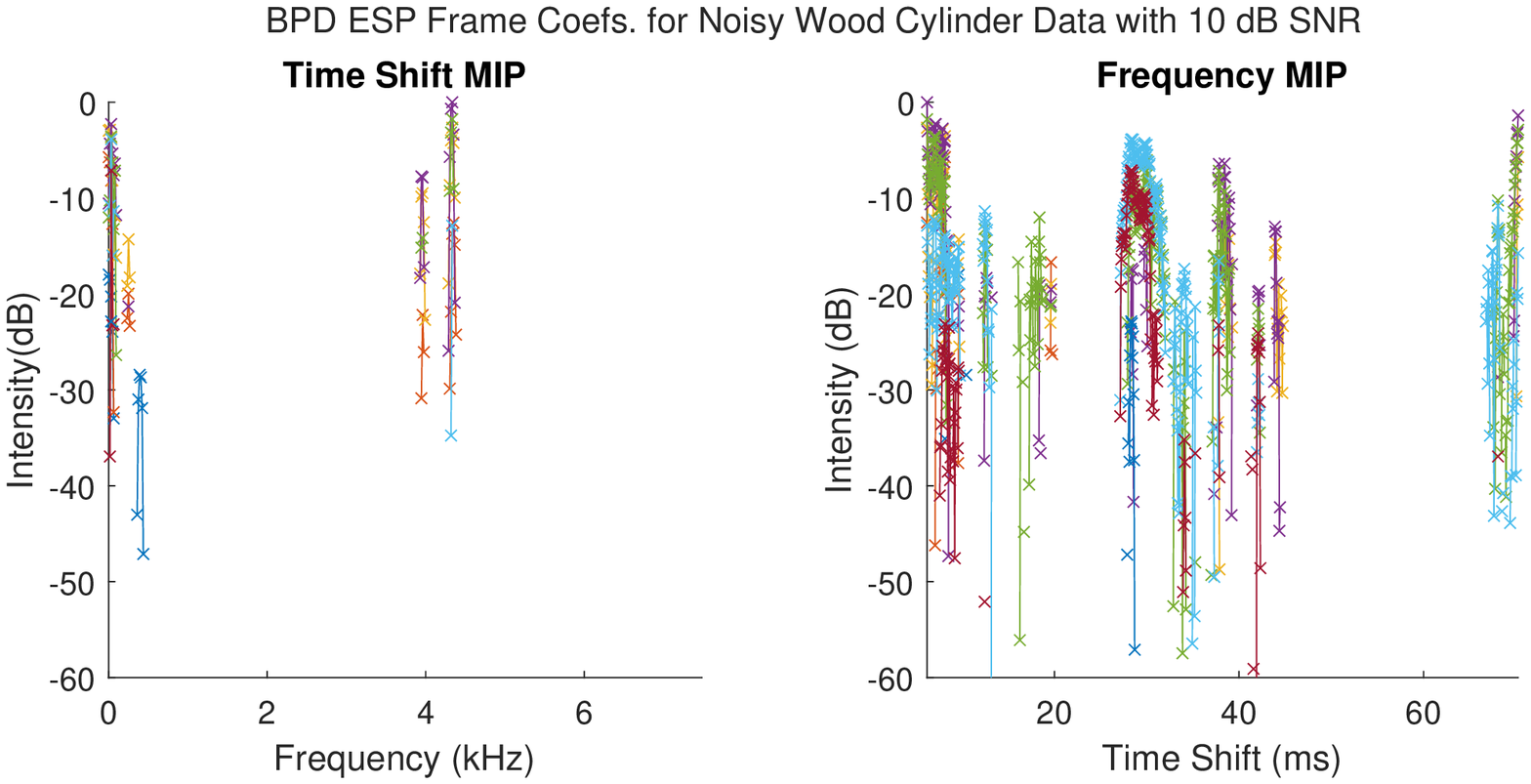}

\caption{Plots of \(\bh\), \(\bh_N\) with SNR 10dB and the reconstructed regularized signal (top left and top right), frame coefficient time shift MIP separated by time constant (middle left and bottom left) and frequency MIP separated by time constant (middle right and bottom right) for steel cylinder (top left and middle) and wood cylinder (top right and bottom) tap data after \(1000\) iterations with \(\lambda = 0.1\lambda_{\max}\).  Intensities are shown on a dB scale relative to the maximum frame coefficient amplitude. The coefficients for the steel cylinder data have a relative reconstruction error of 17.6\% (15.1dB reconstructed SNR) and a sparsity of 99.95\% and the peak frequency occurs at 4.80kHz with an estimated time constant of 54.95ms.  The coefficients for the wood cylinder data have a relative reconstruction error of 27.6\% (11.2dB reconstructed SNR) and a sparsity of 99.97\% and the peak frequency occurs at 4.33kHz with an estimated time constant of 5.72ms. }
\label{fig:rtap}
\end{figure}

Since the regularization process prioritizes sparsity, it tends to filter out both the noise and low magnitude resonant components. This can be seen for both data sets. In the frequency MIP, all of the frequency components are either clustered around the main frequency peaks or are very low frequency, so that quieter frequency responses (such as the small peak at 2.43kHz in the steel cylinder data) have been filtered out.
A promising observation is that the reconstructed signal for the wood cylinder data seems to have very few high frequency components after 30ms. This is consistent with the physics of the experiment since the wood cylinder resonant frequencies have short time constants.

The regularized STFT frame coefficients are displayed in Figure \ref{fig:rtapstft}. They are computing using \(1000\) iterations of BPD and \(\lambda  = 0.1\lambda_{\max}\). The STFT coefficients are much sparser than the unregularized STFT coefficients, with very little power outside the primary frequency responses. The steel cylinder coefficients have a reconstruction error of 17.3\% with a reconstructed SNR of 15.2dB and a sparsity of 97.3\% with 68 nonzero coefficients.  The wood cylinder data has a reconstruction error of 38.6\% with a reconstructed SNR of 8.2 dB, a 1.8 dB loss, and a sparsity of 99.1\% with 24 nonzero coefficients.

\begin{figure}
\centering
\includegraphics[width=1.58in]{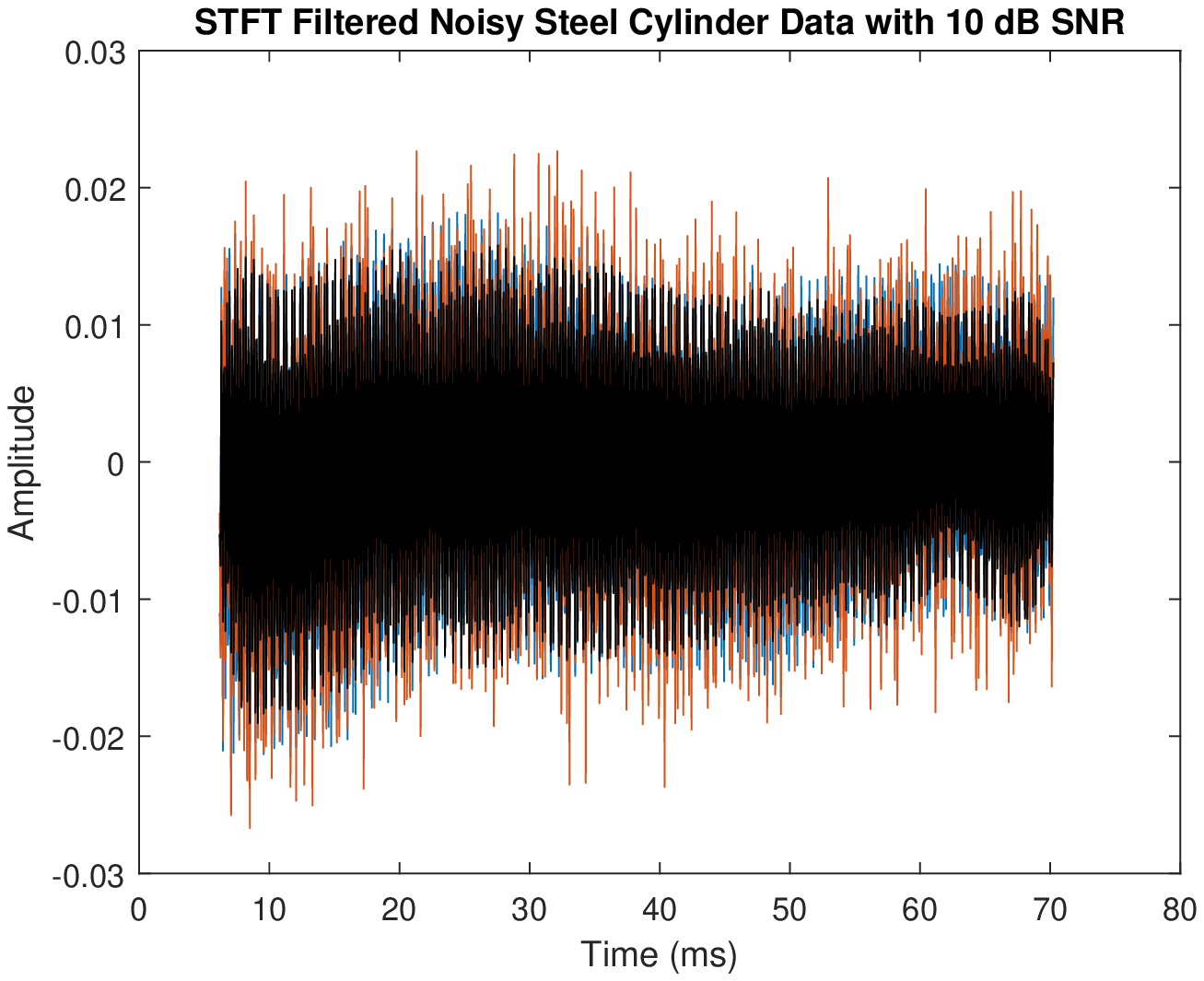}
\includegraphics[width=1.58in]{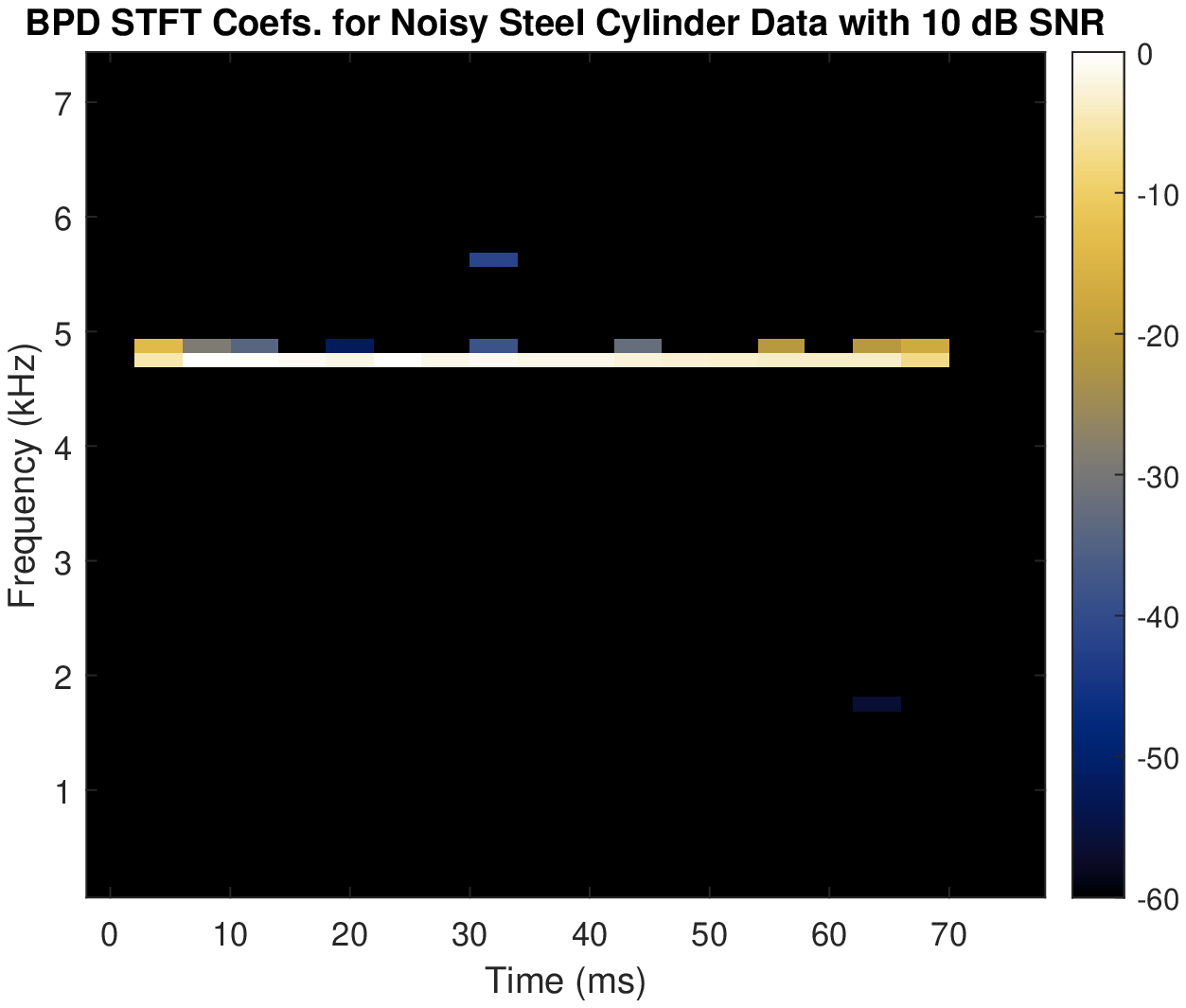}

\vspace{.1in}
\includegraphics[width=1.58in]{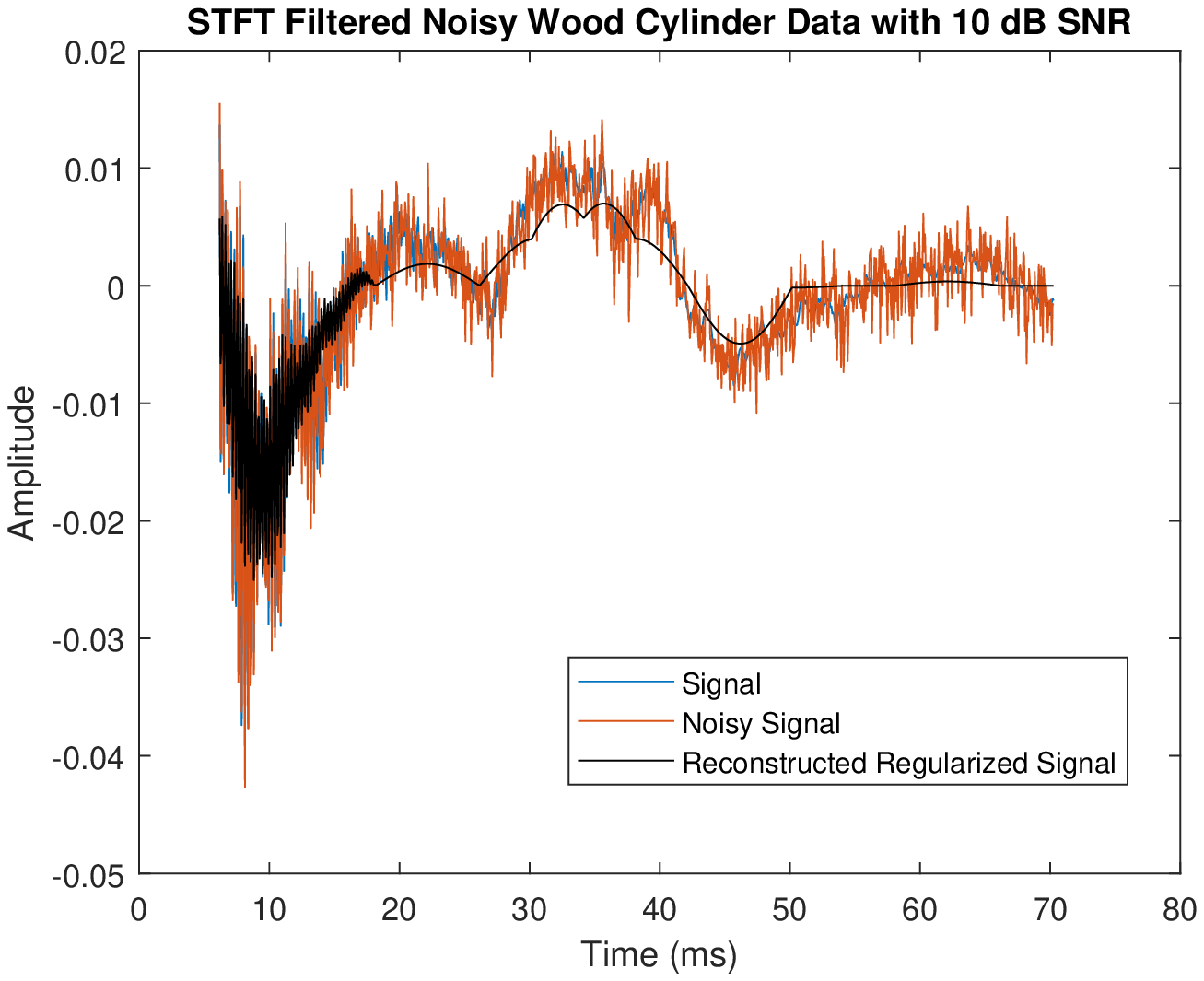}
\includegraphics[width=1.58in]{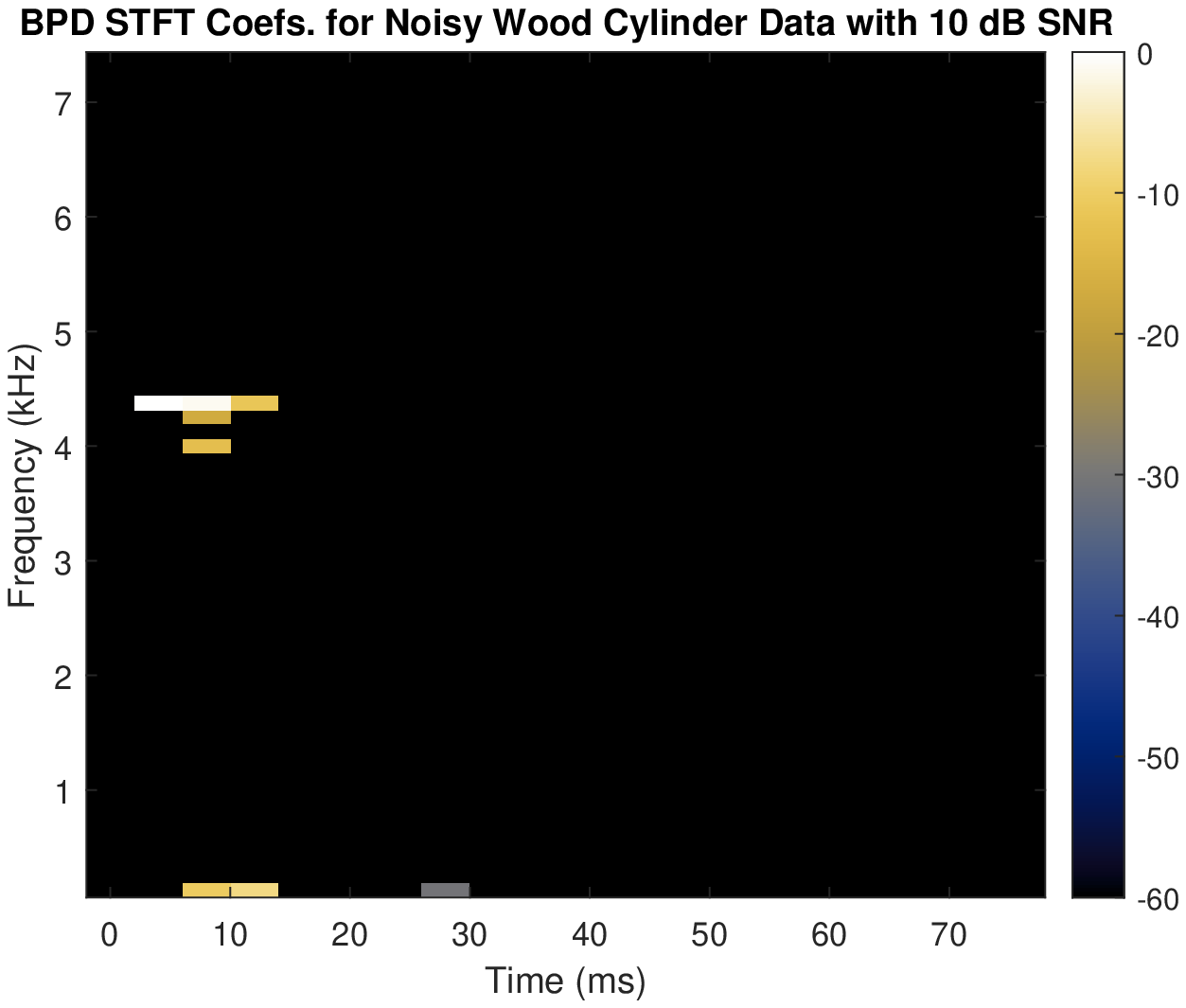}
\caption{Plots of \(\bh\), \(\bh_N\) with SNR 10dB and the reconstructed regularized signal (left) and scaled color plots of the STFT frame coefficients (right) for steel cylinder (top) and wood cylinder (bottom) tap data after \(1000\) iterations with \(\lambda = 0.1\lambda_{\max}\).  Intensities are shown on a dB scale relative to the maximum frame coefficient amplitude. The coefficients for the steel cylinder data have a relative reconstruction error of 17.3\% (15.23dB reconstructed SNR) and a sparsity of 97.3\%.  The coefficients for the wood cylinder data have a relative reconstruction error of 38.6\% (8.26dB reconstructed SNR) and a sparsity of 99.1\%.}
\label{fig:rtapstft}
\end{figure}

For the steel cylinder data, the STFT frame performance is comparable to that of the ESP frame, with the caveat that the actual number of nonzero coefficients is much smaller. This is less true for the wood cylinder data, where the reconstructed SNR goes from a small gain for the ESP frame to a small loss for the STFT. The STFT frame also visually does a poorer job of reconstructing the early time series (see bottom left plot in Figure~\ref{fig:rtapstft}).

\subsection{Denoising Analysis}
\label{sec:denoising-analysis}

In the previous sections $\lambda$ and SNR were fixed.  In order to illustrate a broader picture of the sparsity/reconstruction tradeoffs for ESP and STFT frames, respectively, Figure~\ref{fig:bpdperf} displays BPD results applied to variety of initial SNRs and \(\lambda\)s. The marker style denotes the algorithm, and colors denote a fixed SNR level (\(-30\)dB, \(-15\)dB, 0dB, 15dB, 30dB). A range of percentages of $\lambda_{\max}$ were computed for each algorithm. Each resulting coefficient vector is a point on the graph, where its position along the \(x\)-axis corresponds to its percentage of nonzero coefficients. Note $\lambda_{\max}$ is computed separately for each frame using~\eqref{eq:lambdamax}, and the \(y\)-axis is the reconstructed SNR.

\begin{figure}
\centering
\includegraphics[width=2.4in]{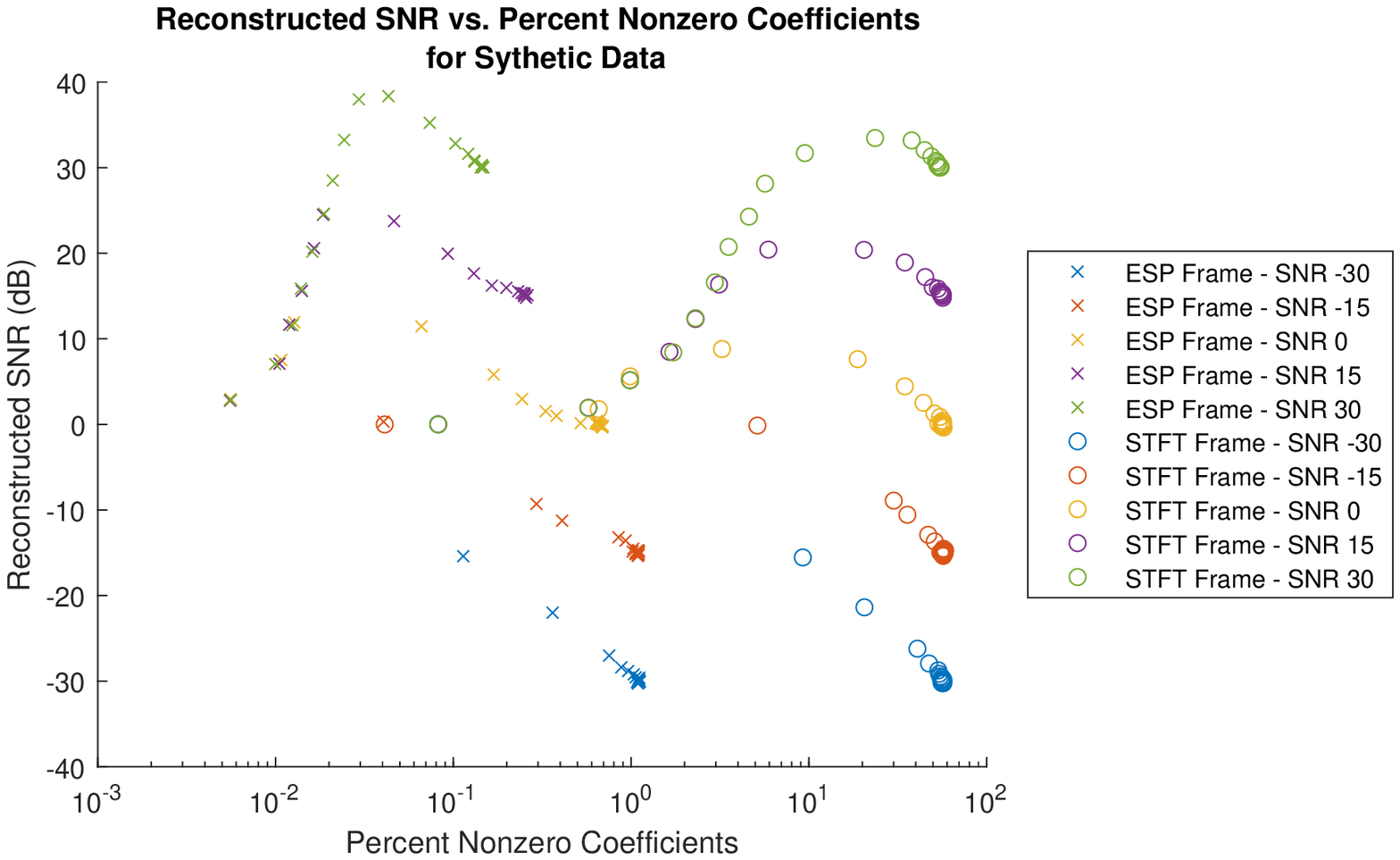}

\vspace{.1in}
\includegraphics[width=1.58in]{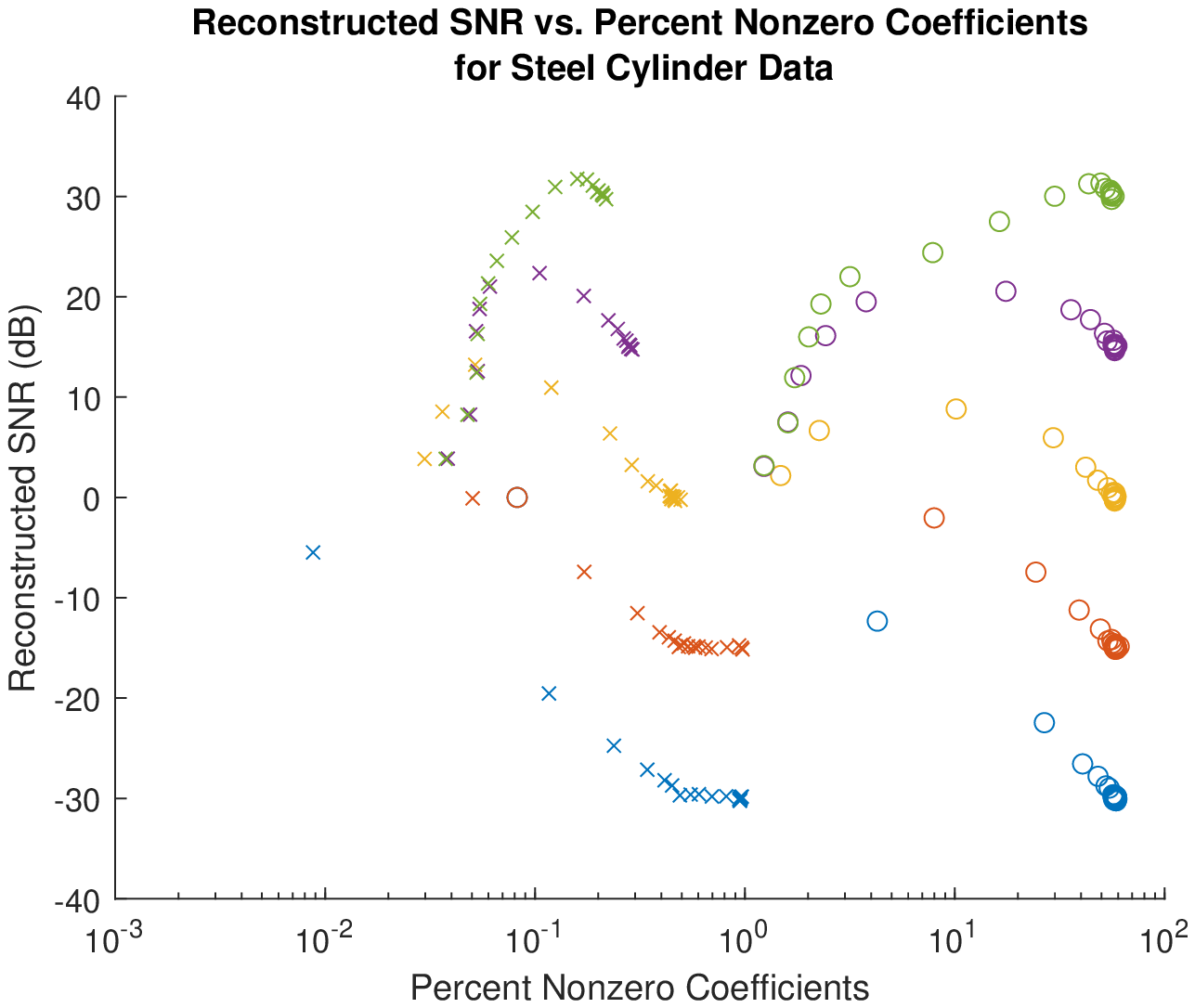}
\includegraphics[width=1.58in]{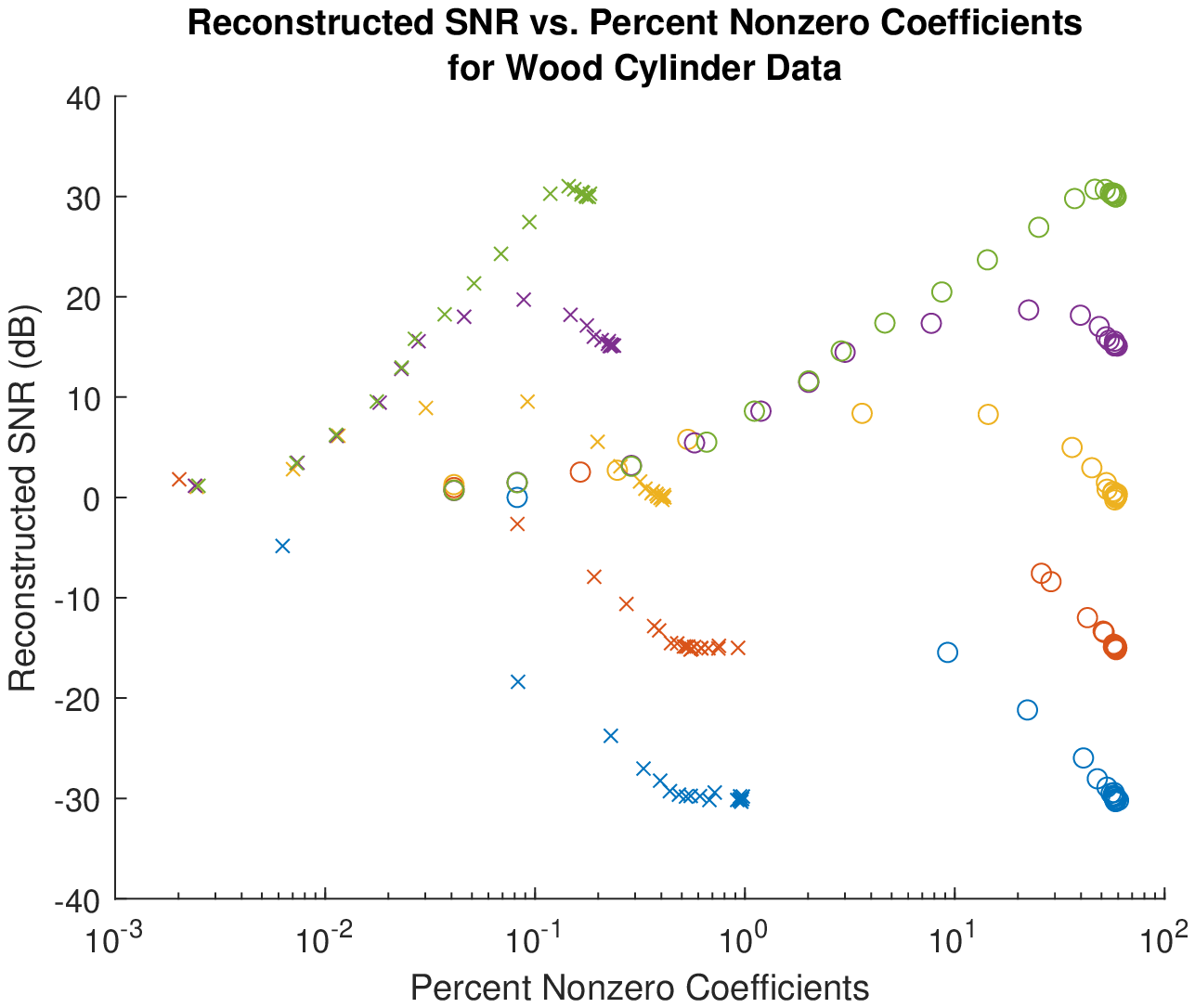}
\caption{Reconstructed SNR vs. percentage of nonzero coefficients for the synthetic signal (top), steel cylinder time series (bottom left) and wood cylinder time series (bottom right) using both an ESP frame and a STFT frame over SNR ranging from \(-30\)dB to 30dB. The BPD algorithm was applied to the time series with added Gaussian white noise at the specified initial SNR using \(1000\) iterations and \(\lambda\) ranging logarithmically from \(0.00001\lambda_{\max}\) to \(\lambda_{\max}\). }
\label{fig:bpdperf}
\end{figure}

In general, the STFT frame has a greater percentage of nonzero coefficients than the corresponding ESP frame coefficients, generally falling in the 50-60\% range. This is expected because the ESP frame is highly overdetermined and therefore has many more coefficients. For small \(\lambda\) the percentage of nonzero ESP frame coefficients ranges from approximately 15\% for large SNR to approximately 30\% for small SNR.

When the SNR of the noisy signal is less than zero, the reconstructed SNR increases and the number of nonzero coefficients decreases as \(\lambda\) grows until the BPD solution becomes zero.  This indicates the denoising process was unsuccessful.
Conversely, when the SNR of the noisy signal is greater than or equal to zero the reconstructed SNR increases as \(\lambda\) grows until it hits some maximum, and then decreases until it reaches zero. In this case, the regularization process is initially removing more noise power than signal power, resulting in an SNR gain, until it hits some optimal \(\lambda\) that depends on the frame and the signal.  After this optimal value, further increases in \(\lambda\) cause the regularization process to reduce both noise and signal power until the BPD solution becomes zero at \(\lambda_{\max}\).

The gain of this optimal reconstructed SNR over the SNR of the noisy signal is indicative of the frame's ability to denoise and is shown in Table \ref{tab:snr}.  The key takeaway of Table \ref{tab:snr} is that the optimal SNR gain is consistently larger for the ESP frame than it is for the STFT frame.  This is true across all three signals for the range of noise SNR's which produced successful denoising.  This maximum SNR gain is greater than was necessarily indicated by the examples in Sections \ref{sec:denoising-synthetic-bpd} and \ref{sec:denoising-exp-bpd}, particularly in the case of the wood cylinder time series.

\begin{table}
\centering
\begin{tabular}{c | c c c c c c}
~ &  0dB SNR & 15dB SNR & 30dB SNR\\ \hline
Synthetic ESP Gain&  {\bf11.9} &  {\bf9.5} & {\bf 8.5} \\
Synthetic STFT Gain &  8.8 &  5.4  & 3.4 \\ \hline
Steel ESP Gain &  {\bf13.2}  & {\bf7.4}  & {\bf1.8} \\
Steel STFT Gain &  8.8  & 5.5  & 1.3\\ \hline
Wood ESP Gain &  {\bf9.5} &  {\bf4.7}   & {\bf 1.0} \\
Wood STFT Gain &  8.3 &  3.7  & 0.7\\
 \hline \vspace{0.1mm}
\end{tabular}
\caption{Maximum Reconstructed SNR gain for signals denoised using BPD with ESP and STFT frames.
}
\label{tab:snr}
\end{table}

Overall we find that the ESP and STFT frames have similar performance with regards to denoising.  However, the ESP frame based denoising produces gains which range from 0.3dB to 5.1dB higher than the STFT frame.  The ESP frame gain is better for the synthetic signal than for the experimental signals as the ESP frame vectors are more closely aligned to the signal model in the synthetic signal case.

\section{Parameter Estimation}
\label{sec:params}

In addition to denoising, another important application of ESP frames is as a parameter estimation tool.  Since the unregularized ESP frame coefficients are correlation based we expect the coefficients to have peaks when the frame vector is well matched with the signal in question.  Our approach will be to use the parameter values associated to these peaks as estimates for feature parameters in the underlying signal.  While this produces an unbiased estimation in the case of a single atom by \eqref{eq:cs}, in the case of signals with multiple components this estimate is not necessarily unbiased.
If the frame is constructed so that the signal is known to have a sparse frame representation then both of these issues can be addressed via \(L_1\)-regularization.  Importantly, if the signal  does {\em not} have a sparse frame representation then the \(L_1\)-regularization process can introduce significant bias into the parameter estimation process.

For this section's analysis we will focus on the identification of resonances in the signals presented in Section \ref{sec:denoising}.  We use the frames discussed in that section to estimate the frequency and time constants with both unregularized and sparse coefficients.  Since the resolution on the time constant axis is poor we will interpolate to get more precise estimates.  As the time constants are sampled on a logarithmic scale we use a {\em geometric} average weighted by the coefficient amplitudes
\begin{align}
\label{eq:tau}
\tilde{\tau} &= \left(\tau_{l-1}^{|c_{k,l-1,m}|}\tau_{l}^{|c_{k,l,m}|}\tau_{l+1}^{|c_{k,l+1,m}|}\right)^{\beta}\ \text{s.t.} \\
\beta &= (|c_{k,l-1,m}|+|c_{k,l,m}|+|c_{k,l+1,m}|)^{-1}. \nonumber
\end{align}

We compare the performance of our ESP frame based estimates to Prony's Method \cite{marple}, a least squares regression based approach for estimating decaying resonances.  The basic concept behind Prony's Method is that we generate a least squares approximation for the signal using a sum of exponentially decaying sinusoids and then use the frequency and time constant associated to the component in the correct frequency range as our estimate.  Prony's Method assumes the signal can be modeled by a sum of decaying exponentials which start at time zero.  When this assumption holds, and the order of the least squares regression matches the number of poles in the signal, Prony's Method is capable of producing extremely accurate estimates.  However, it is also known that noise and late starting signals can adversely affect Prony's Method.  To address these issues, in the case of noise we will be utilizing the SVD-based noise reduction techniques described in \cite[Section 11.9]{marple}.  For late starting signals we will utilize a time shift to ensure the exponential decay starts at time zero.   Unlike the ESP frame approach this requires us to know, or estimate, the number of poles and the start time of the exponential decay.

Section~\ref{sec:param-synth} presents examples estimating resonance parameters using synthetic time series, as well as a comparison between the parameter estimation performance of ESP frames and Prony's Method.  Section~\ref{sec:param-exp} presents a similar analysis using experimental time series and also discusses the use of weighted BPD to encode prior knowledge when generating sparse coefficients.

 \subsection{Synthetic Time Series}
 \label{sec:param-synth}

 For the synthetic time series presented in Section \ref{sec:denoising-synthetic} we know that \(\bh\) contains two resonances with frequencies and time constants of 5kHz and 3ms, and 13kHz and 0.8ms that both start at 0.5ms.  Using Prony's Method on a shifted version of the clean signal \(\bS^{-50}\bh\) with 4 poles we can recover the parameters for each resonance exactly.  However, if we apply Prony's Method directly to \(\bh\) {\em without} the shift we find estimates of 5.03kHz and 4.46ms, and 12.86kHz and 1.21ms.  For comparison if we use the unregularized ESP frame coefficient peaks near 5kHz and 13kHz and \eqref{eq:tau} we estimate the frequency and time constants of the two resonances to be 5kHz and 2.52ms and 13kHz and 0.63ms.  Here we have recovered the frequency components exactly and the error in the time constants is better than when Prony's Method is applied without a shift.  That being said, for all future Prony's Method estimates we will apply any shifts necessary to ensure optimum performance.

While the estimates above were taken from a clean signal, we are generally interested in parameter estimation in the presence of noise.  If we add noise at 10dB SNR, as described in \ref{sec:denoising}, and use Prony's Method (with shifting and 30 poles filtered to 4 using SVD) to estimate the resonance parameters we get 5.01kHz and 2.50ms, and 12.95kHz and 0.55ms.  For comparison if we compute the BPD regularized ESP frame coefficients as in Figure \ref{fig:rh} we obtain estimates of 5kHz and 2.48ms, and 13kHz and 0.95ms.  In this case the regularized BPD approach does a slightly better job of estimating the frequency parameters and is about as accurate as Prony's Method at estimating the time constants.

\subsubsection*{Noise Analysis}
\label{sec:param-noise-synth}

While the above examples indicate that ESP frames can be reasonably utilized as a parameter estimation tool, a further comparative analysis with Prony's Method is warranted.  Specifically we wish to compare the bias and variance of ESP frame and Prony's Method based parameter estimates in the presence of added noise.  To this end noise was added to the synthetic time series at levels ranging from \(-15\)dB SNR, resulting in predominantly noise, to 30dB SNR, resulting in predominantly signal.  At each noise level the frequency and time constant of both resonance peaks was estimated using unregularized ESP frame coefficients, sparse ESP frame coefficients, and Prony's Method.  For the unregularized ESP frame coefficients the parameter estimates were generated using the coefficient peaks near 5kHz and 13kHz and \eqref{eq:tau}.  The regularized coefficients were generated using \(\lambda = 0.1\lambda_{\max}\) and \(1000\) iterations and the same parameter estimation process as the unregularized case.  Finally for Prony's Method we use a shifted version of the noisy time series with 30 poles filtered to 4 via SVD.  Each of these estimates was generated for \(100\) noise realizations and the resulting estimation mean and standard deviation was calculated and are plotted in Figure \ref{fig:ode-param} with bias and standard deviation values at the 30dB level shown in Table \ref{tab:ode-param}.

\begin{figure}
\centering
\includegraphics[width=1.58in]{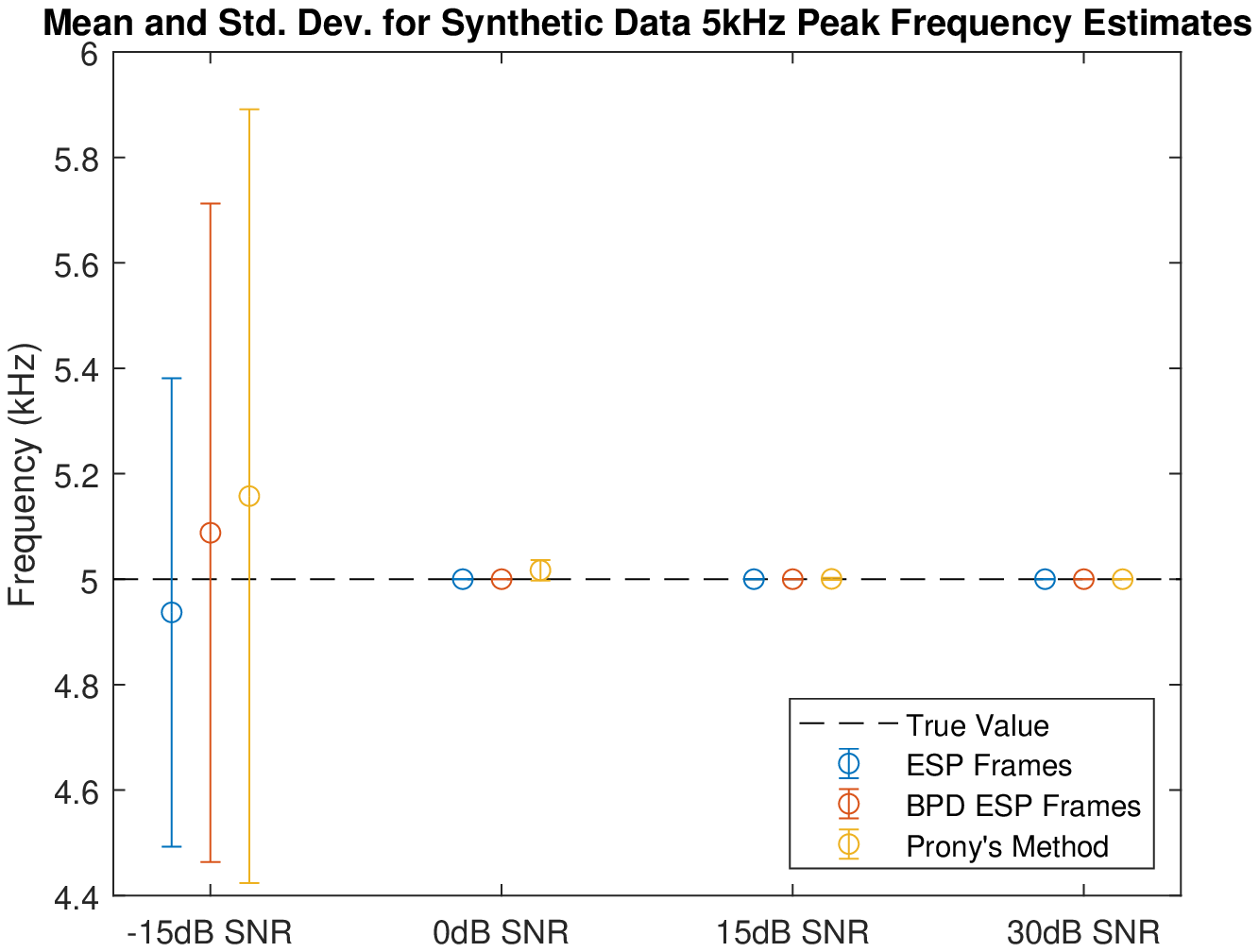}
\includegraphics[width=1.58in]{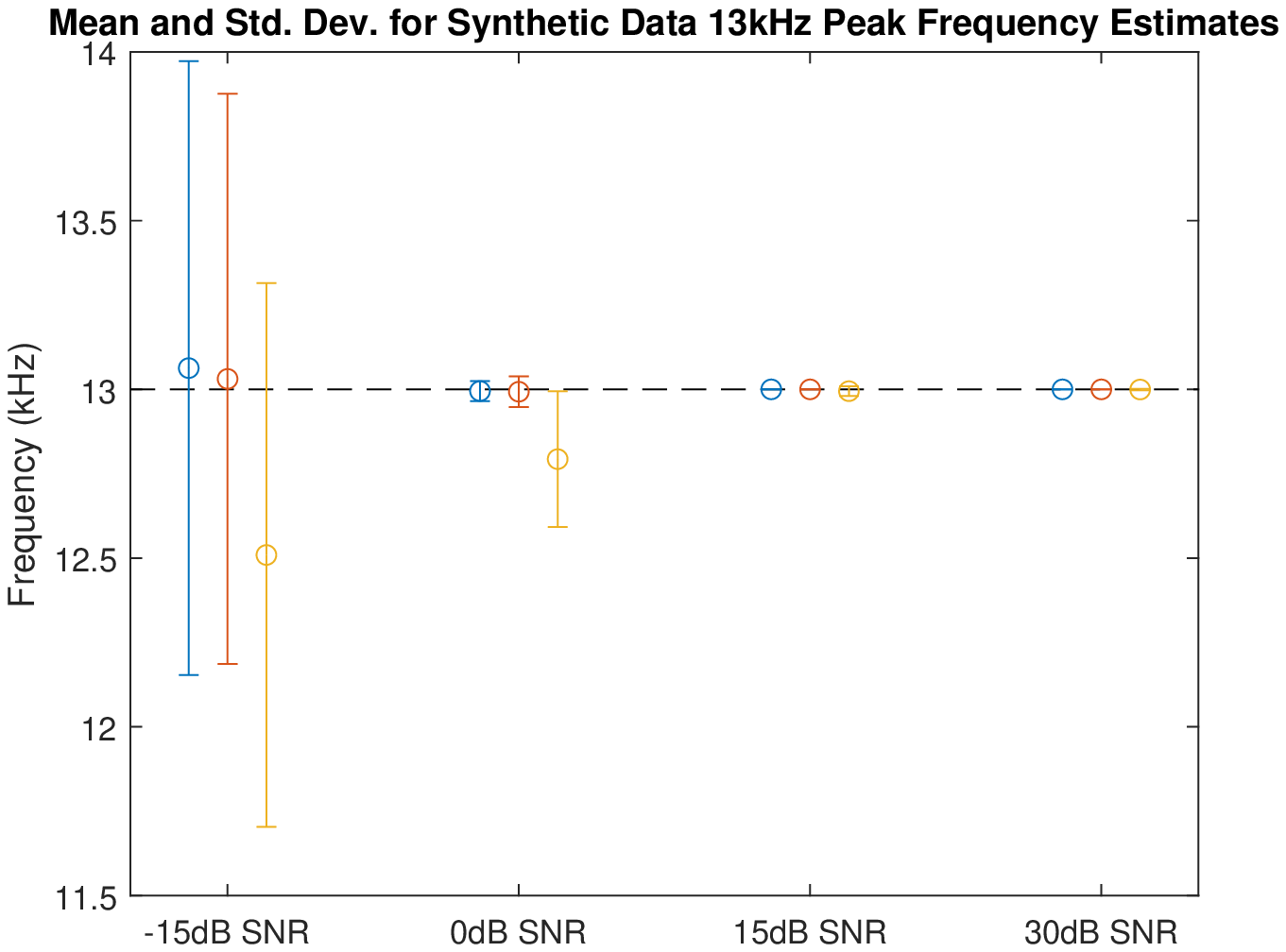}

\vspace{.1in}
\includegraphics[width=1.58in]{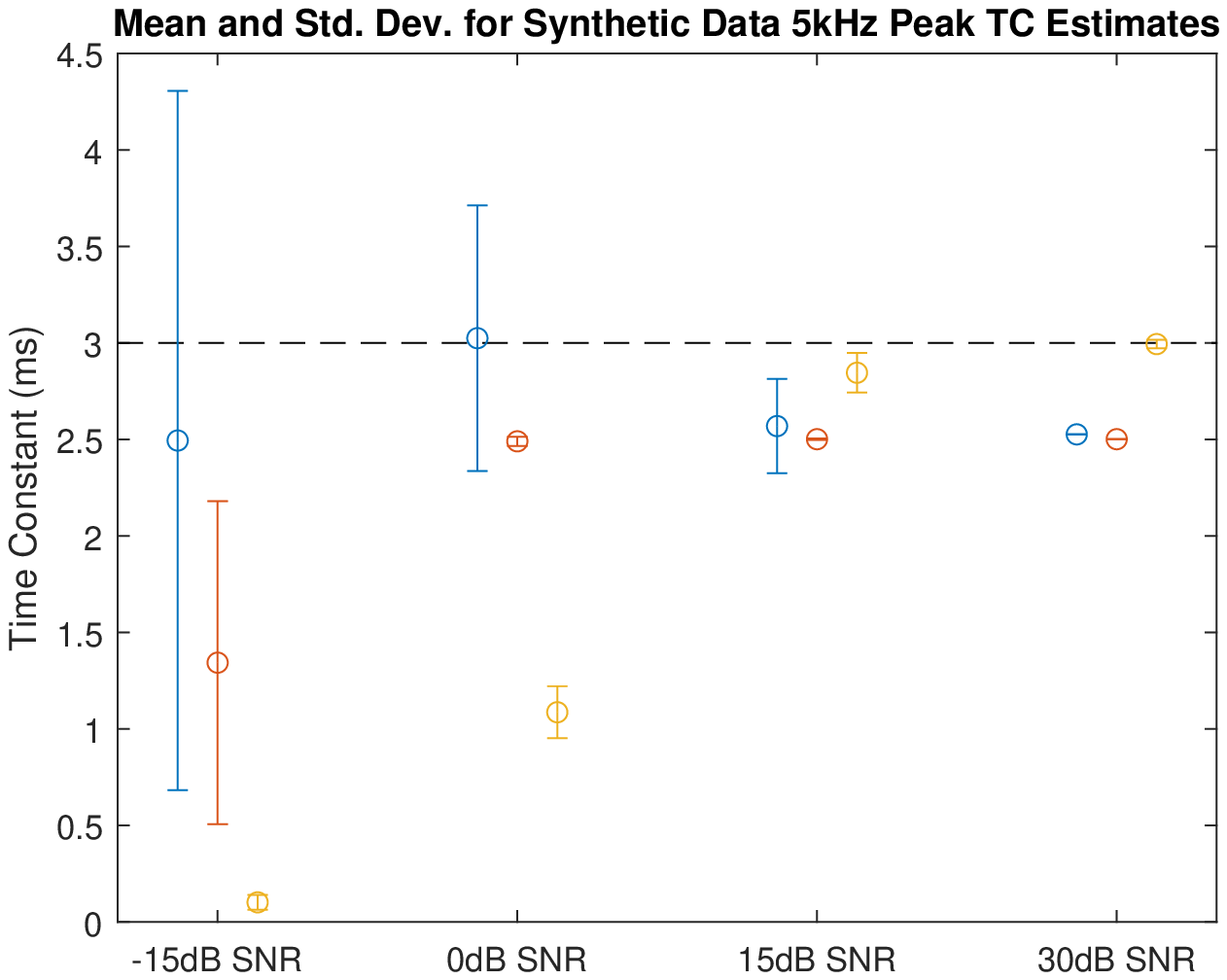}
\includegraphics[width=1.58in]{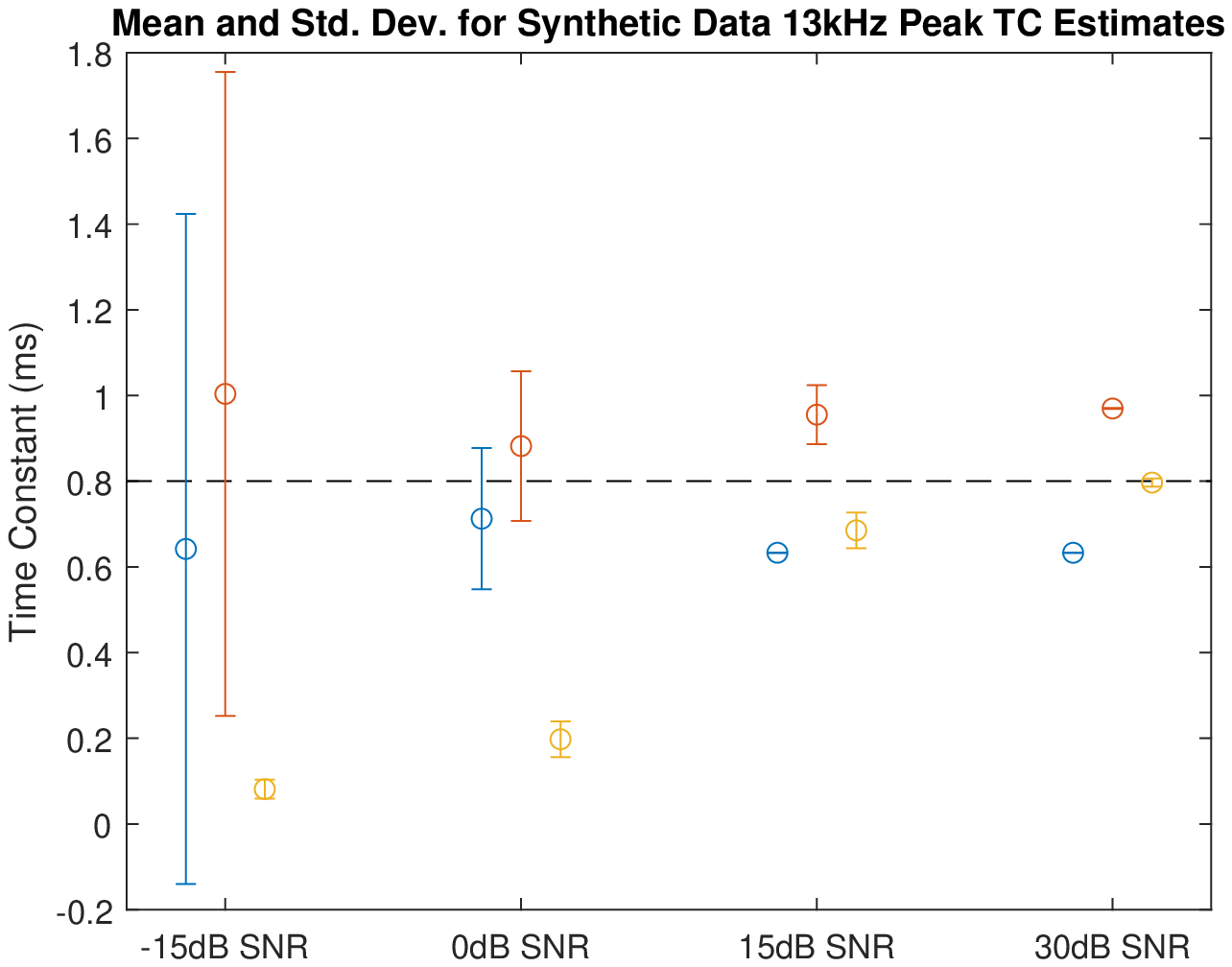}

\caption{Means and standard deviations for ESP frame based and Prony's Method based estimates of resonance peak frequency (top) and time constant (bottom) for the synthetic time series 5kHz peak (left) and 13kHz peak (right).  Mean is indicated by plotted point and standard deviation by the length of the whiskers.  True parameter values are indicated by the dashed line.  The mean and standard deviation were computed using \(100\) estimates from signals with added noise at the indicated SNR.}
\label{fig:ode-param}
\end{figure}

The frequency estimates have similar performance, for both bias and standard deviation, across all three estimation techniques and produce quality estimates at or above 0dB SNR.  The one exception is the Prony's Method 13kHz peak frequency estimate at 0dB SNR, which has a notably larger bias and standard deviation. There is comparatively more variability in the time constant estimates. None of the methods produces viable estimates at the -15dB SNR level.  At 0dB SNR the ESP frame based estimates are significantly better than the Prony's Method based estimate.  At 15dB SNR all three methods have similar performance while the Prony's Method estimates are significantly better at the 30dB SNR level, as can be seen in Table \ref{tab:ode-param}.

\begin{table}
\tiny
\[
\begin{array}{c | c c | c c}
~ & \text{5kHz Peak Freq.} & \text{13kHz Peak Freq.} & \text{5kHz Peak TC} & \text{13kHz Peak TC} \\ \hline
\text{ESP Bias} &  {\bf  0.0 }&            {\bf 0.0 } &    -0.47385     & -0.16711  \\
\text{ESP Std. Dev.} &  {\bf 0.0} &            {\bf 0.0 } &  0.00018629  &  9.1013\e{-5}  \\ \hline
\text{BPD ESP Bias} & {\bf 0.0 } &           {\bf 0.0} &     -0.49882 &      0.16968 \\
\text{BPD ESP Std. Dev.} &  {\bf  0.0 }  &         {\bf  0.0 } &   0.00070589 &    0.0010405\\ \hline
\text{Prony Bias} &     -1.9896\e{-5} &   -0.00046569  &   {\bf -0.006141 }  &    {\bf-0.003243} \\
\text{Prony Std. Dev.} &      0.00034626 &     0.0022198  &    {\bf  0.02189 }&  {\bf  0.0096209}
\end{array}
\]
\caption{Bias and standard deviations for ESP frame and Prony's Method parameter estimates at 30dB SNR.  Frequency values are in kHz and time constant (TC) values are in ms. }
\label{tab:ode-param}
\end{table}

\subsection{Experimental Time Series}
\label{sec:param-exp}

While we know the true values of the resonant frequencies and time constants for the synthetic data, and were able to precisely control the amount of added noise, we do not have this luxury for the experimental time series.  Instead we will use a Prony's Method based estimate created using the original experimental time series as the ``true'' values for the primary resonance peak time constant for the steel and wood cylinder data.  Using Prony's Method with 16 poles filtered to 8 using SVD we estimate the steel cylinder resonance peak at 4.80kHz has a time constant of 166.8ms and the wood cylinder peak at 4.32kHz has a time constant of 5.22ms.  As a point of comparison, if we use the unregularized ESP frame coefficients, Figure \ref{fig:tap}, to estimate the frequency and time constant we get 4.80kHz and 177.87ms for the steel cylinder data and 4.32kHz and 5.66ms for the wood cylinder data.  There is good agreement between the frequency estimates and the time constant estimates are reasonably close. This is a positive indication that we will be able to utilize the unregularized coefficients for parameter estimation.  On the other hand, if we apply the same estimation process to the BPD regularized coefficients, Figure \ref{fig:rtap}, we get an estimate of 4.80kHz and 54.95ms for the steel cylinder and 4.33kHz and 5.72ms for the wood cylinder.  While the wood cylinder estimate is similar to the unregularized estimate, the time constant for the steel cylinder estimate is significantly different.

\subsubsection*{Weighted Basis Pursuit Denoising}
\label{sec:params-exp-wbpd}

One potential method for dealing with the bias introduced by regularization, and to account for the fact that the regularization does not produce coefficients consistent with our understanding of the physics of the experiment, is to utilize a vectorized \(\blambda\) in Algorithm \ref{alg:bpd}.  We know from the physical setup of the experiment that, outside of noise components, the signal should have a zero time shift.  We can encode this prior knowledge into the BPD process by nonconstant \(\blambda\).  Specifically define \(\bw\) such that
\[
w_{l,k,m} = \begin{cases} 0.1 & m < 10 \\ 0.2 & m \geq 10. \end{cases}
\]
This vector places extra weight on coefficients with a time shift of greater than 0.156ms.  We allow time shifts of up to \(m=10\) because we would like to account for the signal to start at different phases.  We then define \(\blambda = \lambda_{\max} \bw\) and apply Algorithm \ref{alg:bpd}.  Using \(1000\) iterations produces the set of coefficients shown in Figure \ref{fig:wrtap}.

\begin{figure}
\centering
\includegraphics[width=1.58in]{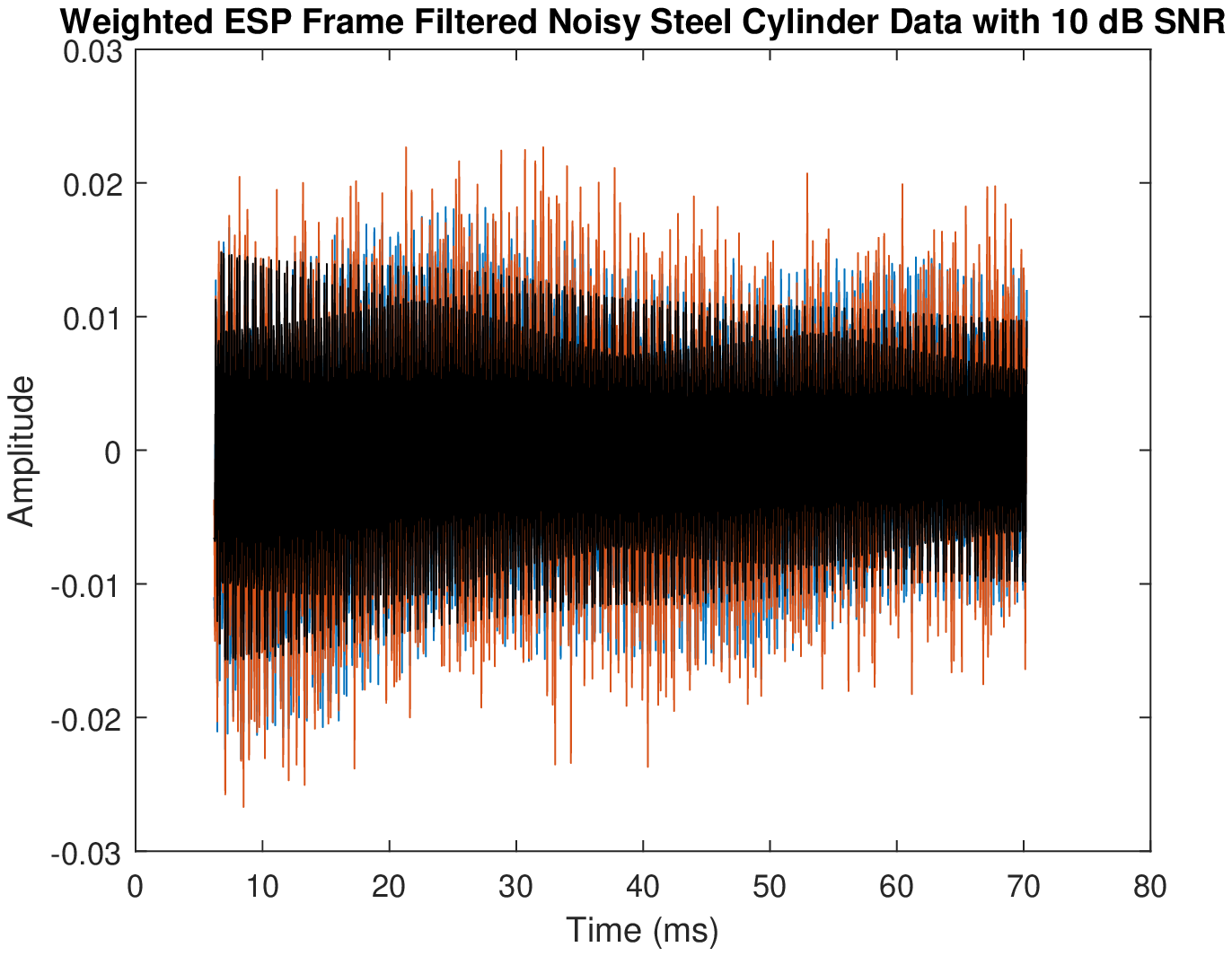}
\includegraphics[width=1.58in]{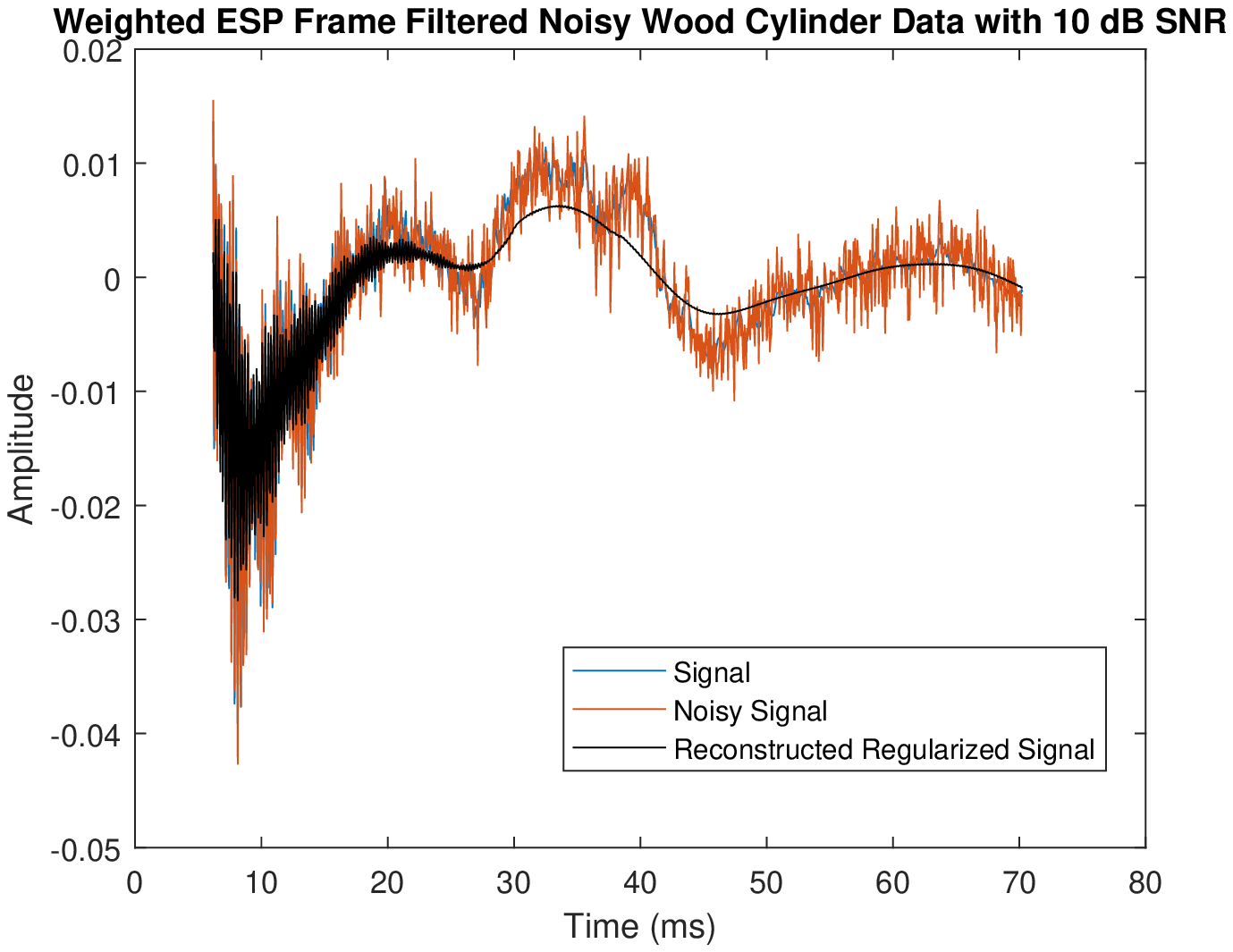}

\vspace{.1in}
\includegraphics[width=3.2in]{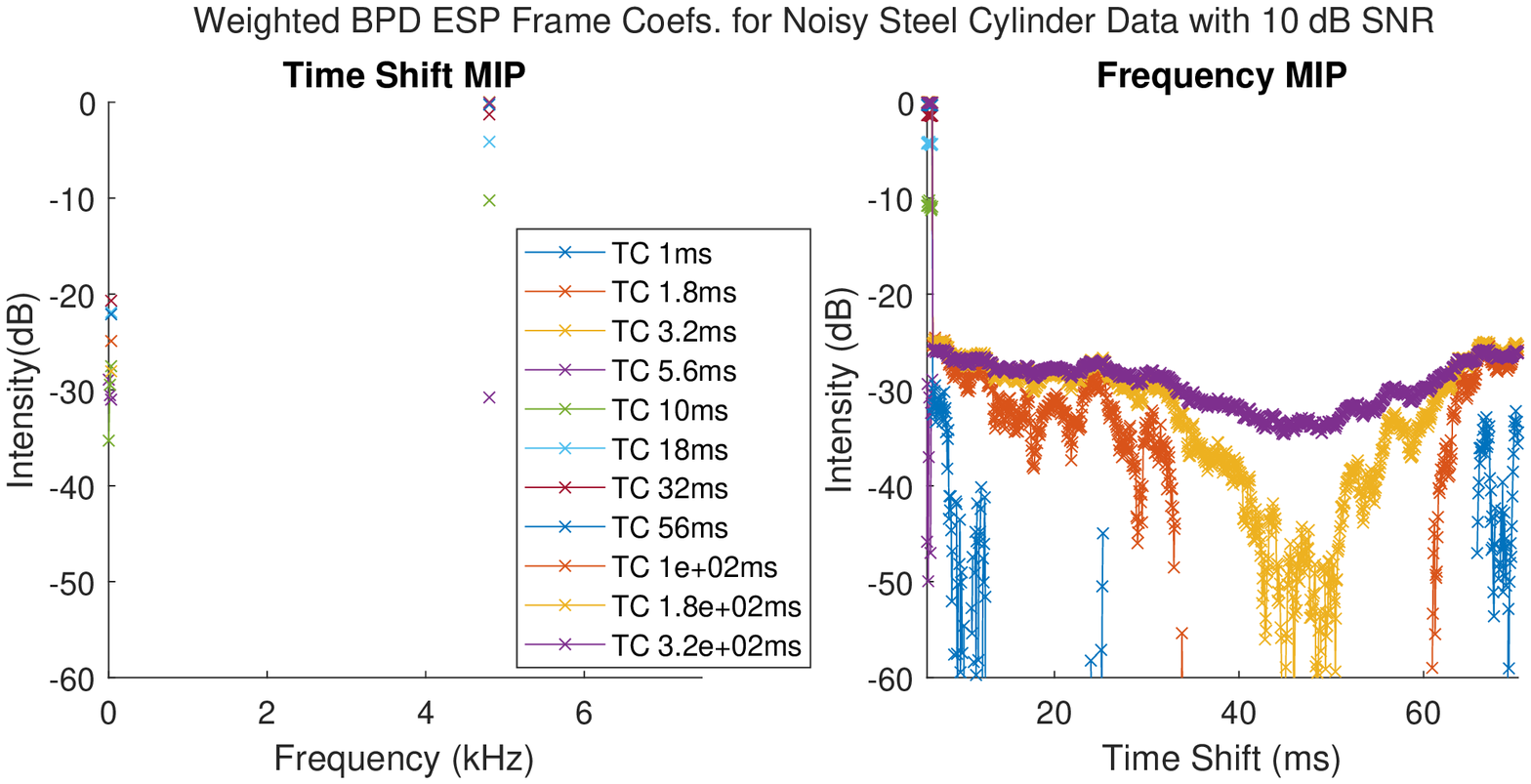}

\vspace{.1in}
\includegraphics[width=3.2in]{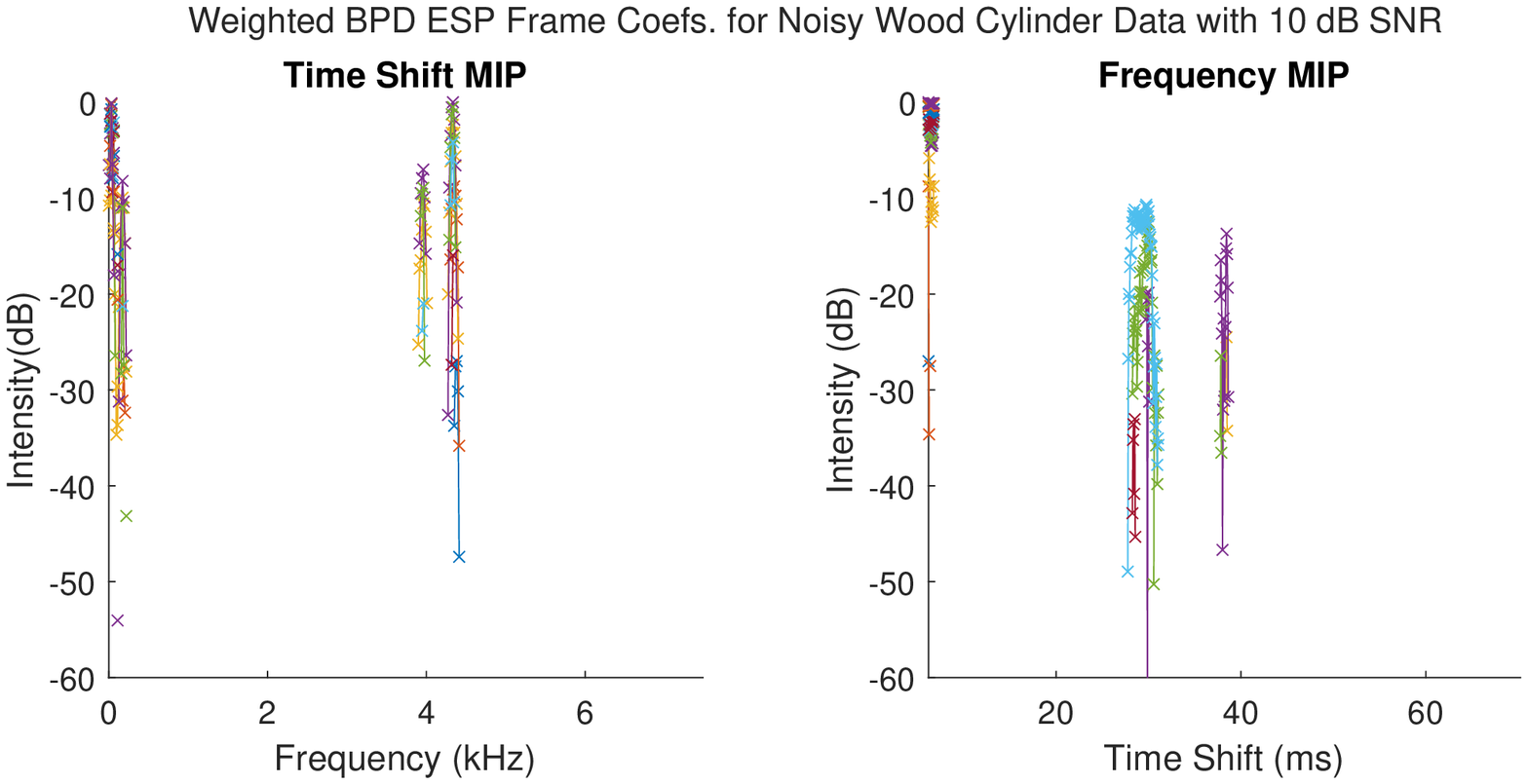}

\caption{Plots of \(\bh\), \(\bh_N\) with SNR 10dB and the reconstructed regularized signal (top left and top right), frame coefficient time shift MIP separated by time constant (middle left and bottom left) and frequency MIP separated by time constant (middle right and bottom right) for steel cylinder (top left and middle) and wood cylinder (top left and bottom) tap data after \(1000\) iterations with \(\blambda = \lambda_{\max}\bw\).  Intensities are shown on a dB scale relative to the maximum frame coefficient amplitude. The coefficients for the steel cylinder data have a relative reconstruction error of 24.0\% (12.4dB reconstructed SNR).  The coefficients for the wood cylinder data have a relative reconstruction error of 34.7\% (9.2dB reconstructed SNR). The steel cylinder coefficients have a resonance peak at 4.80kHz with an estimated time constant of 177.63ms, and the wood cylinder coefficients have a resonance peak at 4.33kHz with an estimated time constant of 5.84ms.}
\label{fig:wrtap}
\end{figure}

It is clear that the weighting is successful at concentrating the frame coefficients at lower time shifts.  For the wood cylinder most of the power is in the first millisecond, however there are some spikes at 30ms and 40ms which correspond to similar late time energy in Figure \ref{fig:tapstft}.  The reconstruction accuracy of the weighted basis pursuit coefficients is worse than for the unweighted basis pursuit coefficients, see Table \ref{tab:wbp-comp}, due to the increase in the average value of \(\blambda\).   More importantly the weighting improves the time constant parameter estimation for the steel cylinder to 177.63ms, while the estimate for the wood cylinder resonance peak time constant is consistent across all estimation techniques.

\begin{table}
\centering
\begin{tabular}{c | c c c c}
Dataset & Recon. Error (\%) & Peak Freq. (kHz) & Peak TC (ms) \\ \hline
Steel & 0.0  & 4.80 & 177.9 \\
BP Steel & 17.6 & 4.80 & {\bf 54.95} \\
WBP Steel & 24.0 & 4.80 & 177.63 \\ \hline
Wood & 0.0 & 4.32 & 5.66 \\
BP Wood & 27.6 & 4.33 & 5.72 \\
WBP Wood & 34.7 & 4.33 & 5.84
\end{tabular}
\caption{Reconstruction error and estimated resonance peak frequency and time constant for unregularized, BP regularized and weighted BP regularized ESP frame coefficients for steel and wood cylinder time series data.  The Prony's Method based estimates for the resonance peaks are 4.80kHz and 166.8ms for the steel cylinder and 4.32kHz and 5.22ms for the wood cylinder.}
\label{tab:wbp-comp}
\end{table}

\subsubsection*{Noise Analysis}
\label{sec:param-noise-exp}

As part of the broader investigation of ESP frames as a parameter estimation tool we performed the same analysis presented in Section \ref{sec:param-noise-synth} on the experimental time series.  Specifically we added noise to the experimental time series at levels ranging from \(-15\)dB SNR to 30dB SNR.  At each noise level the frequency and time constant of each resonance peak was estimated using unregularized ESP frame coefficients, {\em weighted} BPD regularized ESP frame coefficients, and Prony's Method.  For the unregularized ESP frame coefficients the parameters were estimated using \eqref{eq:tau}.  The regularized coefficients were generated using \(\blambda = \lambda_{\max}\bw\) and \(1000\) iterations as in the previous section.  Finally for Prony's Method we use 16 poles filtered to 8 via SVD.  Each of these estimates was generated for \(100\) noise realizations and the resulting estimate means and standard deviations were calculated and plotted in Figure \ref{fig:exp-param} with the bias and standard deviations at 30dB SNR shown in Table \ref{tab:exp-param}.  Recall the ``true'' parameter values were generated using Prony's Method as described at the start of Section \ref{sec:param-exp}.

\begin{figure}
\centering
\includegraphics[width=1.58in]{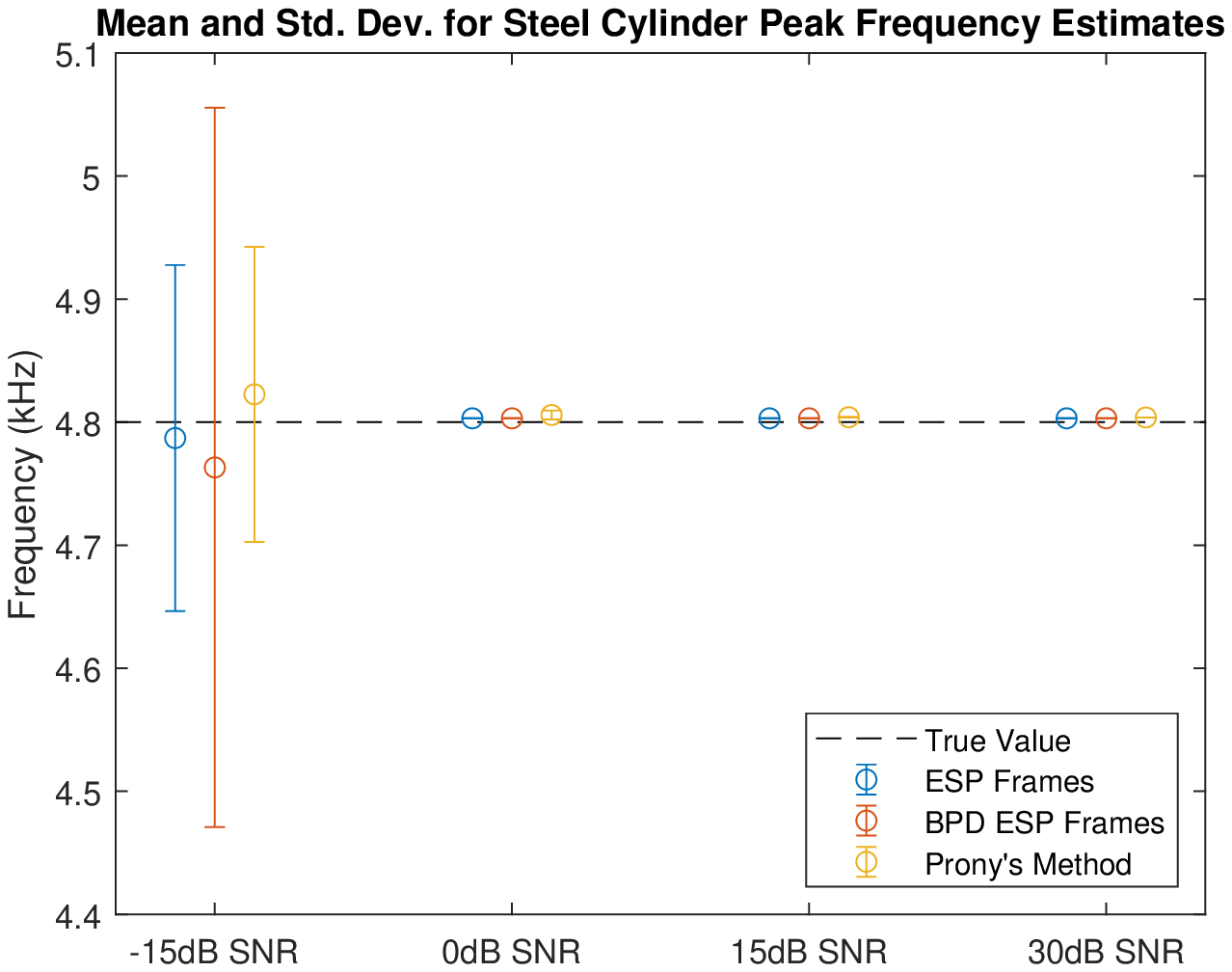}
\includegraphics[width=1.58in]{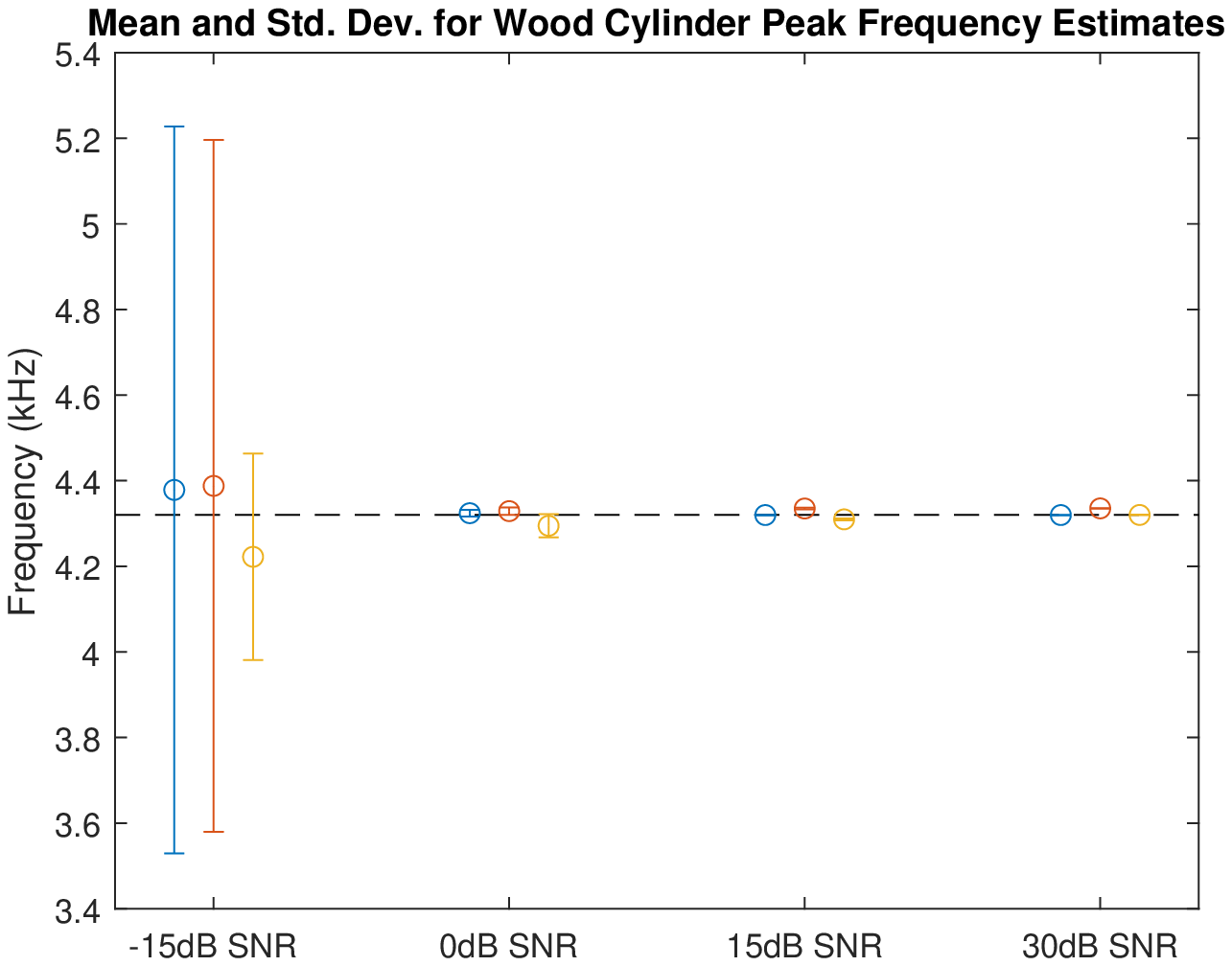}

\vspace{.1in}
\includegraphics[width=1.58in]{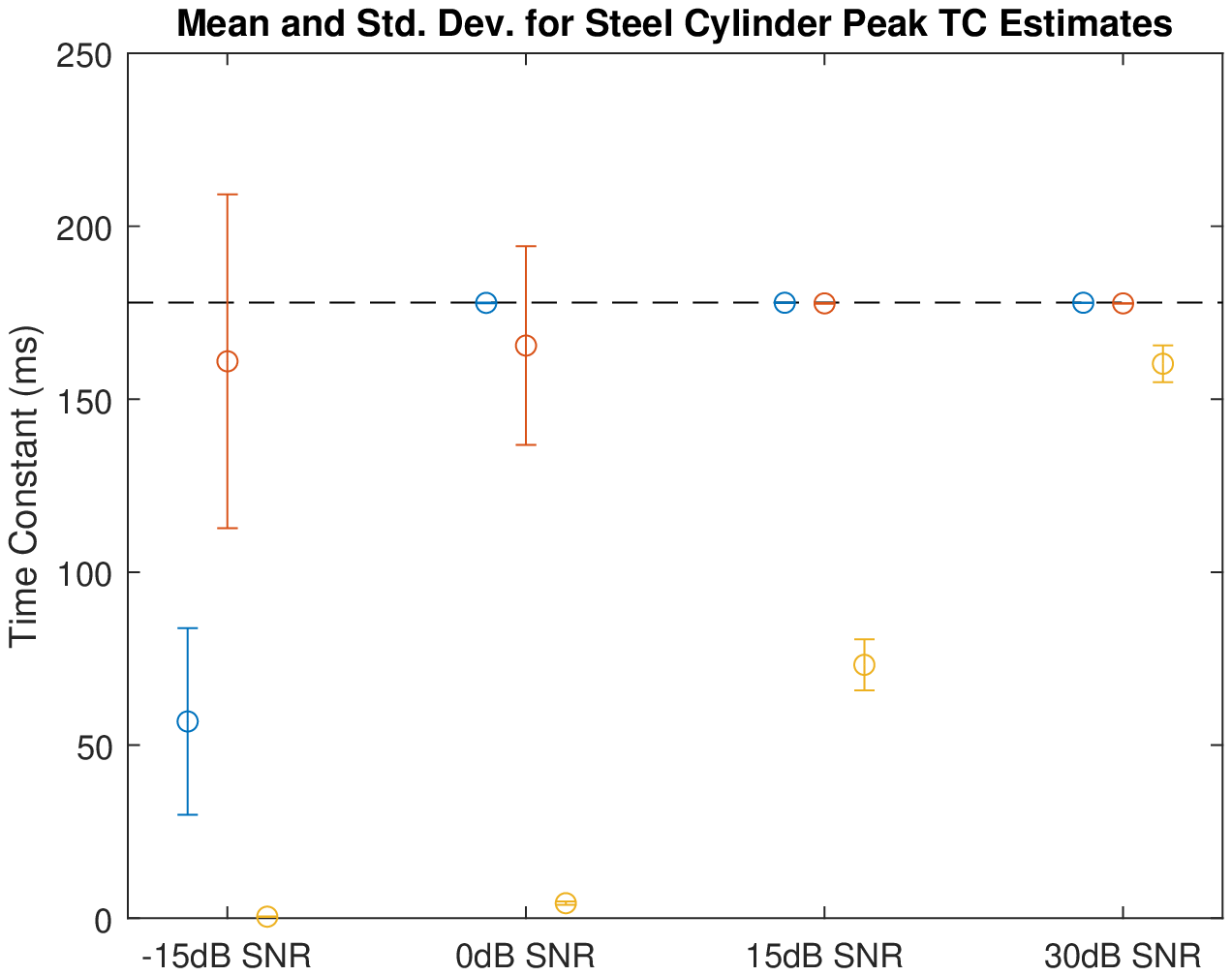}
\includegraphics[width=1.58in]{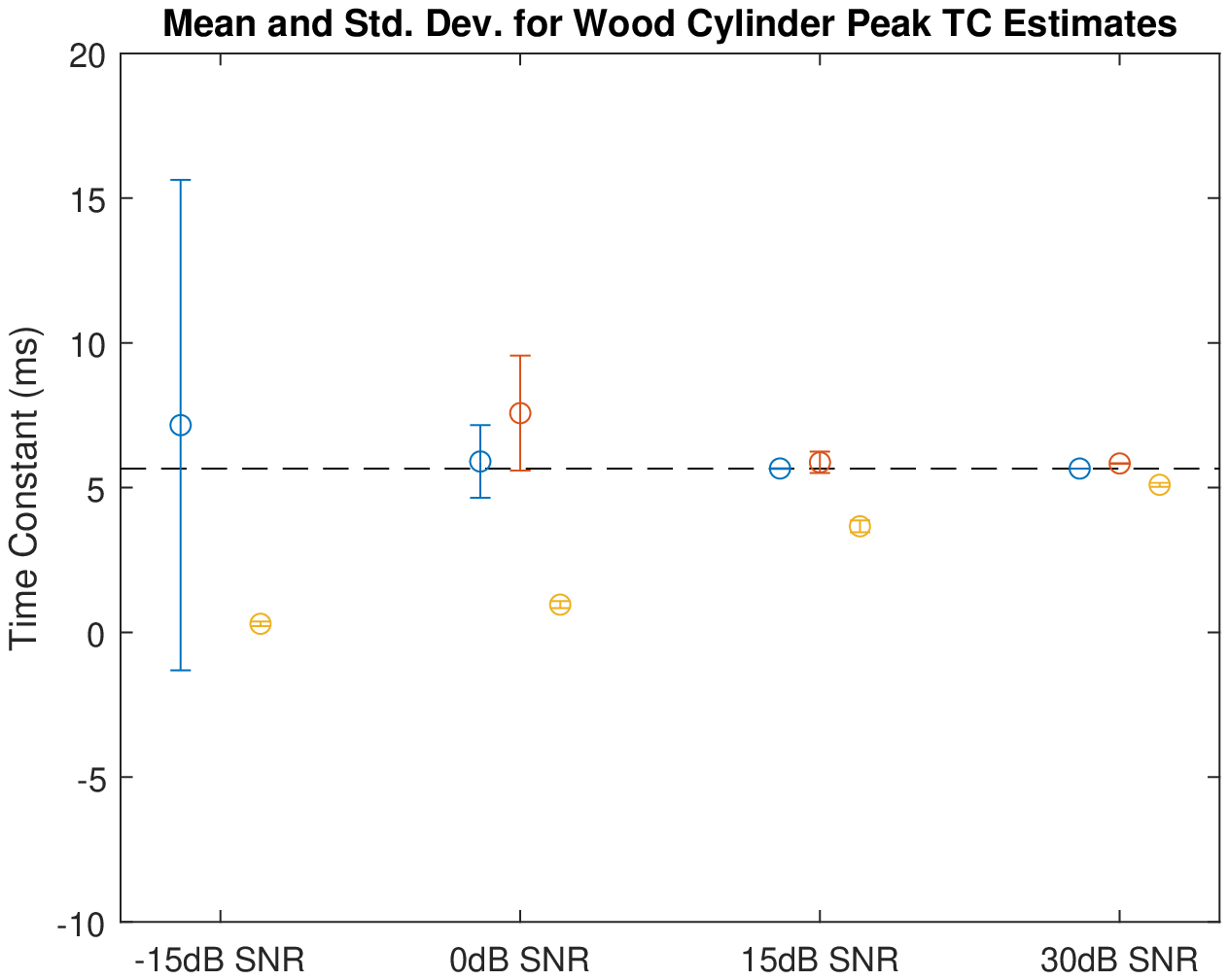}

\caption{Means and standard deviations for ESP frame based and Prony's Method based estimates of resonance peak frequency (top) and time constant (bottom) for the steel cylinder (left) and wood cylinder (right) experimental time series.  Mean is indicated by plotted point and standard deviation by the length of the whiskers.  The best measured value is indicated by the dashed line.  The \(-15\)dB SNR BPD ESP estimate mean for the wood cylinder data is outside the scale of the plot. Mean and standard deviation were computed using \(100\) estimates from signals with added noise at the indicated SNR.}
\label{fig:exp-param}
\end{figure}

As in Section \ref{sec:param-synth} all three methods have similar behavior with regards to estimation of the frequency parameter, producing quality estimates at or above the 0dB SNR level.
None of the methods produces a viable time constant estimate at -15dB SNR.  At 0dB SNR the ESP frame based estimates are reasonably close to the true value while the Prony's Method estimates are near zero.  At the 15dB and 30dB SNR levels the ESP frame based estimates have a significantly smaller bias than Prony's Method.  Overall we find that ESP frames can be used to estimate resonance peak parameters and are competitive with Prony's Method, particularly in the presence of noise.

\begin{table}
\tiny
\[
\begin{array}{c | c c | c c}
~ & \text{Wood Peak Freq.} & \text{Steel Peak Freq.} & \text{Wood Peak TC} & \text{Steel Peak TC} \\ \hline
\text{ESP Bias} &     -0.00031189  &   {\bf  0.0031189}  & {\bf  -0.00039911 }&  {\bf  -0.026285 }\\
\text{ESP Std. Dev.} &   0.0 &           {\bf  0.0 }  &   {\bf 0.0016578 }  & {\bf  0.004194 } \\ \hline
\text{BPD ESP Bias} &    0.015283  &    0.0031189  &       0.1756 &    -0.21042 \\
\text{BPD ESP Std. Dev.} &   0.0   &          0.0 &     0.0068052 &    0.020472 \\ \hline
\text{Prony Bias} &   {\bf -6.3537\e{-5} }  &  0.0038046  &     -0.56135  &    -17.662 \\
\text{Prony Std. Dev.} &  {\bf  0.00052306 } &   3.9251\e{-5} &      0.068351  &     5.3159
\end{array}
\]
\caption{Bias and standard deviations for ESP frame and Prony's Method parameter estimates at 30dB SNR.  Frequency values have are in kHz and time constant (TC) values are in ms.}
\label{tab:exp-param}
\end{table}

\section{Discussion}
\label{sec:conclusions}

This paper presented a method of constructing Parseval frames from any collection of complex envelopes. The resulting ESP frames can represent a wide variety of signal types as specified by their physical morphology. Since the ESP frame retains its Parseval property it is compatible with large scale and iterative optimization algorithms such as SALSA and ADMM and sparse sets of ESP frame coefficients can be generated using traditional convex optimization. This work presented examples of ESP frame generation, as well as \(L_1\)-regularized coefficient generation, for both synthetic and experimentally measured signals. The use of sparse coefficients for both denoising and parameter estimation was also demonstrated.

When seeking sparse sets of ESP frame coefficients we generally expect the signal will not exactly equal a small linear combination of frame vectors.  This can be due to the presence of noise or because of poor resolution in the envelope parameter.  While noise can be mitigated using any number of techniques, including BPD, the fact that the dimension of the ESP frame is given by \(N^2 L\), where \(L\) is the number of envelopes, means that achieving a very fine resolution along the envelope axis can be computationally infeasible.  In either case, when the signal does not exactly equal a small linear combination of frame vectors we expect that the optimal set of BP coefficients may not be particularly sparse and that high levels of reconstruction error may be needed to produce sparse coefficients using BPD.  For many applications, though, it is not necessary to achieve true sparsity and instead we simply desire the ESP frame coefficients to produce discrete peaks.

With regards to denoising, we found that ESP frames are competitive with the STFT as a noise reduction tool, producing larger SNR gains over a range of noise levels.  While in terms of percentage of zero coefficients the ESP Frame representations were sparser than the STFT representations, because the ESP frame is so large the STFT frame regularization ends up producing significantly fewer nonzero coefficients.  Additionally the ESP frame approach also takes longer to converge and is more computationally intensive.  In the ideal scenario we expect that ESP frame based denoising will outperform STFT based denoising since the ESP frame can be used to encode a desired signal model while the STFT is signal agnostic.

We also found that ESP frames are competitive with Prony's Method when applied to resonance parameter estimation.  The unregularized ESP frame coefficients perform about as well as the regularized ESP frame based estimates across all test cases while being less computationally intensive.   The ESP frame approaches produced viable estimates of the time constant parameter at a lower SNR than Prony's Method.  At very high SNR Prony's Method produced better estimates in the synthetic time series case, while the ESP frame estimates were better for the experimental time series.  This is consistent with the fact that Prony's Method is very accurate when its underlying signal model is a good match for the time series.   It is not thought that ESP frames will outperform Prony's Method in terms of accuracy in optimal conditions.  Instead however, the intention is to utilize ESP frames on signals where the number of poles or the start of the resonance component is not known {\em a priori}.

 There are a number of possible future applications for ESP frames, ranging from Multi-Component Analysis and filtering to generating feature sets for use in signal classification.  Another potential avenue of investigation is to try and allow the envelope parameter to vary as part of the \(L_1\)-regularization procedure.  This could enable the ESP frame envelopes to be more data informed and may further enhance sparsity.  Overall Enveloped Sinusoid Parseval frames are a flexible signal analysis tool, particularly when combined with convex optimization, and offer a wide range of applications. The ESP frame can be easily tuned to represent a wide variety of signals simply by providing their relevant envelopes. It is compatible with modern convex analysis techniques, and is efficient and practical to deploy.

\bibliographystyle{IEEEtran}
\bibliography{references}

\begin{thebibliography}{10}
\providecommand{\url}[1]{#1}
\csname url@samestyle\endcsname
\providecommand{\newblock}{\relax}
\providecommand{\bibinfo}[2]{#2}
\providecommand{\BIBentrySTDinterwordspacing}{\spaceskip=0pt\relax}
\providecommand{\BIBentryALTinterwordstretchfactor}{4}
\providecommand{\BIBentryALTinterwordspacing}{\spaceskip=\fontdimen2\font plus
\BIBentryALTinterwordstretchfactor\fontdimen3\font minus
  \fontdimen4\font\relax}
\providecommand{\BIBforeignlanguage}[2]{{%
\expandafter\ifx\csname l@#1\endcsname\relax
\typeout{** WARNING: IEEEtran.bst: No hyphenation pattern has been}%
\typeout{** loaded for the language `#1'. Using the pattern for}%
\typeout{** the default language instead.}%
\else
\language=\csname l@#1\endcsname
\fi
#2}}
\providecommand{\BIBdecl}{\relax}
\BIBdecl

\bibitem{marple}
S.~L.~J. Marple, \emph{Digital Spectral Analysis}.\hskip 1em plus 0.5em minus
  0.4em\relax Mineola, New York: Dover Publications, Inc., 2019.

\bibitem{tqwt}
I.~W. Selesnick, ``Wavelet transform with tunable q-factor,'' \emph{IEEE
  transactions on signal processing}, vol.~59, no.~8, pp. 3560--3575, 2011.

\bibitem{emd}
P.~Flandrin, G.~Rilling, and P.~Goncalves, ``Empirical mode decomposition as a
  filter bank,'' \emph{IEEE Signal Processing Letters}, vol.~11, no.~2, pp.
  112--114, 2004.

\bibitem{gearbox-decomposition}
G.~Cai, I.~W. Selesnick, S.~Wang, W.~Dai, and Z.~Zhu, ``Sparsity-enhanced
  signal decomposition via generalized minimax-concave penalty for gearbox
  fault diagnosis,'' \emph{Journal of Sound and Vibration}, vol. 432, pp.
  213--234, 2018.

\bibitem{speckle-reduction}
H.~Guo, J.~E. Odegard, M.~Lang, R.~A. Gopinath, I.~W. Selesnick, and C.~S.
  Burrus, ``Wavelet based speckle reduction with application to sar based
  atd/r,'' in \emph{Proceedings of 1st international conference on image
  processing}, vol.~1.\hskip 1em plus 0.5em minus 0.4em\relax IEEE, 1994, pp.
  75--79.

\bibitem{synthVsAnalysis}
I.~W. Selesnick and M.~A. Figueiredo, ``Signal restoration with overcomplete
  wavelet transforms: Comparison of analysis and synthesis priors,'' in
  \emph{Wavelets XIII}, vol. 7446.\hskip 1em plus 0.5em minus 0.4em\relax
  International Society for Optics and Photonics, 2009, p. 74460D.

\bibitem{spindle-detection}
A.~Parekh, I.~W. Selesnick, D.~M. Rapoport, and I.~Ayappa, ``Detection of
  k-complexes and sleep spindles (detoks) using sparse optimization,''
  \emph{Journal of neuroscience methods}, vol. 251, pp. 37--46, 2015.

\bibitem{sparse-DOA}
P.~Gerstoft, A.~Xenaki, and C.~F. Mecklenbr{\"a}uker, ``Multiple and single
  snapshot compressive beamforming,'' \emph{The Journal of the Acoustical
  Society of America}, vol. 138, no.~4, pp. 2003--2014, 2015.

\bibitem{shearlet}
G.~Easley, D.~Labate, and W.-Q. Lim, ``Sparse directional image representations
  using the discrete shearlet transform,'' \emph{Applied and Computational
  Harmonic Analysis}, vol.~25, no.~1, pp. 25--46, 2008.

\bibitem{curvelet}
J.~Ma and G.~Plonka, ``The curvelet transform,'' \emph{IEEE signal processing
  magazine}, vol.~27, no.~2, pp. 118--133, 2010.

\bibitem{dictlearn-bach}
J.~Mairal, F.~Bach, J.~Ponce, and G.~Sapiro, ``Online dictionary learning for
  sparse coding,'' in \emph{Proceedings of the 26th annual international
  conference on machine learning}, 2009, pp. 689--696.

\bibitem{dictlearn-OG}
B.~A. Olshausen and D.~J. Field, ``Emergence of simple-cell receptive field
  properties by learning a sparse code for natural images,'' \emph{Nature},
  vol. 381, no. 6583, pp. 607--609, 1996.

\bibitem{mca-old}
M.~Elad, J.-L. Starck, P.~Querre, and D.~L. Donoho, ``Simultaneous cartoon and
  texture image inpainting using morphological component analysis (mca),''
  \emph{Applied and computational harmonic analysis}, vol.~19, no.~3, pp.
  340--358, 2005.

\bibitem{mca-eeg}
B.~Singh and H.~Wagatsuma, ``A removal of eye movement and blink artifacts from
  eeg data using morphological component analysis,'' \emph{Computational and
  mathematical methods in medicine}, vol. 2017, 2017.

\bibitem{ista}
I.~Daubechies, M.~Defrise, and C.~De~Mol, ``An iterative thresholding algorithm
  for linear inverse problems with a sparsity constraint,''
  \emph{Communications on Pure and Applied Mathematics: A Journal Issued by the
  Courant Institute of Mathematical Sciences}, vol.~57, no.~11, pp. 1413--1457,
  2004.

\bibitem{sparsity-old}
D.~L. Donoho and M.~Elad, ``Optimally sparse representation in general
  (nonorthogonal) dictionaries via \(\ell_1\) minimization,'' \emph{Proceedings
  of the National Academy of Sciences}, vol. 100, no.~5, pp. 2197--2202, 2003.

\bibitem{prony-physics}
N.~Karpushkin, F.~Guber, and A.~Ivashkin, ``Application of the prony least
  squares method for fitting signal waveforms measured by sampling adc,''
  \emph{AIP Conference Proceedings}, vol. 2163, no. 030006, 2019.

\bibitem{framesforundergraduates}
D.~Han, K.~Kornelson, D.~Larson, and E.~Weber, \emph{Frames for
  Undergraduates}.\hskip 1em plus 0.5em minus 0.4em\relax Providence, Rhode
  Island: American Mathemnatical Society, 2007.

\bibitem{salsa}
M.~V. Afonso, J.~M. Bioucas-Dias, and M.~A.~T. Figueiredo, ``Fast image
  recovery using variable splitting and constrained optimization,'' \emph{IEEE
  Transactions on Image Processing}, vol.~19, no.~9, pp. 2345--2356, 2010.

\bibitem{salsa2}
\BIBentryALTinterwordspacing
I.~Selesnick, ``{L1}-norm penalized least squares with salsa,'' 2014,
  connexions. [Online]. Available: \url{http://cnx.org/content/m48933/}
\BIBentrySTDinterwordspacing

\bibitem{candes}
E.~J. Cand\`es, M.~B. Wakin, and S.~P. Boyd, ``Enancing sparsity by reweighted
  {\(\ell_1\)} minimization,'' \emph{Journal of Fourier Analysis Applications},
  vol.~14, pp. 877--905, 2008.

\bibitem{fastadmm}
T.~Goldstein, B.~O’Donoghue, S.~Setzer, and R.~Baraniuk, ``Fast alternating
  direction optimization methods,'' \emph{SIAM Journal of Imaging Sciences},
  vol.~7, no.~3, pp. 1588--1623, 2014.

\bibitem{compressive}
A.~Xenaki and Y.~Pailhas, ``Compressive synthetic aperture sonar imaging with
  distributed optimization,'' \emph{Journal of the Acoustical Society of
  America}, vol. 146, no.~3, pp. 1839--1850, 2019.

\bibitem{grossmann1984decomposition}
A.~Grossman and J.~Morlet, ``Decomposition of hardy functions into square
  integrable wavelets of constant shape,'' \emph{SIAM journal on mathematical
  analysis}, vol.~15, no.~4, pp. 723--736, 1984.

\bibitem{oppenheim}
A.~V. Oppenheim and R.~W. Schafer, \emph{Digital Signal Processing}.\hskip 1em
  plus 0.5em minus 0.4em\relax Englewood Cliffs, New Jersey: Prentice-Hall,
  Inc., 1989.

\end{thebibliography}

\end{document}